\setlist{leftmargin=7mm}
\definecolor{dg}{RGB}{54, 156, 90}
\definecolor{lightgreen}{rgb}{0.85,1,0.85}
\definecolor{lightred}{rgb}{1,0.85,0.85}
\newcommand\hl[1]{\tcbhighmath{#1}}
\newtcbox{\highlight}{on line,
    arc=0pt,
    colframe=yellow!50, 
    colback=yellow!50, 
    boxrule=0.0pt, 
    boxsep=0pt, 
    left=1pt, right=1pt, top=2pt, bottom=2pt 
}
\newcommand\restr[2]{{
  \left.\kern-\nulldelimiterspace 
  #1 
  \littletaller 
  \right|_{#2} 
  }}
\newcommand{\littletaller}{\mathchoice{\vphantom{\big|}}{}{}{}}
\newcommand{\toolName}{\textsc{just-tri-it}\xspace}
\newtheorem{assumption}{Assumption}
\newcommand{\cfg}{\Sigma}
\newcommand{\judg}[2]{\langle \cfg,\, #1 \rangle \Rightarrow #2}
\newcommand{\U}{\ensuremath{\mathcal{U}}}  
\newcommand{\A}{\ensuremath{\mathcal{A}}}  
\newcommand{\D}{\ensuremath{\mathcal{D}}}   
\newcommand{\True}{\ensuremath{\mathrm{True}}}
\newcommand{\False}{\ensuremath{\mathrm{False}}}
\begin{document}

\title[Reducing Hallucinations in LLM-Generated Code via Semantic Triangulation]{Reducing Hallucinations in LLM-Generated Code\\ via Semantic Triangulation}

\author{Yihan Dai}
\affiliation{%
  \institution{Peking University}
  \country{China}}
\email{2501112020@stu.pku.edu.cn}

\author{Sijie Liang}
\affiliation{%
  \institution{Beijing Forestry University}
  \country{China}
}
\email{liangsijie@bjfu.edu.cn}

\author{Haotian Xu}
\affiliation{%
 \institution{Peking University}
 \country{China}}
\email{xht@stu.pku.edu.cn}

\author{Peichu Xie}
\affiliation{%
  \institution{Independent}
  \country{China}}
\email{xie_peichu@hotmail.com}

\author{Sergey Mechtaev}
\authornote{Sergey Mechtaev is the corresponding author.}
\affiliation{%
  \institution{Peking University}
  \country{China}}
\email{mechtaev@pku.edu.cn}

\renewcommand{\shortauthors}{Dai et al.}

\begin{abstract}
Large language models (LLMs) can generate executable code from natural language descriptions, but the resulting programs frequently contain bugs due to hallucinations. In the absence of formal specifications, existing approaches attempt to assess correctness using LLM-generated proxies such as tests or auto-formalized specifications. However, these proxies are produced by the same imperfect models and thus often corroborate rather than catch errors, especially when the model exhibits correlated errors. We introduce semantic triangulation, a theory-grounded framework that decorrelates model errors by transforming the original problem into a dissociative variant --- one likely requiring a fundamentally different algorithm --- and checks consistency between independently sampled solutions to both problems. We identify theoretical requirements for this framework, and we prove that under a formal model of LLM hallucinations, these properties confer higher confidence in program correctness. We instantiate the framework through four concrete triangulation methods based on problem inversion, decomposition, and solution enumeration. Evaluated on LiveCodeBench and CodeElo across GPT-4o, DeepSeek-V3, and Gemini 2.5 Flash, our tool increases the probability of selecting a correct program by 24\% over baselines (test generation, metamorphic testing, and auto-formalized specifications) and achieves 26\% higher F1 score in selection-or-abstention scenarios, while being the only method that consistently handles inexact problems admitting multiple valid solutions.
\end{abstract}

\begin{CCSXML}
<ccs2012>
   <concept>
       <concept_id>10011007.10011074.10011092.10011782</concept_id>
       <concept_desc>Software and its engineering~Automatic programming</concept_desc>
       <concept_significance>500</concept_significance>
       </concept>
   <concept>
       <concept_id>10011007.10011074.10011099.10011692</concept_id>
       <concept_desc>Software and its engineering~Formal software verification</concept_desc>
       <concept_significance>500</concept_significance>
       </concept>
   <concept>
       <concept_id>10010147.10010178.10010179</concept_id>
       <concept_desc>Computing methodologies~Natural language processing</concept_desc>
       <concept_significance>500</concept_significance>
       </concept>
 </ccs2012>
\end{CCSXML}

\ccsdesc[500]{Software and its engineering~Automatic programming}
\ccsdesc[500]{Software and its engineering~Formal software verification}
\ccsdesc[500]{Computing methodologies~Natural language processing}

\keywords{Code Generation, Large Language Models, Sample Consensus}


\maketitle

\vspace{-1mm}

\section{Introduction}

Large language models (LLMs) turn natural language specifications into executable code, but this convenience comes at a cost: the generated programs often contain bugs~\cite{liu2024exploring} or vulnerabilities~\cite{pearce2025asleep} due to LLM hallucinations (also known as confabulations). This is exacerbated by the growing trend of ``vibe coding,'' where developers engage in continuous prompting without thoroughly reviewing the generated code. In a black-box setting, code hallucinations can be automatically mitigated through two orthogonal strategies: (1) pre-generation input improving---such as prompt engineering via CoT~\cite{wei2022chain} or RAG~\cite{lewis2020retrieval}---and (2) post-generation output assessment, which examines generated candidates for correctness~\cite{feng2024don}. Here, we focus on the latter.

Identifying which, if any, programs from a pool of LLM-generated samples are correct is akin to a police detective trying to determine which suspects, if any, are actually innocent. The standard approach, plurality voting~\cite{hansen2002neural}, simply asks all generated programs for their ``story'' --- its input-output behaviour --- and selects the most common answer. However, since LLMs are prone to generating correlated errors~\cite{kim2025correlated}, this is equivalent to interrogating a group of suspects who have already colluded on the same fake alibi. In such cases, plurality voting does not identify the truth; it merely amplifies their shared deception~\cite{tumer1996error}.

To counter this, previous methods employ alternative ``interrogation'' techniques, using generated tests~\cite{huang2023agentcoder,taherkhani2024valtest} or auto-formalized specifications~\cite{endres2024can,misu2024towards}  as proxies for truth. Yet, because these proxies are generated by the same underlying LLM, it is akin to bringing in a biased witness who shares the same flawed logic as the suspects; they will likely corroborate the false alibi, leading to an erroneous conclusion. Meanwhile, metamorphic testing approaches, such as paraphrasing the problem~\cite{wang2024validating}, aim to reveal inconsistencies by probing suspects with slightly altered scenarios. However, since the suspects have rehearsed their alibi thoroughly enough to withstand such variations --- much like LLMs' robustness to minor perturbations, the rephrased questions elicit the same erroneous responses, just with different phrasing.

Our key insight is to tackle this problem much like skilled detectives who expose liars by approaching their narrative from fundamentally different, unexpected angles. In particular, we apply \emph{dissociative} problem transformations: perspectives on the same problem that are sufficiently different that an LLM, being an imperfect ``stochastic parrot''~\cite{shani2025tokens,bender2021dangers}, struggles to maintain consistent deception across them. Then, we check the consistency between responses to these dissociated problems using traditional formal methods. While an innocent suspect's account, i.e. correct program behaviour, holds up under any line of questioning because it reflects the truth, hallucinated programs, like liars caught off guard, will betray themselves through contradiction.

This paper introduces \emph{semantic triangulation}, a general, theory-grounded framework that operationalises the above intuition. Given an original coding task $d$, a semantic triangulation method first applies a \emph{dissociative} transformation to obtain a non-trivially different task $d'$. The transformation must ensure that solutions to $d$ and $d'$ are linked through a binary relation over programs (a hyperproperty) $\phi$. Intuitively, $\phi$ captures how a correct solution to $d$ should relate to a correct solution to $d'$ --- for instance, if $d'$ is obtained by swapping the input and output, $\phi$ may require that one program is the inverse of the other. Formally, $\phi$ must satisfy two properties: (1) \emph{bijection-inducing}, meaning it establishes a one-to-one correspondence between program equivalence classes, so that each solution for $d$---whether correct or erroneous---maps to a unique solution for $d'$; and (2) \emph{correctness-coupling}, meaning this correspondence preserves correctness: a correct solution to $d$ maps to a correct solution to $d'$. We then compare independently sampled solutions to $d$ and $d'$; when a solution to $d'$ agrees with a solution to $d$ as checked via $\phi$, it serves as a \emph{plausibility witness}---independent corroboration that the solution to $d$ is not a hallucination.

Returning to our police detective analogy, each of these theoretical requirements plays a distinct role in exposing deception. The \emph{dissociative} property ensures that the transformed problem approaches suspects from unexpected angles wholly unrelated to their rehearsed deception. The \emph{bijection-inducing} property enables detectives to detect even subtle inconsistencies in the suspects' stories. Finally, \emph{correctness-coupling} ensures that independently fabricated stories are unlikely to corroborate each other by coincidence. Together, these requirements yield a theoretical result: under our formal model of LLM hallucinations, an appropriately designed semantic triangulation provably increases the probability of identifying a correct program compared to plurality voting.

We instantiate our framework through four concrete semantic triangulation methods, employing transformations that invert problems, decompose them into sub-problems, and enumerate their solution spaces. The dissociative character of these transformations stems from a key observation: each transformed problem typically requires the LLM to implement a fundamentally different algorithm---for instance, inversion demands a parser where the original required a printer. Because the resulting programs embody distinct computational strategies, an LLM that hallucinated one is unlikely to produce a matching hallucination for the other, making corroboration across the transformation a strong signal of correctness. We further provide machine-checked proofs (in Lean) that all four methods satisfy the theoretical requirements of our framework.

We implemented all triangulation methods in a tool, dubbed \toolName, and evaluated it on LiveCodeBench~\cite{jain2024livecodebench} and a subset of CodeElo~\cite{quan2025codeelobenchmarkingcompetitionlevelcode} that includes \emph{inexact} problems---tasks where multiple distinct outputs are valid for the same input. We evaluate \toolName across three state-of-the-art models from different families: GPT-4o, DeepSeek-V3, and Gemini Flash 2.5. Our experiments show that a successful triangulation increases the probability of correctness by 24\% over baselines (test generation, metamorphic testing, and auto-formalized Hoare-style specifications). For the task of selection-or-abstention --- selecting a solution only when sufficiently confident, and abstaining otherwise---\toolName achieves 25\% higher abstention F1 score than the strongest baseline, which selects only high-confidence solutions with sampling probability $\geq 0.5$. Remarkably, it also achieves 9\% higher reliability than this baseline while being able to select solutions with sampling probabilities as low as 0.07, thereby enabling reliable outputs on harder tasks. Crucially, \toolName is the only method that handles inexact problems consistently. Finally, we show that the requirements of our framework are not satisfied by prior methods such as auto-formalization into Hoare logic, while an ablation study confirms that these requirements are essential for our method's reliability.

In summary, the paper makes the following contributions:
\begin{itemize}
\item Semantic triangulation, a theory-grounded hallucination detection framework that enhances confidence of LLM-generated code by checking cross-task solution consistency.
\item Practical instances of triangulation, with their design rigorously proven in Lean.
\item Evaluation on LiveCodeBench and CodeElo showing that triangulation outperforms previous methods in correct code selection decisions on instances with low probability of correct solutions, and when the task permits multiple non-equivalent solutions.
\end{itemize}

All code, data, and mechanized proofs are available at \url{https://github.com/msv-lab/just-tri-it}.


\begin{figure}[t]
  \centering
  \includesvg[width=\textwidth,pretex=\relscale{0.75}]{problem.svg}
  \caption{Identifying a correct solution is hard when the model's success probability is low (0.07 in the figure), and it makes correlated errors (the dominant error has the probability 0.23). Auto-inferred Hoare-style specifications tend to be overly permissive, failing to distinguish correct from incorrect solutions. The notation $p \sim m(\,\cdot \mid d)$ means the program $p$ is sampled from a conditional distribution induced by an LLM $m$ given a problem description $d$. $\mathds{P}([p])$ is the probability mass of the equivalence class of $p$, estimated via sampling.\label{fig:problem}}
\end{figure}%


\section{Overview}

To illustrate the challenge of assessing correctness of LLM-generated programs, we use the problem 1999D\footnote{The problem is simplified for presentation; more details are available in \Cref{sec:motivating_details}.} from the CodeElo benchmark~\cite{quan2025codeelobenchmarkingcompetitionlevelcode}, denoted as $d$ and shown in \Cref{fig:problem}. The problem asks to replace each \mintinline{Python}{"?"} in a string $s$ with a lowercase letter so that a target string $t$ becomes a subsequence of $s$, or to report that no such replacement exists. For example, \mintinline{Python}{p("a?a", "ab")} should return \mintinline{Python}{"aba"}. Notably, the problem is \emph{inexact}: multiple solutions may be valid for the same input. For example, given the input \mintinline{Python}{p("??", "a")}, both \mintinline{Python}{"aa"} and \mintinline{Python}{"ab"} are acceptable answers.

GPT-4o has a low success rate on this problem: the estimated probability of sampling a correct solution, equivalent to $p_2$ in \Cref{fig:problem}, is only 0.07. Worse, the model exhibits correlated errors~\cite{kim2025correlated}, with the dominant incorrect behaviour having probability 0.23 $\gg$ 0.07. The sample $p_1$ in \Cref{fig:problem} exemplifies this error: it fails to advance the index into the target string $t$ during the main loop, causing, for instance, \mintinline{Python}{p("ab??e", "abcde")} to erroneously return \mintinline{Python}{None}.

Plurality voting~\cite{hansen2002neural} selects the most frequent semantic equivalence class among sampled solutions; majority voting, a stricter variant, requires the selected class to have probability at least 0.5. Both become ineffective when the probability of sampling a correct solution is low, as in our example. The problem is compounded by the tendency of LLMs to produce correlated errors---a phenomenon observed not only across samples from the same model but even across different models~\cite{kim2025correlated}. In such cases, plurality consensus amplifies rather than mitigates mistakes, preferring the dominant error $p_1$ over the correct sample $p_2$, an effect long recognized in the ensemble learning literature~\cite{tumer1996error}. Inexact problems further exacerbate the issue: since multiple correct solutions may exist, plurality voting may favor a single dominant error even when the total probability mass across all correct equivalence classes is higher.

\begin{wrapfigure}{r}{0.45\textwidth}
  \vspace{-3mm}
  \includegraphics[width=0.44\textwidth]{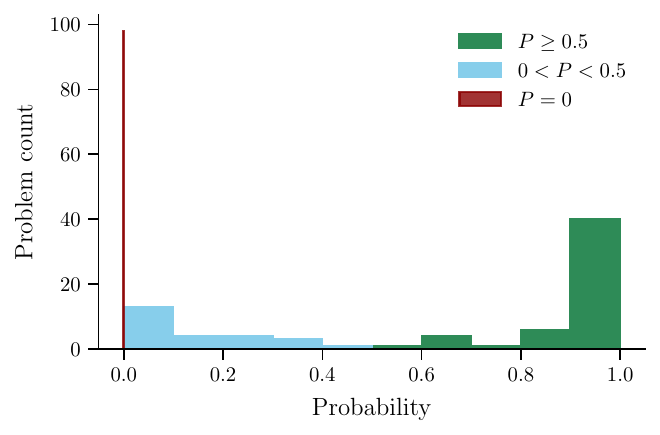}
  \vspace{-3mm}
  \caption{Probabilities of sampling correct solutions from GPT‑4o for LiveCodeBench‑v6 problems (estimated over 100 trials).\label{fig:prob_correct_dist}}
  \vspace{-3mm}
\end{wrapfigure}

Alternative techniques---checking consistency with generated tests~\cite{chen2022codet}, auto-formalized specifications~\cite{endres2024can,misu2024towards}, and metamorphic testing via paraphrasing~\cite{wang2024validating}---all fail to identify a correct solution on this example. We discuss the failure of auto-formalized Hoare-style specifications~\cite{hoare1969axiomatic} in detail, as it reveals a conceptual issue. For this problem, GPT-4o generates a postcondition equivalent to $q_1$ in \Cref{fig:problem} with probability 0.73. This postcondition implements an incomplete check for the answer \mintinline{Python}{None}, and thus holds for both the incorrect sample $p_1$ and the correct sample $p_2$, failing to distinguish between them. This failure can be explained by two factors. First, real-world specifications tend to be overly permissive---as evidenced, e.g., by benchmarks such as SV-COMP~\cite{beyer2012competition}---since precisely specifying and verifying all aspects of program behaviour is inherently difficult, and models learn to reproduce this tendency. More fundamentally, however, the failure stems from the very semantics of Hoare triples: $\{\mathrm{pre}\} p \{\mathrm{post}\}$ holds for any postcondition that overapproximates the expected behaviour, including the trivial postcondition $\mathrm{True}$. Consequently, the semantics of a Hoare triple \emph{does not induce a one-to-one correspondence between errors in programs and errors in specifications}, enabling incorrect or incomplete LLM-generated specifications to match many incorrect LLM-generated programs, thus increasing the probability of spurious matches.

The situation shown in \Cref{fig:problem} is not uncommon. \Cref{fig:prob_correct_dist} reports the estimated probability of sampling a correct solution from GPT-4o across tasks in LiveCodeBench~\cite{jain2024livecodebench}. The model fails to solve most tasks, and for a large fraction of solvable tasks, the probability of a correct solution is below 0.5. No existing method can both abstain from unsolvable tasks and reliably identify low-probability correct solutions on solvable ones, posing a significant challenge to the reliability of LLM-generated code.

To understand this limitation and find a path forward, we recast prior approaches within a unifying perspective: they sample auxiliary artifacts $q$ (such as other programs, tests, specifications, etc.) --- which we refer to as the \emph{plausibility witnesses} --- and check their agreement with sampled programs $p$. These agreement checks yield matching pairs that are then used to form consensus. This leads to the following fundamental yet overlooked question:

\begin{center}
\vspace{1mm}
\emph{What plausibility witnesses provide higher confidence in program correctness?}
\vspace{1mm}
\end{center}

\noindent More formally, the question asks to identify the type of witnesses that would maximize the probability that the program is correct under an agreement, $\mathds{P}(p\text{ is correct w.r.t. }d \mid \mathrm{agree}(p, q)),$ where $d$ is a problem description, and ``$\mathrm{agree}$'' may denote (i) semantic equivalence if $q$ is another program, (ii) test execution if $q$ is a test suite, or (iii) a verification procedure if $q$ is a formal specification.

\subsection{High-Confidence Plausibility Witnesses via Semantic Triangulation}

Semantic triangulation is the first approach designed from first principle to answer the above question. Specifically, it applies a dissociative transformation to the problem in order to change the surface representation or structure of the task sufficiently to decorrelate model errors, but does so in a way that maintains an exact, verifiable mapping between solutions before and after transformation. This is feasible in the programming domain, where problems have well-defined semantics and program behavior can be reasoned about with precision.

Let $D$ be a space of problem descriptions, and $P$ be a space of programs. We define:

\begin{framed}
    \begin{minipage}[b]{0.55\textwidth}
      \begin{definition}[Semantic Triangulation]\label{def:triangulation}
        A semantic triangulation $(\tau, \phi)$ consists of a dissociative problem transformation $\tau: D \to D$ and a predicate on pairs of programs $\phi: P\times P \to \{0,1\}$ that induces a bijection between semantic equivalence classes of programs such that, for any problem description $d\in D$, it maps correct solutions for $d$ to correct solutions for $\tau(d)$.
      \end{definition}
    \end{minipage}%
    \hfill
    \begin{minipage}[b]{0.4\textwidth}
      \centering
      {\scriptsize
      \begin{tikzpicture}[>=stealth, node distance=0.8cm]
        \node (D) at (0,2.5) {Problem description $d$};
        \node (p) at (-1.8,0) {Solution $p$};
        \node (pprime) at (1.8,0) {Solution $q$};
        \node (Dprime) at (0.9,1.2) {$d'$};
        \draw (D) -- (p) node[midway, left] {$p \sim m(\,\cdot \mid d)$};
        \draw (p) -- (pprime) node[midway, below] {Hyperproperty $\hl{\phi}$};
        \draw (D) -- (Dprime) node[midway, right] {$d' = \hl{\tau}(d)$};
        \draw (Dprime) -- (pprime) node[midway, right] {$q\sim m(\,\cdot \mid d')$};
      \end{tikzpicture}
      }
      \vspace{-5mm}
    \end{minipage}
  \end{framed}
  
\noindent where by inducing a bijection we mean that there exists a one-to-one correspondence between equivalence classes of programs such that $\phi$ holds exactly for those pairs $(p,q)$ where the class of $p$ is mapped to the class of $q$. Intuitively, this definition can be interpreted as follows:
\begin{itemize}
\item \emph{Error decorrelation through bijection}: A bijective mapping between equivalence classes guarantees that distinct errors in solutions to the original problem are mapped to distinct errors in solutions to the transformed problem. Consequently, the probability that two independently generated incorrect programs match exactly under a dissociative transformation is low.
\item \emph{Correctness-coupling}: The requirement that the bijection preserves correctness increases the probability of correctness under agreement via $\phi$ since in well-trained models the probabilities of correct solutions are, on average, higher than that of arbitrary bugs.
\end{itemize}

We theoretically justify this definition using a mathematical model tailored to LLM hallucinations. In particular, to capture LLM's lack of genuine understanding of the problem's semantics, which is supported by empirical evidence~\cite{bender2021dangers,shani2025tokens}, we posit that for any given problem, there exists a semantically different problem that the model cannot distinguish (Assumption~\ref{assume:parrot}). Furthermore, drawing on empirical findings of widespread error correlation~\cite{kim2025correlated}, we assume that the LLM produces solutions with imbalanced distributions over equivalence classes of errors (Assumption~\ref{assume:correlated}). Under these assumptions, and using the rearrangement inequality~\cite{hardy1952inequalities}, we prove that an appropriately designed triangulation strictly improves upon plurality voting (Proposition~\ref{prop:triangulation_vs_plurality} and Proposition~\ref{prop:triangulation_vs_plurality_easy}):
\begin{equation*}
\mathds{E}_{d\sim\mathrm{Unif}(D)} \bigl[ \mathds{P}(p\text{ is correct w.r.t. }d \mid \phi(p, q')) - \mathds{P}(p \text{ is correct w.r.t. }d \mid p \equiv q) \bigr] > 0,
\end{equation*}
where $p$ and $q$ are independently sampled solutions for a uniformly chosen problem, $q'$ is a sampled solution to the transformed problem, and $\equiv$ is semantic program equivalence.

\subsection{Inversion-Based Dissociative Problem Transformation}
\label{sec:enum_sinv_example}

Semantic triangulation requires that its problem transformation $\tau$ is dissociative, meaning it changes the task sufficiently that the model does not precisely associate it with the original. Since deriving such a transformation directly from the model weights is infeasible given the complexity of modern LLMs, we instead design it based on intuition and validate the design empirically (\Cref{sec:rq1}).



\begin{figure}[t]
  \centering
  \includesvg[width=\textwidth,pretex=\relscale{0.75}]{enumsinv.svg}
  \caption{\toolName transforms the program $d$ into two derived problems: $d^{\prec}$, which enumerates all valid outputs for a given input, and $d^{\succ}$, which computes all second-argument values that produce a given output under a given first argument. It then triangulates candidate solutions to $d^{\prec}$ against those to $d^{\succ}$ via ENUM-SINV, identifying a reliable enumerator $e_1$, which is used to discard buggy $p_1$, and select correct $p_2$ via FWD-ENUM.\label{fig:enumsinv}}
  \vspace{-4mm}
\end{figure}%

A natural candidate for a dissociative transformation is problem inversion, since an inverse problem will likely require a different algorithm, e.g. a parsing algorithm is substantially different from that of a printer for the same format, making perfectly matching mistakes in their implementations less likely. However, the challenge of applying problem inversion in practice is that many problems are hard to invert. Formally, inverting a function requires it to be bijective, while most practical problems require implementing non-bijective functions (e.g., squaring is not bijective since both $2$ and $-2$ map to $4$). Although this can be addressed by returning all values that map to a given output, the approach has two limitations. First, a full inversion, e.g., enumerating the values of both $s$ and $t$ for the problem in \Cref{fig:problem}, is often intractable. Second, inexact problems further exacerbate the situation: since multiple valid solutions exist, any concrete solution will fail a bijection-inducing check with a correct inverse, which is a requirement of semantic triangulation (see \Cref{sec:inexact}).

Our tool \toolName identifies the low-probability correct solution $p_2$ and discards the dominant error $p_1$ shown in \Cref{fig:problem}. Initially, it uses an LLM to automatically transform the original problem $d$ into two derived problems shown in \Cref{fig:enumsinv}: an \emph{answer enumeration problem} $d^{\prec}$ and a \emph{set-valued partial inverse problem} $d^{\succ}$. Considering that $d$ is an inexact problem, an enumerator $e$ solving $d^{\prec}$ must, for a given input \mintinline{Python}{e("??", "a")}, return all valid substitutions satisfying the requirements of $d$: \mintinline{Python}{["aa", "ab", ...]}. The set-valued partial inverse $d^{\succ}$ flips the roles of the second argument $t$ and the return value, requiring the computation of all second-argument values that produce a given output under a given first argument. For example, given input \mintinline{Python}{q("??", "ab")}, an inverse $q$ solving $d^{\succ}$ must produce all $t$ that would require such a substitution: \mintinline{Python}{["a", "b", "ab"]}.

After transforming the problems, \toolName samples candidate solutions to both and proceeds as follows. First, it triangulates enumerators (solutions to $d^{\prec}$) against set-valued partial inverses (solutions to $d^{\succ}$) using the hyperproperty ENUM-SINV shown in \Cref{fig:enumsinv}. Concretely, for each candidate enumerator $e$ and each candidate inverse $q$, ENUM-SINV checks whether the two programs agree: for a sampled input, every output listed by $e$ must map back through $q$ to a set containing the original second argument, and vice versa. Since ENUM-SINV is bijection-inducing, a match can only occur when both $e$ and $q$ implement the exact same underlying input-output relation --- just viewed from opposite directions. In our example, the enumerator $e_1$ and inverse $q_1$ pass this check, while other candidate pairs do not, because their independent errors are inconsistent under inversion. Second, it uses the identified enumerator $e_1$ to check solutions to the original problem using the hyperproperty FWD-ENUM. This hyperproperty simply checks the membership of the output of $p$ in the output of $e$. Since FWD-ENUM holds only for $p_2$, it is returned as the plausible solution, while the dominant error $p_1$ is discarded.

The success of this method stems from a combination of factors. First, the transformations are dissociative: as can be seen in \Cref{fig:problem} and \Cref{fig:enumsinv}, all problems require substantially different algorithms, making it non-trivial to produce perfectly matching bugs across them. Second, in contrast to the Hoare triples in \Cref{fig:problem}, ENUM-SINV is bijection-inducing, which we prove in Lean, meaning it can detect even minor inconsistencies between solutions. Thus, while each problem is solved correctly with low probability, only correct solutions match under the hyperproperties.
  
Crucially, all components of this triangulation scheme play distinct roles:

\begin{itemize}
\item A set-valued partial inverse problem $d^{\succ}$ and ENUM-SINV are needed because FWD-ENUM is not bijection-inducing. Like the Hoare triples in \Cref{fig:problem}, it permits enumerators that output overapproximate sets, resulting in spurious matches. As we show experimentally via an ablation, non-bijection-inducing properties do not confer high reliability.
\item An answer enumeration problem $d^{\prec}$ and FWD-ENUM are needed because solutions to the original problem cannot be directly triangulated against set-valued inverses: since the problem is inexact, a correct solution to $d$ does not map to a single correct solution to $d^{\succ}$, and therefore will fail any bijection-inducing, correctness-coupling property.
\end{itemize}

\section{Background, Notation and Basic Assumptions}
\label{sec:background}

We use $\mathds{P}(\cdot)$ to denote probability and $\mathds{E}[\cdot]$ for expectation. Conditional probability $\mathds{P}(A\mid B)$ represents the probability of $A$ given that $B$ holds. We use $d \sim \mathrm{Unif}(D)$ for uniform sampling. Let $D$ be a space of problem descriptions, $P$ be a space of programs. Let $m$ be a code generation model, and $m(\,\cdot\mid d)$ be a distribution of programs conditioned on the description $d$. We denote programs independently sampled from this distribution as $p, q,...\sim m(\,\cdot\mid d)$. The support of a distribution is the set of all programs with non-zero probability. Let $\equiv$ be the (partial) semantic program equivalence relation~\cite{ciobacua2016language}, $\mathcal{P}/{\equiv}$ be the set of all equivalence classes $[p_1]_{\equiv}, [p_2]_{\equiv}, ...$. Since semantic program equivalence is a classical relation, we omit $\equiv$ when denoting equivalence classes w.r.t this relation, e.g. $[p_1]$. Given $[p]$, we call $p$ its representative. We use $\mathrm{id}_A$ to denote the identity function on the domain $A$.

The plurality method selects a representative of the most probable equivalence class. Formally, if we sample $n>2$ programs $S\triangleq \{p_1, p_2, \dots, p_n\}$, that is, $p_1, p_2, \dots, p_n \sim m(\,\cdot \mid d),$ then plurality will select a program $p^\ast \triangleq \arg\max_{p_i \in S} | [p_i] \cap S|$.  This method captures the intuition that a well-trained model assigns higher probability to semantically correct solutions. Random sample consensus (RANSAC)~\cite{fischler1981random} is a classical algorithm designed to detect outliers, which was adapted to eliminate hallucinated program-test pairs in CodeT~\cite{chen2022codet}. In this context, this algorithm is applied to sets of programs $\{p_1, ..., p_n\}$ and witnesses $\{q_1, ..., q_n\}$ to maximize $|P_\mathrm{a}| \cdot |Q_\mathrm{a}|$ where all program in $P_\mathrm{a}$ agree with all witnesses in $Q_\mathrm{a}$. Importantly, plurality is a special case of RANSAC consensus, where equivalence between two programs directly corresponds to their agreement.

The modulus of an array $x$ indexed by $I$ is defined as $\|x\|_{\ell^2(I)}\triangleq\sqrt{\sum_{i\in I}x_i^2}.$

Since prompts are written in natural language, they are prone to imprecision and ambiguity, often allowing conflicting interpretations~\cite{jia2025automated}. We categorize causes of this uncertainty into two groups. We say that a problem description is \emph{ambiguous} if it admits unintended interpretations. For example, in the task ``find three distinct elements of a given list that sum to zero'', the term ``distinct'' may refer either to distinct indices or to distinct values, and it is likely to be a fault in the description. In contrast, a problem description is \emph{inexact} if it omits unimportant details. For example, the specification ``return a positive element of a given list'' allows one to choose a positive element arbitrarily, which is likely valid and even desirable. Indeed, Quan et al.~\cite{quan2025codeelobenchmarkingcompetitionlevelcode} found that even in programming contests, where descriptions are more precise than in typical user prompts, ``about 30\% of problems do not have unique correct outputs.'' We rely on the following assumption:

\begin{assumption}\label{assumption:inexact}
  All descriptions $D$ are unambiguous, but some are inexact.
\end{assumption}

\noindent This relaxes assumptions of previous works. For example, the uncertainty measure by Valentin et al.~\cite{valentin2025estimating} assumes that ``if two LLM-generated programs behave differently on the same input, at least one must be incorrect'', thus requiring prompts that are both unambiguous and exact.

Following standard practice in program verification, we model programs as mathematical functions (please see \Cref{sec:threats} for practical implications):

\begin{assumption}\label{assumption:pure}
  Relevant behavior of solutions to $D$ can be represented as mathematical functions $p: I \rightarrow O$, where $I$ is a space of program inputs, and $O$ is a space of program outputs. 
\end{assumption}

\begin{definition}[Semantic Program Equivalence]
  Programs $p$ and $q$ operating on the same domains are semantically equivalent, which is denoted as $p \equiv q$, iff $\forall i\in I.\ p(i) = q(i).$
\end{definition}

\noindent Given that (partial) program equivalence is undecidable, we estimate it by comparing programs' outputs on LLM-generated inputs provided that the programs terminate within a timeout.

We rely on the existence (without requiring explicit knowledge) of a semantic function $\llbracket \cdot \rrbracket : \mathrm{NL} \to 2^{I \times O}$ that for each problem description returns the required relation between inputs and outputs. For example, $\llbracket ``\text{multiply two numbers}" \rrbracket = \{ ((2, 2), 4), ((2, 3), 6), \cdots \}.$ Inexactness is defined as

\begin{definition}[Exactness]\label{def:exact}
  A problem description $d$ is \emph{exact} if and only if for any $i\in I$, $o, o'\in O$, $(i, o) \in \llbracket d \rrbracket \wedge (i, o') \in \llbracket d \rrbracket \Rightarrow o = o';$ Otherwise, it is \emph{inexact}.
\end{definition}

\begin{example}[Inexact Problem Description]
  \begin{align*}
    \llbracket ``\text{return}\text{ a positive element of a given set}" \rrbracket =\ \{\ &(\{1, -1, 3\}, {\color{red} 1}),\ (\{1, -1, 3\}, {\color{red} 3}), \cdots \}.
  \end{align*}
\end{example}

To simplify our definitions, we assume that all inputs are valid and all outputs are reachable:
\[
\forall i\in I\ \exists o\in O.\ (i, o) \in \llbracket d \rrbracket,\quad \forall o\in O\ \exists i\in I.\ (i, o) \in \llbracket d \rrbracket.
\]
In practice, identifying valid inputs is undecidable; \Cref{sec:implementation} discusses our practical mitigation.

\begin{definition}[Program Correctness w.r.t. Problem Description]\label{def:program_correctness}
  A program $p\in P$ is \emph{correct} w.r.t. a problem description $d\in D$ iff $\forall\ i\in I. (i, p(i)) \in \llbracket d \rrbracket,$ which is denoted as $p \vdash \llbracket d \rrbracket.$ It is \emph{incorrect} iff $\exists i\in I. (i, p(i)) \notin \llbracket d \rrbracket,$ which is denoted as $p \nvdash \llbracket d \rrbracket.$
\end{definition}

Hoare logic~\cite{hoare1969axiomatic} is a framework for specifying expected program behavior. Its central construct, the Hoare triple $\{p\}\;c\;\{q\}$, expresses that, if the program $c$ starts in a state satisfying the precondition $p$, then --- assuming termination --- it will end in a state satisfying the postcondition $q$. Pre- and postconditions are predicates over program states --- mappings from variables to values --- typically expressed in first-order logic~\cite{filliatre2013why3}. Since we assume that programs are pure functions and all inputs are valid, such specifications reduce to $\{\mathrm{True}\}\;o=f(i)\;\{q(i,o)\}$, determined by postconditions $q$.




\section{Theoretical Basis of Triangulation}
\label{sec:theory}

This section formally justifies the design of semantic triangulation (\Cref{def:triangulation}), along with providing a conceptual account of the underlying improvement. Plurality, test generation, and autoformalization can be unified under the concept of plausibility witnesses. Let $p\sim m(\,\cdot\mid d)$ be a program generated for a description $d$. Suppose there exists a procedure for sampling artifacts $q$ and a criterion ``$\mathrm{agree}$'' for assessing their agreement with $p$. We call such $q$ plausibility witnesses. Based on agreeing pairs $(p, q)$, algorithms such as RANSAC form consensus over the entire sample, and plurality voting is a special case of this algorithm. Thus, a fundamental question arises:

\begin{definition}[Confidence-Enhancing Plausibility Witness Problem]\label{def:witness_problem}
  For a sampled program $p$, what type of plausibility witness $q$ and the corresponding agreement criterion provide higher $\mathds{P}(p\vdash \llbracket d \rrbracket \mid \mathrm{agree}(p, q)),$ the probability that the program is correct under an agreement?
\end{definition}

\subsection{Semantic Triangulation Provably Enhances Confidence Under Agreement}

We prove that triangulation yields higher-confidence witnesses than plurality in a mathematical model of LLM hallucinations, which rigorously explains the mechanism of mitigating hallucinations, informing our practical implementation. To streamline the formal analysis, we adopt several simplifying conventions that
  do not restrict the practical applicability of our approach:
\begin{itemize}
\item to avoid modeling linguistic variations, all descriptions in $D$ are semantically distinct;
\item to facilitate equivalence checking, all programs $P$ operate on exactly the same domains;
\item there is a finite number of equivalence classes $[p_1], [p_2], \cdots, [p_n]$ in $P/\equiv$;
\item the existence of $\phi$ does not require a closed-form representation.
\end{itemize}

Plurality voting operates under the assumption that each problem has a single solution. Indeed, if there are multiple correct solutions to a problem, plurality may prefer high-probability errors even if the total probability mass of all correct solutions is higher. Since our theoretical justification involves a comparison with plurality, in this section we assume that each $d$ has exactly one solution. In the subsequent section, we show how triangulation alleviates this limitation in practice.

Intuitively, triangulation yields higher-confidence witnesses than plurality by mitigating correlated errors. This arises because LLMs, as ``stochastic parrots,'' lack precise semantic understanding and therefore fail to preserve errors under non-trivial transformations. To formalize this intuition, inspired by work on spurious correlations~\cite{ye2024spurious}, we assume that LLMs rely on shallow features preventing them from capturing subtle aspects of problem semantics. At the same time, when a model lacks the ability to discriminate between any problem instances, confidence improvement becomes infeasible. The following assumption reflects these intuitions:

\begin{assumption}[Stochastic Parrots]\label{assume:parrot}
  Let $m$ be a model that defines a distribution of programs conditioned on problem descriptions $m(p \mid d)$. We say that $m$ is a stochastic parrot if for any description $d$, there is a description $d'$ such that $m(\,\cdot \mid d) = m(\,\cdot \mid d')$ and $\llbracket d \rrbracket \neq \llbracket d' \rrbracket$, and a description $d''$ such that $m(\,\cdot \mid d) \neq m(\,\cdot \mid d'')$ and $\llbracket d \rrbracket \neq \llbracket d'' \rrbracket$.
\end{assumption}

\noindent If $m$ is a stochastic parrot, the equivalence relation $\simeq$ defined as $d \simeq d' \Leftrightarrow m(\,\cdot\mid d) = m(\,\cdot\mid d')$ partitions $D$ into at least two equivalence classes containing at least two elements each. We refer to this partitioning $D/\simeq$ as the \emph{hallucination pattern} of $m$.

To formalize the notion of correlated errors, we say that all programs in the same incorrect equivalence class have the same errors, while different classes have different errors. Then, correlated errors across samples for a given problem effectively mean that some equivalence classes are more likely than others. For simplicity, assume that all error classes have different probabilities:

\begin{assumption}[Correlated Errors]\label{assume:correlated}
Let $m$ be a model that defines a distribution of programs conditioned on problem descriptions $m(p \mid d)$. We say that its errors correlate if for each $d$, all classes of programs incorrect w.r.t. $d$ have different probabilities.
\end{assumption}

In our proof, we transform problems by exploiting the hallucination pattern of a stochastic parrot. We pose a requirement for triangulation that it must not make problems arbitrarily harder to solve, i.e., the probabilities of correct solutions for the original problem $d$ and the transformed problem $d'$ must differ from each other less than the probabilities of errors. In the first proposition, operating under a weak \Cref{assume:correlated} that merely establishes that the probabilities of errors are different, we assume that the probabilities of correct solutions to $d$ and $d'$ are the same\footnote{This assumption will be relaxed in \Cref{prop:triangulation_vs_plurality_easy}}:

\begin{figure}[t]
  \centering
  \hspace{8mm}\includesvg[width=0.75\textwidth,pretex=\relscale{0.75}]{rearrangement.svg}
  \caption{Plurality suffers from correlated errors. Triangulation rearranges the mapping of programs to their witnesses so that large classes of program errors are matched with small classes of witness errors and vice versa, which decreases the probability of matching bugs as per the rearrangement inequality.\label{fig:rearrangement}}
\end{figure}%

\begin{proposition}\label{prop:triangulation_vs_plurality}
Let a code generation model $m$ be a stochastic parrot with correlated errors that hallucinates on problems with equal probability of correct solutions. There exists semantic triangulation $(\tau, \phi)$ such that for $p, q\sim m(\,\cdot\mid d)$ and $q'\sim m(\,\cdot\mid \tau(d)),$
  \begin{equation}\label{eq:triangulation_vs_plurality}
    \mathds{E}_{d\sim\mathrm{Unif}(D)} \bigl[ \mathds{P}(p \vdash \llbracket d \rrbracket \mid \phi(p, q')) - \mathds{P}(p \vdash \llbracket d \rrbracket \mid p \equiv q) \bigr] > 0.
\end{equation}
\end{proposition}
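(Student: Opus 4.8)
The plan is to reduce both conditional probabilities to expressions in the class-mass vectors and then compare them via the rearrangement inequality. Fix a description $d$, write $a_i \triangleq \mathds{P}_{p\sim m(\cdot\mid d)}(p\in[p_i])$ for the mass the forward distribution puts on class $[p_i]$, and let $c$ denote the unique correct class for $d$. Since $p,q$ are i.i.d.\ from $m(\cdot\mid d)$, I would first compute $\mathds{P}(p\vdash\llbracket d\rrbracket \mid p\equiv q) = a_c^2 / \sum_i a_i^2$, because the event $p\vdash\llbracket d\rrbracket \wedge p\equiv q$ forces both samples into $[c]$ while $p\equiv q$ ranges over all classes. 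Writing $b_j$ for the mass that $m(\cdot\mid\tau(d))$ places on $[p_j]$ and $\pi$ for the permutation of classes induced by $\phi$, the agreement event $\phi(p,q')$ is $[q']=\pi([p])$, so $\mathds{P}(p\vdash\llbracket d\rrbracket \mid \phi(p,q')) = a_c\, b_{\pi(c)} / \sum_i a_i b_{\pi(i)}$.

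The second step is to instantiate $(\tau,\phi)$ using the hallucination pattern. I would let $\tau$ send each $d$ to a stochastic-parrot twin in its own $\simeq$-class (Assumption~\ref{assume:parrot} guarantees one exists, with distinct semantics), so that $m(\cdot\mid\tau(d))=m(\cdot\mid d)$ and hence $b=a$. I would first record the structural observation that the equal-correct-probability hypothesis, together with Assumption~\ref{assume:correlated}, is jointly consistent only when each $\simeq$-class has size exactly two: if a class held three descriptions, two of the other correct classes would be incorrect for the third while carrying the common correct mass, contradicting the distinctness of error masses. Thus $\tau$ is the canonical involution swapping the two members of each class, and I define $\phi$ via the permutation $\pi$ that swaps the two correct classes $c_d\leftrightarrow c_{\tau(d)}$ inside every $\simeq$-class and acts as the identity on any remaining classes. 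This $\pi$ is a bijection satisfying the correctness-coupling requirement $\pi(c_d)=c_{\tau(d)}$ for all $d$, so $(\tau,\phi)$ is a legitimate semantic triangulation in the sense of \Cref{def:triangulation}. Because $b=a$ and $a_{c}=a_{\pi(c)}$ by equal-correct-probability, the two numerators coincide at $a_c^2$, and for every $d$ with $a_c>0$ the comparison reduces to showing $\sum_i a_i a_{\pi(i)} < \sum_i a_i^2$.

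The third step is the rearrangement argument itself. I would use the identity $\sum_i a_i^2 - \sum_i a_i a_{\pi(i)} = \tfrac12\sum_i (a_i-a_{\pi(i)})^2 \geq 0$, which yields the non-strict inequality and keeps every summand of the outer expectation nonnegative. To upgrade to strict positivity I exploit the structure: the swap $c_d\leftrightarrow c_{\tau(d)}$ contributes zero, since those classes carry equal mass, so the strict gain must come from error classes. Because Assumption~\ref{assume:parrot} forces at least two $\simeq$-classes, there is another pair $\{d'',\tau(d'')\}$ whose correct classes are \emph{both} incorrect for $d$; $\pi$ swaps them, and by Assumption~\ref{assume:correlated} distinct error classes of $d$ carry distinct mass, so $(a_{c_{d''}}-a_{c_{\tau(d'')}})^2>0$. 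Hence $\sum_i(a_i-a_{\pi(i)})^2>0$, the per-$d$ bracket in~\eqref{eq:triangulation_vs_plurality} is strictly positive whenever $a_c>0$, and taking the expectation over $d\sim\mathrm{Unif}(D)$ finishes the proof.

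The step I expect to be the main obstacle is the global consistency of $\phi$. Unlike the per-distribution optimal ``anti-sorting'' suggested by \Cref{fig:rearrangement}, a single permutation $\pi$ must simultaneously honor the correctness-coupling constraint for every $d$ and still strictly decrease the match probability for each of them. The delicate point is that the equal-correct-probability hypothesis deliberately neutralizes the contribution of the correct classes, so the entire strict improvement has to be harvested from the error classes of \emph{other} $\simeq$-classes; establishing that the correlated-errors assumption guarantees those classes carry distinct mass, and that the stochastic-parrot assumption guarantees at least one such class exists, is what makes the inequality strict rather than merely weak. Verifying that the two probabilistic hypotheses are jointly consistent — which is precisely what pins the $\simeq$-classes to size two and makes the involutive $\tau$ canonical — is the part that will require the most care.
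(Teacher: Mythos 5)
Your proof is correct, and it shares the paper's skeleton: both reduce the two conditional probabilities to the ratios $\pi_c^2/\sum_i\pi_i^2$ and $\pi_c\pi_{\sigma(c)}/\sum_i\pi_i\pi_{\sigma(i)}$, both construct $\tau$ as a fixed-point-free permutation inside each $\simeq$-class of the hallucination pattern (so that the transformed problem carries the identical distribution), and both close via the rearrangement/sum-of-squares inequality. The genuine differences lie in how the class permutation is completed and where strictness is extracted. The paper defines $\phi$ on correct solution pairs, asserts that the induced $\sigma$ is fixed-point-free on \emph{all} program classes, and applies the strict rearrangement inequality to the whole error block $B_d$; this silently requires extending $\sigma$ fixed-point-freely to classes that are correct for no problem at all, where the paper's $\phi$ does not determine it. Your permutation is instead the identity on those never-correct classes, and you harvest the strict gap solely from the swapped correct classes of some \emph{other} $\simeq$-class --- guaranteed to exist by Assumption~\ref{assume:parrot}, guaranteed to be error classes for $d$ with distinct masses by Assumption~\ref{assume:correlated}. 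This localizes the strictness exactly where the assumptions deliver it and sidesteps the extension issue, at the cost of a weaker (but sufficient) inequality $\sum_i a_i a_{\pi(i)} < \sum_i a_i^2$. Your structural observation that the equal-correct-probability hypothesis together with Assumption~\ref{assume:correlated} forces every $\simeq$-class to have size exactly two is also correct and absent from the paper, whose argument tolerates larger classes that in fact cannot occur under the stated hypotheses. One shared caveat: both your argument and the paper's need some $d$ whose correct class has positive mass (your explicit $a_c>0$ guard; the paper's factor $\sum_{c\in C_d}\pi_c\pi_{\sigma(c)}$ must be nonzero), otherwise \eqref{eq:triangulation_vs_plurality} degenerates to equality; neither proof discharges this explicitly.
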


\begin{proof}

We construct $\tau$ so that it exploits the stochastic parrot assumption. For each $[d]_\simeq$ in the $m$'s hallucination pattern, we define $\tau$ as a permutation on $[d]_\simeq$ that has no fixed-points.

Let $\phi \triangleq \{ (p, q) \mid d\in D, p \vdash \llbracket d \rrbracket, q \vdash \llbracket \tau(d) \rrbracket \}.$ Since $\tau$ is a permutation on semantically distinct problems, $\phi$ induces a permutation on equivalence classes $[p_1], [p_2], \cdots, [p_n]$. We represent this permutation on the indices $\{1, ..., n\}$ as $\sigma$. By construction, $\phi$ is correctness-coupling, i.e. if $[p_c]$ is correct w.r.t. $d$, then $[p_{\sigma(c)}]$ is correct w.r.t. $\tau(d)$. Therefore, $(\tau, \phi)$ is semantic triangulation. Apart from that, since $\tau$ has no fixed-points, therefore $\sigma$ has no fixed-points. The effect of this permutation $\sigma$, induced by $\phi$, is illustrated in \Cref{fig:rearrangement} using dashed arrows.

Let $\pi_i \triangleq \mathds{P}(p \in [p_i]),$ for $p \sim m(\,\cdot \mid d).$ We can rewrite $\mathds{P}(p \vdash \llbracket d \rrbracket \mid p \equiv q)$ and $\mathds{P}(p \vdash \llbracket d \rrbracket \mid \phi(p, q'))$ as $\frac{\pi_c^2}{\sum\limits_{i}\pi_i^2},$ and $\frac{\pi_c \pi_{\sigma(c)}}{\sum\limits_{i}\pi_i \pi_{\sigma(i)}},$ respectively. Assume $C_d$ are the indices of correct solutions to problems in $[d]_\simeq$, and $B_d$ are the remaining indices of buggy solutions. Then, the expectation in \Cref{eq:triangulation_vs_plurality} for each class $[d]_\simeq$, which we refer to as $\mathds{E}_{d\sim\mathrm{Unif}([d]_\simeq)}(\Delta \mathds{P})$, is expressed as
\begin{equation}\label{eq:delta_elaborated}
\mathds{E}_{d\sim\mathrm{Unif}([d]_\simeq)} \Biggl[ \frac{\pi_c \pi_{\sigma(c)}\sum\limits_{i}\pi_i^2 - \pi_c^2 \sum\limits_{i}\pi_i \pi_{\sigma(i)}}{\sum\limits_{i}\pi_i^2\cdot\sum\limits_{i}\pi_i \pi_{\sigma(i)}} \Biggr] = \frac{\sum\limits_{c\in C_d} \pi_c \pi_{\sigma(c)}\cdot\sum\limits_{i}\pi_i^2 - \sum\limits_{c\in C_d} \pi_c^2 \cdot \sum\limits_{i}\pi_i \pi_{\sigma(i)}}{| [d]_\simeq | \sum\limits_{i}\pi_i^2\cdot\sum\limits_{i}\pi_i \pi_{\sigma(i)}}
\end{equation}

The reduction of correlated errors through triangulation can be explained by the rearrangement inequality~\cite{hardy1952inequalities}. This inequality captures the principle that matching large terms with large terms and small terms with small terms maximizes the sum of products. Specifically, for any real numbers $\pi_1, \pi_2, \dots, \pi_n$ and any permutation $\sigma$, $\sum_{i=1}^n \pi_i^2 \ \ge\ \sum_{i=1}^n \pi_i \pi_{\sigma(i)}$. Moreover, if all $\pi_i$ are different and $\sigma$ has no fixed points, the inequality is guaranteed to be strict.

In our case, $\sigma$ does not preserve any index by construction, and all $\pi_b$ for $b\in B_d$ are distinct due to \Cref{assume:correlated}. Apart from that, since $\sigma$ is a permutation on $[1..n]$ and on $C_d$, therefore it is also a permutation on $B_d$, which contains at least two elements due to \Cref{assume:parrot}. Therefore, for the top part of \Cref{eq:delta_elaborated}, after canceling repeated terms, the following holds:
\begin{equation*}
\sum\limits_{c\in C_d} \pi_c \pi_{\sigma(c)}\cdot \hl{\sum\limits_{b \in B_d}\pi_b^2} - \sum\limits_{c\in C_d} \pi_c^2 \cdot \sum\limits_{b \in B_d}\pi_b \pi_{\sigma(b)} > \sum\limits_{c\in C_d} \pi_c \pi_{\sigma(c)}\cdot\hl{\sum\limits_{b \in B_d}\pi_b \pi_{\sigma(b)}} - \sum\limits_{c\in C_d}\pi_c^2\cdot\sum\limits_{b\in B_d}\pi_b \pi_{\sigma(b)}
\end{equation*}
Since we assumed that all correct solutions to problems in $[d]_\simeq$ have equal probabilities, the RHS of this inequality is zero, therefore $\mathds{E}_{d\sim\mathrm{Unif}(D)}(\Delta \mathds{P}) > 0$.
\end{proof}

The proof of this proposition highlights the interplay between the definition of semantic triangulation (\Cref{def:triangulation}) and our assumptions (\Cref{assume:parrot}, \Cref{assume:correlated}):
\begin{itemize}
\item Bijective mapping reduces the chance of matching errors, as per the rearrangement inequality.
\item Correctness-coupling enables us to exploit the assumption that probabilities of correct samples are not arbitrarily low in well-trained models.
\item The stochastic parrot assumption ensures that errors are not preserved across transformations, and, in combination with correlated errors, makes the rearrangement inequality strict.
\end{itemize}

The preceding proof constructs $\tau$ such that the transformed problem instances are equally difficult to solve as the originals, where ``difficulty'' is defined as the probability of sampling a correct solution. This contrasts with our empirical observations, in which non-semantics-preserving transformations may increase task difficulty. To reconcile this, we introduce the dissociative hallucination pattern assumption that enables us to construct $\tau$ in such a way that it does not make problems significantly harder, while still preventing the model from precisely associating their semantics:

\begin{assumption}[Dissociative Hallucination Pattern]\label{assume:easy_pattern}
  Let $m$ be a stochastic parrot, $D/\simeq$ be its hallucination pattern, $[p_i] \in P/\equiv$ be all program equivalence classes, $c\in C_d$ be all indexes such that $[p_c]$ is correct w.r.t. at least one problem in $[d]_\simeq$, $B_d$ be the remaining indexes, and $\pi_i \triangleq \mathds{P}(p \in [p_i]).$ We say that $D/\simeq$ is dissociative if there exists $\tau$ defined as a permutation on each $[d]_\simeq$, with $\sigma$ being the corresponding permutation on the indexes of a correct solution induced by $\tau$, that the normalized difference between probabilities of correct solutions for the original programs and correct solutions for the transformed problems is lower than difference between probabilities of corresponding errors:
\begin{equation}
 \sum_{c\in C_d}\left(\frac{\pi_c-\pi_{\sigma(c)}}{\|\pi\|_{\ell^2(C_d)}}\right)^2
<\sum_{b\in B_d}\left(\frac{\pi_b-\pi_{\sigma(b)}}{\|\pi\|_{\ell^2(B_d)}}\right)^2
\end{equation}
\end{assumption}

\noindent The justification of this assumption is that if a well-trained model does not distinguish between two problems and produces high-probability bugs, we can still expect that the probabilities of correct solutions to these problems in the distribution should not vary as much as probabilities of errors. Under this assumption, we can prove a more general statement:

\begin{proposition}\label{prop:triangulation_vs_plurality_easy}
Let the model $m$ be a stochastic parrot with a dissociative hallucination pattern. There exists semantic triangulation $(\tau, \phi)$ such that for $p, q\sim m(\,\cdot\mid d)$ and $q'\sim m(\,\cdot\mid \tau(d)),$
  \begin{equation*}
    \mathds{E}_{d\sim\mathrm{Unif}(D)} \bigl[ \mathds{P}(p \vdash \llbracket d \rrbracket \mid \phi(p, q')) - \mathds{P}(p \vdash \llbracket d \rrbracket \mid p \equiv q) \bigr] > 0.
\end{equation*}
\end{proposition}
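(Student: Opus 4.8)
The plan is to reuse almost the entire argument of \Cref{prop:triangulation_vs_plurality} and substitute only its concluding step, which there relied on the equal-probability hypothesis together with the rearrangement inequality. First I would instantiate $\tau$ as the fixed-point-free permutation on each hallucination class $[d]_\simeq$ whose existence \Cref{assume:easy_pattern} guarantees, and define $\phi \triangleq \{(p,q)\mid d\in D,\ p\vdash\llbracket d\rrbracket,\ q\vdash\llbracket\tau(d)\rrbracket\}$ exactly as before. The verification that $(\tau,\phi)$ is a genuine semantic triangulation is verbatim that of \Cref{prop:triangulation_vs_plurality}: $\phi$ induces a permutation $\sigma$ on the classes $[p_1],\dots,[p_n]$, it is correctness-coupling, and, crucially, $\sigma$ restricts to a permutation of the correct indices $C_d$ and therefore also of the buggy indices $B_d$.

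Next I would transport the algebra up to \Cref{eq:delta_elaborated} unchanged, so that the per-class contribution $\mathds{E}_{d\sim\mathrm{Unif}([d]_\simeq)}(\Delta\mathds{P})$ is a fraction over the strictly positive denominator $|[d]_\simeq|\sum_i\pi_i^2\cdot\sum_i\pi_i\pi_{\sigma(i)}$. Abbreviating $S_C^2=\sum_{c\in C_d}\pi_c^2$, $S_B^2=\sum_{b\in B_d}\pi_b^2$, $T_C=\sum_{c\in C_d}\pi_c\pi_{\sigma(c)}$, and $T_B=\sum_{b\in B_d}\pi_b\pi_{\sigma(b)}$, and using that $\sigma$ splits across $C_d$ and $B_d$ so that $\sum_i\pi_i^2=S_C^2+S_B^2$ and $\sum_i\pi_i\pi_{\sigma(i)}=T_C+T_B$, the numerator collapses after cancellation of the $C_d$-against-$C_d$ terms to $T_C S_B^2 - S_C^2 T_B$. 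The proposition therefore reduces, class by class, to the single inequality $T_C S_B^2 > S_C^2 T_B$.

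The only genuinely new step is to recognise that the dissociative inequality of \Cref{assume:easy_pattern} is precisely this positivity in disguise. Since $\sigma$ permutes $C_d$, we have $\sum_{c\in C_d}\pi_{\sigma(c)}^2 = S_C^2$, so expanding the square gives $\sum_{c\in C_d}(\pi_c-\pi_{\sigma(c)})^2 = 2(S_C^2 - T_C)$, and similarly $\sum_{b\in B_d}(\pi_b-\pi_{\sigma(b)})^2 = 2(S_B^2 - T_B)$; noting also that $\|\pi\|_{\ell^2(C_d)}^2 = S_C^2$ and $\|\pi\|_{\ell^2(B_d)}^2 = S_B^2$, the assumption reads $(S_C^2-T_C)/S_C^2 < (S_B^2-T_B)/S_B^2$, i.e. $T_C/S_C^2 > T_B/S_B^2$, which clears to exactly $T_C S_B^2 > S_C^2 T_B$. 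Averaging the strictly positive per-class contributions over $\mathrm{Unif}(D)$ as a nonnegative convex combination then yields the claimed strict inequality.

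I expect the main obstacle to be bookkeeping rather than analysis. The delicate points are (i) checking that the index sets $C_d$, $B_d$ and the $\ell^2$ norms in \Cref{assume:easy_pattern} are taken over exactly the blocks produced by the correctness-coupling of $\phi$, and (ii) confirming that the map sending each $d'\in[d]_\simeq$ to its unique correct index is a bijection onto $C_d$, so that the per-class average sums over $c\in C_d$ once each and \Cref{eq:delta_elaborated} applies unchanged. In contrast to \Cref{prop:triangulation_vs_plurality}, neither the rearrangement inequality nor \Cref{assume:correlated} is invoked, because \Cref{assume:easy_pattern} supplies the strict inequality directly; the entire substance of the proof is the algebraic equivalence established above.
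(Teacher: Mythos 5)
Your proof is correct and takes essentially the same route as the paper's: the same $(\tau,\phi)$ carried over from \Cref{prop:triangulation_vs_plurality}, the same reduction of \Cref{eq:delta_elaborated} to the per-class inequality $T_C S_B^2 > S_C^2 T_B$ (in your notation), and the same identity $\sum_i(\pi_i-\pi_{\sigma(i)})^2 = 2\bigl(\sum_i\pi_i^2-\sum_i\pi_i\pi_{\sigma(i)}\bigr)$ --- valid because $\sigma$ permutes $C_d$ and $B_d$ separately --- to recognize that positivity as exactly \Cref{assume:easy_pattern}. The only differences are cosmetic: you cancel the numerator directly where the paper passes through the form $\frac{1}{1+T_B/T_C}-\frac{1}{1+S_B^2/S_C^2}$, and your description of the permutation as ``fixed-point-free'' is neither guaranteed by \Cref{assume:easy_pattern} nor needed anywhere in the argument.
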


\begin{proof}
Continuing the derivation for $\mathds{E}_{d\sim\mathrm{Unif}([d]_\simeq)}(\Delta \mathds{P})$ in \Cref{eq:delta_elaborated} leads to the equation in \Cref{assume:easy_pattern}, as elaborated in \Cref{sec:proof_prop_general}.
\end{proof}

\section{Practical Design of Triangulation}
\label{sec:practical}

Our practical design of semantic triangulation methods is based on the insights of our theoretical analysis (\Cref{assume:easy_pattern}) that require that the transformed problems are not significantly harder to solve, while it is hard to exactly match errors in their implementations. In particular, we transform a problem into one that requires a different algorithm while still being ``natural'' for an LLM to solve. Our transformations involve algorithm inversion, answer enumeration and problem decomposition. Specifically, we introduce three inversion-based methods consisting of problem transformations and corresponding hyperproperties: FWD-INV, FWD-SINV, and ENUM-SINV, and a meta triangulation method STREAM for stateless stream-processing programs. All transformations are performed automatically by prompting an LLM (see, e.g., \Cref{fig:sinv_transformation_prompt}).

To specify the intended effect of transformations, we use an imaginary procedure $\llparenthesis\ \cdot\ \rrparenthesis$ that, given a binary relation between inputs and outputs, converts it into natural language text.

\begin{definition}[Inverse Problem]
  Let $d$ be a problem such that $\llbracket d \rrbracket\subset I\times O$. The corresponding inverse problem is $d^{-1} \triangleq \llparenthesis\ \{(o, i) \mid (i, o) \in \llbracket d \rrbracket\}\ \rrparenthesis.$
\end{definition}

\begin{proposition}[Full FWD-INV]
  Let $\tau$ transform each problem $d$ into $d^{-1}$, $p$ is a sample for $d$, $q$ is for $d^{-1}$, and $\phi\triangleq q \circ p \equiv \mathrm{id_I} \wedge \forall o, o'\in O.\ o \neq o' \Rightarrow q(o) \neq q(o')$. If $\tau$ is dissociative, then $(\tau, \phi)$ is semantic triangulation (\Cref{def:triangulation}). We refer to this triangulation method as Full FWD-INV.
\end{proposition}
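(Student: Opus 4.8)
The plan is to verify that the pair $(\tau, \phi)$ satisfies the two conditions required by \Cref{def:triangulation}: first, that $\phi$ induces a bijection between semantic equivalence classes of programs, and second, that this bijection maps correct solutions for $d$ to correct solutions for $\tau(d) = d^{-1}$. The key observation is that $\llbracket d^{-1} \rrbracket$ is the relational inverse of $\llbracket d \rrbracket$, and since all inputs are valid and all outputs are reachable (as assumed in \Cref{sec:background}), the semantic relation $\llbracket d \rrbracket$ and its inverse are genuinely invertible as relations on $I \times O$.

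First I would analyze the correctness-coupling requirement. I would take a program $p$ correct w.r.t. $d$, so $\forall i.\ (i, p(i)) \in \llbracket d \rrbracket$, and show that any $q$ with $\phi(p, q)$ is correct w.r.t. $d^{-1}$. The two conjuncts of $\phi$ are designed precisely for this: $q \circ p \equiv \mathrm{id}_I$ forces $q(p(i)) = i$, establishing that $q$ acts as a left inverse of $p$, while the injectivity condition $\forall o, o'.\ o \neq o' \Rightarrow q(o) \neq q(o')$ makes $q$ injective on $O$. Combined with surjectivity of $p$ onto the reachable outputs (which follows from the reachability assumption together with correctness of $p$), these two facts should let me conclude that $q(o) = i$ exactly when $p(i) = o$, i.e. $q$ computes the functional inverse, which means $(o, q(o)) \in \llbracket d^{-1} \rrbracket$ for all $o$, so $q \vdash \llbracket d^{-1} \rrbracket$. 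Conversely I would check that starting from a correct $q$ for $d^{-1}$ recovers a correct $p$, establishing the correspondence in both directions.

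Next I would establish the bijection on equivalence classes. The natural candidate map sends $[p]$ to $[p^{-1}]$ where $p^{-1}$ is the functional inverse guaranteed to exist when $p$ is a bijection $I \to O$; I must argue this is well-defined on classes, injective, and surjective, and that $\phi(p, q)$ holds exactly when $[q]$ is the image of $[p]$. The subtlety is that $\phi$ as written only pins down $q$ on the image of $p$ via $q \circ p \equiv \mathrm{id}_I$; the injectivity conjunct is what upgrades this to a full characterization of $q$'s behavior on all of $O$. I expect the main obstacle to be the handling of programs $p$ that are \emph{not} bijections: the proposition is named ``Full'' FWD-INV, signaling that the intended reading restricts attention to the regime where the forward problem is itself invertible (equivalently, where $\llbracket d \rrbracket$ is a bijective relation), so I would need to make explicit that the equivalence classes in play are those of candidate bijective solutions, and carefully confirm that $\phi$ is vacuously false or otherwise well-behaved on non-bijective pairs so that the induced map on the relevant classes remains a genuine bijection. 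Pinning down this domain restriction precisely, rather than the routine verification of the two defining conditions, is where the real care is required.
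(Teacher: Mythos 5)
Your proposal is correct in substance, but note that the paper gives no textual argument here at all: its entire proof reads ``Intuitively obvious; mechanized in Lean for completeness.'' So your sketch genuinely does work that the paper delegates to its Lean artifact. Your decomposition into the two requirements of \Cref{def:triangulation} (correctness-coupling plus the induced bijection on equivalence classes), and your recognition that the real content lies in characterizing when $\phi$ can hold at all, is the right shape of the argument. One step should be tightened: you derive surjectivity of $p$ from reachability together with correctness of $p$, but that implication is valid only for exact problems --- for an inexact $d$, a correct $p$ may select one of several admissible outputs and miss others. The cleaner and fully general route is to extract surjectivity from $\phi$ itself: $q \circ p \equiv \mathrm{id}_I$ forces $q$ restricted to the image of $p$ to be surjective onto $I$, so if some $o^\ast \in O$ lay outside the image of $p$, then $q(o^\ast)$ would have to collide with some $q(p(i))$, violating the injectivity conjunct. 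Hence $\phi(p,q)$ holds precisely when $p$ is a bijection and $q$ is its inverse, which yields the correspondence on classes and correctness-coupling in one stroke, with no appeal to reachability. Finally, the domain restriction you flag --- that the induced correspondence matches only classes of bijective programs, so correctness-coupling is meaningful only when $d$ admits bijective solutions --- is a genuine feature rather than an artifact of your reading: the paper itself concedes it shortly after this proposition (``the problem description must permit bijective functions as valid solutions''), and it is exactly what motivates the set-valued variant FWD-SINV.
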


\begin{proof}
  A proof that $\phi$ is bijection-inducing and correctness-coupling is mechanized in Lean.
\end{proof}

Our hypothesis is that inversion-based transformation exploits dissociative hallucination patterns, because forward and inverse problems often require different algorithms. However, fully inverting an algorithm is often infeasible. Consider, for example, the problem ``given a graph and two of its nodes, compute the length of the shortest path between them''. A full inversion would require producing any pair consisting of a graph and two nodes whose shortest-path distance matches a specified value --- such outputs would likely fail to align with the behavior of the original forward algorithm, or else require exhaustively enumerating all such pairs, which is infeasible.

To alleviate this problem, we apply a more natural and practical partial inversion, that is inverting an algorithm w.r.t. a single parameter. For instance, ``given a graph and a target distance, enumerate all pairs of nodes whose shortest-path length is exactly that distance''. This formulation is easier for an LLM to handle, while still satisfying the properties required for semantic triangulation:

\begin{definition}[Partial Inverse Problem]
  Let $d$ be a problem such that $\llbracket d \rrbracket\subset (I_1 \times I_2) \times O$. The corresponding partial inverse problem w.r.t. the first argument is $$d^{-1}_1 \triangleq \llparenthesis\ \{((o, i_2), i_1) \mid ((i_1, i_2), o) \in \llbracket d \rrbracket\}\ \rrparenthesis.$$
\end{definition}

\noindent The inversion with respect to other arguments is defined accordingly. In practice, we prompt an LLM to decide which argument to use for inverting the problem.

\begin{proposition}[Partial FWD-INV]\label{prop:partial_fwd_inv}
  Let $\tau$ transform each problem $d$ into $d^{-1}_1$, and $$\phi\triangleq \forall (i_1, i_2)\in I.\ q(p(i_1, i_2), i_2) = i_1 \wedge\ \forall (o, i_2), (o', \_)\in O'.\ o \neq o' \Rightarrow q(o, i_2) \neq q(o', i_2),$$ where $p$ is a sample for $d$, $q$ is for $d^{-1}_1$ and $O'$ refers to all inputs to $q$. If $\tau$ is dissociative, then $(\tau, \phi)$ is semantic triangulation. We refer to this triangulation method as Partial FWD-INV.
\end{proposition}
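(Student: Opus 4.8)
The plan is to discharge the two obligations of \Cref{def:triangulation}: (i) that $\phi$ induces a bijection between semantic equivalence classes of programs, and (ii) that this bijection is correctness-coupling, i.e.\ it sends solutions correct for $d$ to solutions correct for $\tau(d) = d^{-1}_1$. Since $\phi$ refers to $p$ and $q$ only through their input--output behavior, it is invariant under $\equiv$ and hence descends to a relation on equivalence classes; the work is therefore to pin down exactly which pairs $(p,q)$ satisfy $\phi$ and to read off the induced correspondence.

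First I would establish the structural crux: for every fixed second argument $i_2$, the two conjuncts of $\phi$ jointly force $p(\cdot, i_2)$ and $q(\cdot, i_2)$ to be mutually inverse bijections. The first conjunct $q(p(i_1,i_2), i_2) = i_1$ says $q(\cdot, i_2)$ is a left inverse of $p(\cdot, i_2)$, which makes $p(\cdot, i_2)$ injective and $q(\cdot, i_2)$ surjective onto $I_1$. The second conjunct is exactly injectivity of $q(\cdot, i_2)$. Together they make $q(\cdot, i_2)$ a bijection, whence $p(\cdot, i_2) = q(\cdot, i_2)^{-1}$ is a bijection as well. Thus $\phi(p,q)$ holds \emph{iff} $q$ is the argument-wise inverse of $p$, mirroring the reasoning for Full FWD-INV but localized to the inverted argument.

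From this characterization the bijection is immediate. The map $[p] \mapsto [q]$ given by pointwise inversion is well defined (the inverse of $p(\cdot, i_2)$ is unique when it exists), injective (if $\phi(p,q)$ and $\phi(p',q)$ then $p(\cdot,i_2) = q(\cdot,i_2)^{-1} = p'(\cdot,i_2)$, so $p \equiv p'$), and surjective (any invertible $q$ is the image of the $p$ with $p(\cdot,i_2) = q(\cdot,i_2)^{-1}$). Hence $\phi$ realizes a one-to-one correspondence between the relevant classes of forward solutions for $d$ and inverse solutions for $d^{-1}_1$, as required.

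Finally I would verify correctness-coupling. Suppose $p \vdash \llbracket d \rrbracket$ and $\phi(p,q)$; by the above, $q(\cdot,i_2) = p(\cdot,i_2)^{-1}$. For any valid input $(o, i_2) \in O'$ of the inverse problem, surjectivity of $p(\cdot, i_2)$ yields $i_1$ with $p(i_1, i_2) = o$, and then $q(o, i_2) = i_1$. Correctness of $p$ gives $((i_1, i_2), o) \in \llbracket d \rrbracket$, which by the definition of the partial inverse relation is exactly $((o, i_2), q(o,i_2)) \in \llbracket d^{-1}_1 \rrbracket$; as $(o,i_2)$ was arbitrary, $q \vdash \llbracket d^{-1}_1 \rrbracket$. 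The main obstacle, and the point I would treat most carefully, is lining up the domains of quantification so that this works when $d$ is \emph{inexact}: a correct forward program need not be injective when the partial inverse is multi-valued, in which case no $q$ satisfies $\phi$ and the bijection/coupling degenerates. The argument therefore leans on the standing assumption that all inputs are valid and all outputs reachable, and is clean precisely when the partial inverse $d^{-1}_1$ is single-valued; relaxing the latter is exactly what the later set-valued inverse and enumeration variants are designed to handle. I expect the whole argument, like Full FWD-INV, to be routine enough to mechanize in Lean once these domain conditions are made explicit.
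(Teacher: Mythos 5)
Your argument is correct. The paper offers no prose proof for this proposition --- its entire justification reads ``Intuitively obvious; mechanized in Lean for completeness'' --- so your writeup supplies exactly the reasoning that is deferred to the mechanization: the two conjuncts of $\phi$ force $p(\cdot,i_2)$ and $q(\cdot,i_2)$ to be mutually inverse bijections for each fixed $i_2$, this characterization is $\equiv$-invariant and hence descends to a one-to-one correspondence on equivalence classes, and correctness transfers to $q$ through the definition of $\llbracket d^{-1}_1 \rrbracket$. The caveat you single out --- that a correct but argument-wise non-injective forward solution admits no $\phi$-partner, so the correspondence and the correctness-coupling are only meaningful when the problem permits bijective solutions --- is not a gap in your proof relative to the paper: it is precisely the limitation the authors themselves state in the paragraph following the proposition (``the problem description must permit bijective functions as valid solutions''), and it is what motivates their set-valued variants FWD-SINV and ENUM-SINV.
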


\begin{proof}
  A proof that $\phi$ is bijection-inducing and correctness-coupling is mechanized in Lean.
\end{proof}

Although the problem inversion defined above is likely to change the solution algorithms, while keeping the problem tractable for an LLM, its key limitation is that for the hyperproperty to faithfully hold, i.e., hold on pairs of correct solution-witness pairs, the problem description must permit bijective functions as valid solutions. This, however, often does not hold in practice. For example, for the problem ``compute the square of a given number'', a correct solution maps both $-i$ and $i$ to $i^2$, implying that only incorrect forward and inverse implementations will match.

To alleviate this challenge, we employ set-valued inverses that return the exhaustive list of values for a given input argument that would result in a given output:

\begin{definition}[Set-valued Inverse Problem]
  Let $d$ be a problem such that $\llbracket d \rrbracket\subset I\times O$. The corresponding set-valued inverse problem is $$d^{\succ} \triangleq \llparenthesis\ \bigl\{(o, \{i \mid (i, o) \in \llbracket d \rrbracket \}) \mid (\_, o) \in \llbracket d \rrbracket \bigr\}\ \rrparenthesis.$$
\end{definition}

\begin{proposition}[Full FWD-SINV]\label{prop:full_fwd_sinv}
  Let $\tau$ transform each problem $d$ into $d^{\succ}$, and $$\phi\triangleq \forall i \in I.\ \underbrace{i \in q(p(i))}_{L_1} \wedge \underbrace{\forall i' \in q(p(i)).\ p(i') = p(i)}_{L_2} \wedge\ \underbrace{\forall o, o'\in O.\ o \neq o' \Rightarrow q(o) \cap q(o') = \emptyset}_{L_3},$$ where $p$ is a sample for $d$ and $q$ is for $d^{\succ}$. If $\tau$ is dissociative, then $(\tau, \phi)$ is semantic triangulation. We refer to this triangulation method as Full FWD-SINV.
\end{proposition}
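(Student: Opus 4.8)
The plan is to discharge the two obligations of \Cref{def:triangulation} directly: that $\phi$ induces a bijection between equivalence classes, and that this bijection carries correct solutions of $d$ to correct solutions of $\tau(d) = d^{\succ}$. The natural candidate correspondence is $\Phi([p]) \triangleq [p^{-1}]$, where $p^{-1}$ denotes the set-valued preimage $p^{-1}(o) \triangleq \{i \in I : p(i) = o\}$. This depends only on the input--output behaviour of $p$, so it is well defined on equivalence classes, and all the work reduces to relating it to $\phi$.

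First I would characterize $\phi$ semantically by proving $\phi(p,q)$ holds iff $q \equiv p^{-1}$. The backward direction is a direct check: substituting $q = p^{-1}$, clause $L_1$ holds because $i \in p^{-1}(p(i))$, clause $L_2$ because every $i' \in p^{-1}(p(i))$ satisfies $p(i') = p(i)$ by definition, and clause $L_3$ because preimages of distinct outputs are disjoint. For the forward direction, fix $o$. If $o$ is in the range of $p$, then $L_1$ gives $p^{-1}(o) \subseteq q(o)$ while $L_2$ gives $q(o) \subseteq p^{-1}(o)$, so $q(o) = p^{-1}(o)$; if $o$ is not in the range of $p$, then since the sets $p^{-1}(o')$ over $o'$ in the range already cover $I$ and coincide with $q$ there, clause $L_3$ forces $q(o)$ to be disjoint from $I$, hence $q(o) = \emptyset = p^{-1}(o)$. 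This shows $\phi$ holds exactly for the pairs identified by $\Phi$.

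Next I would verify that $\Phi$ is a one-to-one correspondence onto the relevant inverse classes. Injectivity follows from the identity $i \in p^{-1}(p(i))$: if $p^{-1} \equiv p'^{-1}$ then $i \in p'^{-1}(p(i))$, hence $p'(i) = p(i)$ for every $i$, so $p \equiv p'$. For the range, the value sets of any $p^{-1}$ partition $I$, and conversely every set-valued map whose values partition $I$ is realized as some $p^{-1}$ by sending each $i$ to the unique block containing it. Thus $\Phi$ is a bijection between forward classes and the ``partition-type'' inverse classes, which are precisely the inverse classes on which $\phi$ can be satisfied.

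The main obstacle is the correctness-coupling clause. I would show $p \vdash \llbracket d \rrbracket$ implies $p^{-1} \vdash \llbracket d^{\succ} \rrbracket$, i.e. $p^{-1}(o) = \{i : (i,o) \in \llbracket d \rrbracket\}$. The inclusion from left to right is immediate from correctness of $p$. The reverse inclusion is where the difficulty concentrates: from $(i,o) \in \llbracket d \rrbracket$ one must conclude $p(i) = o$, and since correctness only guarantees $(i, p(i)) \in \llbracket d \rrbracket$, this step needs each input to admit a unique valid output --- precisely exactness (\Cref{def:exact}). The clean coupling therefore holds for exact, possibly non-injective problems (the squaring example), which is exactly the regime FWD-SINV targets; for genuinely inexact $d$ the true set-valued inverse $\llbracket d^{\succ} \rrbracket$ is not of partition type and hence lies outside the image of $\Phi$, reflecting why the paper routes inexact tasks through ENUM-SINV rather than FWD-SINV. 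I would accordingly carry out the proof under exactness (or explicitly flag this scope), after which the remaining reasoning is mechanical, matching the Lean-mechanized treatment of the preceding inversion propositions.
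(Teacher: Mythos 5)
Your proof is correct and is essentially the rigorous elaboration of the paper's own argument: the paper gives only the $L_1$/$L_2$/$L_3$ intuition and defers to a Lean mechanization, and your semantic characterization $\phi(p,q) \Leftrightarrow q \equiv p^{-1}$, with the induced class map $[p] \mapsto [p^{-1}]$, is the natural formalization of exactly that interplay ($L_1$/$L_2$ pin $q$ down on the range of $p$, $L_3$ forces emptiness off the range, and injectivity falls out of $i \in p^{-1}(p(i))$). The one substantive difference is that you make the exactness hypothesis explicit, and you are right to do so: correctness-coupling genuinely fails for inexact $d$, since then the true solution of $d^{\succ}$ is not of partition type, so the induced map sends a correct forward class to an incorrect inverse class (e.g., $[i \mapsto i+1]$ to $[o \mapsto \{o-1\}]$ for the problem in \Cref{fig:inexact_challenge}). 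The paper's statement of \Cref{prop:full_fwd_sinv} omits this hypothesis, but the paper itself concedes the point in its later discussion of \Cref{fig:inexact_challenge} --- observing that $L_2$ fails there on a pair of individually correct programs --- and routes inexact tasks to ENUM-SINV (\Cref{prop:full_enum_sinv}) for precisely this reason; note also that the paper's own proof sketch (``if $p$ is correct, then $q$ never misses a value in its output on the range of $p$'') silently presupposes a unique valid output per input. So your scoping is not a weakness of your argument but an implicit precondition of the proposition that you have correctly surfaced.
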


\begin{proof}
  Intuitively, a correctness-coupling bijective mapping via $\phi$ is achieved through the interplay of $L_1$, $L_2$, and $L_3$. In particular, $L_1$ ensures that if $p$ is correct, then $q$ never misses a value in its output on the range of $p$, and if $q$ is correct on the range of $p$, then $p$ never returns an incorrect value. In turn, $L_2$ complements $L_1$ by ensuring that if $p$ is correct, then $q$ never returns an incorrect value on the range of $p$. Finally, $L_3$ ensures that $q$ matches the behaviour of a single forward function. A full proof is mechanized in Lean.
\end{proof}


As with FWD-INV, we define a partial version of FWD-SINV:

\begin{definition}[Partial Set-valued Inverse Problem]
  Let $d$ be a problem such that $\llbracket d \rrbracket\subset (I_1 \times I_2) \times O$. The corresponding partial set-valued inverse problem w.r.t. the first argument is $$d^{\succ}_1 \triangleq \llparenthesis\ \bigl\{\bigl((o, i_2), \{i_1 \mid ((i_1, i_2), o) \in \llbracket d \rrbracket \}\bigr) \mid ((\_, i_2), o) \in \llbracket d \rrbracket \bigr\}\ \rrparenthesis.$$
\end{definition}

\begin{proposition}[Partial FWD-SINV]
  Let $\tau$ transform each problem $d$ into $d^{\succ}_1$, and
\begin{align*}
\phi\ \triangleq&\ \forall (i_1, i_2) \in I.\ i_1 \in q(p(i_1, i_2), i_2) \wedge \forall i_1' \in q(p(i_1, i_2), i_2).\ p(i_1', i_2) = p(i_1, i_2)\\
\wedge&\ \forall (o, i_2), (o', \_)\in O'.\ o \neq o' \Rightarrow q(o, i_2) \cap q(o', i_2) = \emptyset,
\end{align*}
where $p$ is a sample for $d$, $q$ is for $d^{\succ}_1$ and $O'$ refers to all inputs to $q$. If $\tau$ is dissociative, then $(\tau, \phi)$ is semantic triangulation. We refer to this triangulation method as Partial FWD-SINV.
\end{proposition}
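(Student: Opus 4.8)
The plan is to reduce Partial FWD-SINV to the already-established Full FWD-SINV (\Cref{prop:full_fwd_sinv}) by decomposing the problem along the fixed second argument. Concretely, for each value $i_2\in I_2$ I would introduce the \emph{slice} problem $d_{i_2}$ with semantics $\llbracket d_{i_2}\rrbracket \triangleq \{(i_1, o)\mid ((i_1, i_2), o)\in\llbracket d\rrbracket\}$, and observe that the partial set-valued inverse $d^{\succ}_1$ restricted to a fixed $i_2$ is precisely the full set-valued inverse $d_{i_2}^{\succ}$. A forward program $p:I_1\times I_2\to O$ then induces a family of slice programs $p_{i_2}(\cdot)\triangleq p(\cdot, i_2)$, and a partial-inverse program $q:O\times I_2\to 2^{I_1}$ induces slice programs $q_{i_2}(o)\triangleq q(o, i_2)$.

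The first step is to show that the partial hyperproperty $\phi$ decomposes conjunct-by-conjunct into the Full FWD-SINV hyperproperty applied slice-wise. Fixing $i_2$, the first conjunct $i_1\in q(p(i_1,i_2), i_2)$ becomes $L_1$ for $(p_{i_2}, q_{i_2})$, the second conjunct becomes $L_2$, and since the third conjunct relates $q(o, i_2)$ and $q(o', i_2)$ at a common $i_2$, it becomes $L_3$. Hence $\phi(p, q)$ holds iff $\phi_{\mathrm{full}}(p_{i_2}, q_{i_2})$ holds for every $i_2$, where $\phi_{\mathrm{full}}$ denotes the hyperproperty of \Cref{prop:full_fwd_sinv}.

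Next I would lift correctness and equivalence across the slicing. Because $I = I_1\times I_2$ is a full product, $p$ is correct w.r.t. $d$ iff every $p_{i_2}$ is correct w.r.t. $d_{i_2}$, and $p\equiv p'$ iff $p_{i_2}\equiv p'_{i_2}$ for all $i_2$ (and likewise for $q$). Thus the map sending a class $[p]$ to the tuple $([p_{i_2}])_{i_2\in I_2}$ is a bijection between equivalence classes of forward programs and tuples of slice-class choices — surjectivity holds because any family of slices can be glued into a single program — and analogously for inverse programs. By \Cref{prop:full_fwd_sinv}, $\phi_{\mathrm{full}}$ induces a correctness-coupling bijection on each slice; taking the product of these per-slice bijections yields a bijection between equivalence classes of $p$ and of $q$ for which $\phi$ holds exactly on corresponding classes, and which maps solutions correct for $d$ to solutions correct for $\tau(d)=d^{\succ}_1$. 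This establishes that $(\tau, \phi)$ is a semantic triangulation.

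The main obstacle I anticipate is the bookkeeping in the bijection-assembly step: I must confirm that equivalence of whole programs factors \emph{exactly} as the product of slice equivalences, so that the product of per-slice bijections is well-defined on classes and itself remains a bijection, and that no slice is degenerate — each $d_{i_2}$ must meet the reachability and validity side-conditions of \Cref{sec:background} so that Full FWD-SINV genuinely applies. These are routine given the product structure of $I$, which is why, as with the earlier inversion propositions, I expect the cleanest route is to discharge the slice-wise reasoning and its assembly in a mechanized Lean proof rather than by hand.
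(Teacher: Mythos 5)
Your proposal is correct, but it takes a genuinely different route from the paper. The paper gives no detailed argument for this proposition at all: its proof is a single sentence stating that the intuition is the same as in \Cref{prop:full_fwd_sinv} (the interplay of $L_1$, $L_2$, $L_3$ re-run with $i_2$ treated as a fixed parameter) and deferring the details to a Lean mechanization. You instead make that "same intuition" claim precise by a currying/slicing reduction: the partial problem at each fixed $i_2$ \emph{is} the full set-valued inverse of the slice $d_{i_2}$, the hyperproperty factors conjunct-by-conjunct into per-slice instances of $\phi_{\mathrm{full}}$, and the correctness-coupling bijection is assembled as a product of the per-slice bijections supplied by \Cref{prop:full_fwd_sinv}. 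This buys reuse of the full proposition as a black box, at the cost of the assembly bookkeeping; the paper's direct re-proof avoids that bookkeeping but duplicates the core argument. Two caveats on your side conditions, which you flag but slightly understate. First, per-slice reachability is \emph{not} automatic from the product structure of $I$ (e.g., for $d$ computing $\max(i_1,i_2)$, outputs below $i_2$ are unreachable in the slice at $i_2$); it only goes through because the domain of $d^{\succ}_1$ is by definition restricted to pairs $(o,i_2)$ with $o$ reachable at $i_2$, equivalently because each slice's output space must be taken as the outputs reachable at that $i_2$. Second, surjectivity of $[p]\mapsto([p_{i_2}])_{i_2}$ requires the program space to be closed under $i_2$-indexed gluing, which holds in the paper's model only because Assumption~\ref{assumption:pure} identifies solutions with arbitrary pure functions; for a syntactically constrained program space the gluing of an infinite family of slices need not be a program. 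Neither caveat breaks the argument, but both would need to be discharged explicitly in the Lean development you propose.
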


\begin{proof}
  A proof that $\phi$ is bijection-inducing and correctness-coupling is mechanized in Lean.
\end{proof}

Automating such a transformation with an LLM is straightforward, as it often reduces to a simple linguistic reformulation and does not require the reasoning to actually solve the problem. A fragment of the prompt used for Partial FWD-SINV transformation is given in \Cref{fig:sinv_transformation_prompt}.

\begin{figure}[t]
    \centering
        {\scriptsize
      \begin{tcolorbox}[mybox,title={SINV problem transformation prompt template}]
      \begin{minted}[escapeinside=||]{text}
You are given a programming problem that requires implementing the function {ORIG_SIGN}. Rewrite this problem
so that it instead requires implementing the set-valued inverse function {SINV_SIGN}. Given the desired output
value {NEW_ARG} (corresponding to the original function's return value), the new function should return an
exhaustive list of values for the parameter {INV_ARG} such that if the original function were called with
any of these values (and the other parameters unchanged), it would produce {NEW_ARG} as the result.
      \end{minted}
      \end{tcolorbox}
    }
    \vspace{-4mm}
\caption{\label{fig:sinv_transformation_prompt}A fragment of prompt template used for set-valued inverse problem transformation. \texttt{ORIG\_SIGN} is the typed signature of the original function, \texttt{INV\_SIGN} is the signature of the inverse function, \texttt{NEW\_ARG} is the argument of the inverse function representing the output of the original function, \texttt{INV\_ARG} is the argument for inversion. The argument to invert (\texttt{INV\_ARG}) is chosen by an LLM using a separate prompt.}
\vspace{-3mm}
\end{figure}

\subsection{Triangulation for Inexact Problems}
\label{sec:inexact}

\begin{figure}[t]
    \centering
    \begin{subfigure}[t]{0.32\textwidth}
    {\scriptsize
      \begin{tcolorbox}[mybox,title={Original problem $d$}]
Implement \mintinline{python}{p(i: int) -> int} that returns an integer strictly greater than i by at most 2.
      \end{tcolorbox}
      \begin{tcolorbox}[mybox,title=Sample $p$]
        \begin{minted}[escapeinside=||]{python}
def p(i: int) -> int:
  return i+1
        \end{minted}
    \end{tcolorbox}
    }
    \end{subfigure}%
    \hfill
\begin{subfigure}[t]{0.32\textwidth}
    {\scriptsize
      \begin{tcolorbox}[mybox,title={Inverse problem $d^{-1}$}]
Implement \mintinline{python}{q(i: int) -> int} that returns an integer strictly smaller than i by at most 2.
      \end{tcolorbox}
      \begin{tcolorbox}[mybox,title=Sample $q$]
        \begin{minted}[escapeinside=||]{python}
def q(i: int) -> int:
  return i-2
        \end{minted}
    \end{tcolorbox}
    }
\end{subfigure}
    \hfill
\begin{subfigure}[t]{0.32\textwidth}
    {\scriptsize
      \begin{tcolorbox}[mybox,title={Set-valued inverse problem $d^{\succ}$}]
Implement \mintinline{python}{q(i: int) -> list[int]} that returns all integers strictly smaller than i by at most 2.
      \end{tcolorbox}
      \begin{tcolorbox}[mybox,title=Sample $q'$]
        \begin{minted}[escapeinside=||]{python}
def q(i: int) -> list[int]:
  return [i-1, i-2]
        \end{minted}
    \end{tcolorbox}
    }
\end{subfigure}
\vspace{-3mm}
\caption{\label{fig:inexact_challenge}It is challenging to achieve consensus for problems that permit multiple non-equivalent solutions.}
\end{figure}

Inexact problems permit multiple non-equivalent solutions (\Cref{def:exact}), e.g., the task ``generate a script that extracts all customers from a CSV'' can be interpreted as saving customer data to a file or printing it to stdout, returning it in sorted order or not, and deciding which specific information to extract. Some details are likely irrelevant to the user's goal. We speculate that in practice, most code generation prompts are inexact. In contrast, to the best of our knowledge, existing techniques to establish consensus or control uncertainty, sometimes explicitly~\cite{valentin2025estimating}, assume that only one solution is correct, since forming a consensus under inexact problem descriptions is challenging.

\Cref{fig:inexact_challenge} illustrates this challenge. The original problem admits at least two non‑equivalent solutions: $i \mapsto i+1$ and $i \mapsto i+2$ (for simplicity, we omit solutions whose logic varies depending on the input). This undermines plurality, as the vote may select a high‑probability error even when correct solutions have greater total probability mass. This also hinders the above triangulation methods. For example, composing the inverse program $q$ with the forward program $p$ in \Cref{fig:inexact_challenge} does not yield an identity function as required by FWD-INV, despite both programs being correct individually. Similarly, in the context of FWD-SINV, the clause $\forall i' \in q'(p(i)).\ p(i') = p(i)$ in \Cref{prop:full_fwd_sinv} fails, as for $i=1$, $q'(p(i))$ is $\{0, 1\}$, but $p(0) \neq p(1)$.

Triangulation is the first to alleviate this problem; it generates a program to exhaustively enumerate valid answers for each input, and triangulate this enumerator against a set-valued inverse.

\begin{definition}[Answer Enumeration Problem]
  Let $d$ be a problem such that $\llbracket d \rrbracket\subset I\times O$. The corresponding answer enumeration problem is $$d^{\prec} \triangleq \llparenthesis\ \bigl\{(i, \{o \mid (i, o) \in \llbracket d \rrbracket \}) \mid (i, \_) \in \llbracket d \rrbracket \bigr\}\ \rrparenthesis.$$
\end{definition}

An enumerator is inefficient as a solution to the original problem; to address it, we sample solutions to three problems: the original problem $d$, the enumeration problem $d^{\prec}$, and the set-valued inverse problem $d^{\succ}$. After triangulating a solution to $d^{\prec}$ with a solution to $d^{\succ}$, we triangulate a solution to $d$ with $d^{\prec}$. Note that triangulating with $d^{\succ}$ cannot be omitted, since triangulating only $d$ with $d^{\prec}$ lacks the bijective property of semantic triangulation and therefore does not sufficiently decorrelate errors. We call such a triangulation scheme a \emph{cascade}.

\begin{proposition}[Full ENUM-SINV]\label{prop:full_enum_sinv}
  Let $\tau_1$ transform each problem $d$ into $d^{\prec}$, and $\tau_2$ transform each problem $d$ into $d^{\succ}$. Let a hyperproperty $\phi$ defined as
\[
\forall i \in I.\ \forall o \in p(i).\ i \in q(o) \wedge \forall o \in O.\ \forall i \in q(o).\ o \in p(i)
\]
where $p$ is a sample for $d^{\prec}$ and $q$ is for $d^{\succ}$. If $\tau_2 \circ \tau_1^{-1}$ is dissociative, then $(\tau_2 \circ \tau_1^{-1}, \phi)$ is semantic triangulation. We refer to this triangulation method as Full ENUM-SINV cascade.
\end{proposition}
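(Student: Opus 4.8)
The plan is to verify the two requirements of the semantic triangulation definition for the pair $(\tau_2\circ\tau_1^{-1},\phi)$: that $\phi$ induces a bijection between semantic equivalence classes of programs, and that this bijection maps correct solutions of the source problem to correct solutions of the target problem. Note first that the source problem of this triangulation is the enumerator problem $d^{\prec}$ and the target is the set-valued inverse $d^{\succ}$, since $\tau_1^{-1}$ recovers $d$ from $d^{\prec}$ (well-defined because $\llbracket d^{\prec}\rrbracket$ determines $\llbracket d\rrbracket$) and $\tau_2$ then sends $d$ to $d^{\succ}$; hence $\tau_2\circ\tau_1^{-1}$ sends $d^{\prec}$ to $d^{\succ}$.

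First I would collapse $\phi$ to a single relational identity. Writing $R_p \triangleq \{(i,o) : o\in p(i)\}$ for an enumerator $p$ and $R_q \triangleq \{(i,o) : i\in q(o)\}$ for an inverse $q$, the first conjunct of $\phi$, namely $\forall i.\ \forall o\in p(i).\ i\in q(o)$, states exactly $R_p\subseteq R_q$, and the second conjunct $\forall o.\ \forall i\in q(o).\ o\in p(i)$ states exactly $R_q\subseteq R_p$. Therefore $\phi(p,q)\Leftrightarrow R_p = R_q$. This is the crux of the argument: the symmetry between the set-valued enumerator and the set-valued inverse is precisely what reduces $\phi$ to plain relation equality, in contrast with \Cref{prop:full_fwd_sinv}, where the single-valuedness of the forward program forced the auxiliary clauses $L_1,L_2,L_3$.

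Second, I would observe that semantic equivalence of enumerators is exactly equality of their relations, $p\equiv p'\Leftrightarrow R_p = R_{p'}$, and likewise $q\equiv q'\Leftrightarrow R_q = R_{q'}$, so that each enumerator class and each inverse class is identified by a relation $R\subseteq I\times O$. I would then define the candidate correspondence $f$ on classes by $f([p])\triangleq[q]$ for the inverse $q$ (unique up to $\equiv$) with $R_q = R_p$, its inverse sending $[q]$ to the enumerator class with $R_p = R_q$. By the reduction of the previous step, $\phi(p,q)$ holds exactly when $R_p = R_q$, i.e. exactly when $f([p])=[q]$, which is the ``$\phi$ holds exactly on mapped pairs'' condition of the definition. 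Correctness-coupling is then immediate from the semantics of the two transformations: the correct enumerator for $d^{\prec}$ is the class with $R_{p^\ast}=\llbracket d\rrbracket$ and the correct inverse for $d^{\succ}$ is the class with $R_{q^\ast}=\llbracket d\rrbracket$, so $f$ sends the correct class to the correct class.

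The step I expect to demand the most care is establishing that $f$ is a genuine bijection on the \emph{full} sets of equivalence classes rather than a mere partial matching; this rests on the realizability premise that the relations computable as enumerators coincide with those computable as inverses within $P$, so that every relation $R$ slices both ways into a program of each signature. Once that premise is secured, the reduction of $\phi$ to relational equality discharges everything else, and I would mechanize the conjunct-to-subset translation in Lean in the same style as the earlier SINV propositions.
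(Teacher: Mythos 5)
Your proof is correct, and it is genuinely more explicit than the paper's: the paper's entire written proof of \Cref{prop:full_enum_sinv} is the remark that it ``shares the high-level intuition of previous proofs'' plus a Lean mechanization, with no argument on paper. Your key move --- collapsing $\phi$ into the relational identity $R_p = R_q$ by reading the two conjuncts as the inclusions $R_p \subseteq R_q$ and $R_q \subseteq R_p$ --- is exactly the right observation, and it is sharper than the intuition the paper points back to (the $L_1$/$L_2$/$L_3$ interplay of \Cref{prop:full_fwd_sinv}): because \emph{both} $p$ and $q$ are set-valued here, the hyperproperty is symmetric and no disjointness or single-valuedness clause is needed, a structural simplification the paper never states. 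From there, identifying enumerator classes and inverse classes with relations $R \subseteq I \times O$, and noting that correctness w.r.t. $d^{\prec}$ and w.r.t. $d^{\succ}$ both pin down the same relation $\llbracket d \rrbracket$ (using the paper's standing assumption that all inputs are valid and all outputs reachable), gives both the bijection condition of \Cref{def:triangulation} and correctness-coupling. The caveat you flag --- that $f$ is a total bijection on the full sets of classes only under a realizability premise (every relation realizable as an enumerator is realizable as an inverse, and conversely) --- is a looseness in the paper's framework rather than a gap in your argument: the paper's definition of ``induces a bijection'' is stated over abstract program spaces, and its own simplifications (programs as pure functions, existence of $\phi$ without closed form) implicitly grant exactly that premise. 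What the paper's route buys is machine-checked rigor; what yours buys is a human-readable explanation of \emph{why} ENUM-SINV is the symmetric, clause-free case of the SINV family.
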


\begin{proof}
  A proof that $\phi$ is bijection-inducing and correctness-coupling is mechanized in Lean.
\end{proof}

To illustrate this hyperproperty, refer again to \Cref{fig:inexact_challenge}. Assume an incorrect enumerator $p' \triangleq i \mapsto [i+1]$ misses $i+2$ in its output. Although $L_1$ will hold for any input to $p'$, the clause $L_2$ will fail. For example, for $o=2$, $q'(o)$ will return $\{0, 1\}$, but $2\notin p'(0)$. At the second phase, solutions $p$ to $d$ are checked against triangulated enumerators $p''$ via FWD-ENUM: $\forall i\in I.\ p(i) \in p''(i).$

Similar to previous triangulation methods, we define a partial inversion variant of ENUM-SINV:

\begin{proposition}[Partial ENUM-SINV]\label{prop:partial_enum_sinv}
  Let $\tau_1$ transform each problem $d$ into $d^{\prec}$, and $\tau_2$ transform each problem $d$ into $d^{\succ}_1$. Let the hyperproperty $\phi$ be defined as
\[
\forall (i_1, i_2) \in I.\ \forall o \in p(i_1, i_2).\ i_1 \in q(o, i_2) \wedge \forall (o, i_2) \in O'.\ \forall i_1 \in q(o, i_2).\ o \in p(i_1, i_2)
\]
where $p$ is a sample for $d^{\prec}$, $q$ is for $d^{\succ}_1$, and $O'$ refers to all inputs to $q$. If $\tau_2 \circ \tau_1^{-1}$ is dissociative, then $(\tau_2 \circ \tau_1^{-1}, \phi)$ is semantic triangulation. We refer to this method as Partial ENUM-SINV cascade.
\end{proposition}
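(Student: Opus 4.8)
The plan is to follow the same relation-encoding strategy used for Full ENUM-SINV (\Cref{prop:full_enum_sinv}), carrying the passenger parameter $i_2$ through unchanged. First I would unfold the semantics of the two problems: a sample $p$ for $d^{\prec}$ is correct exactly when $p(i_1, i_2) = \{o \mid ((i_1,i_2),o)\in\llbracket d\rrbracket\}$, and a sample $q$ for $d^{\succ}_1$ is correct exactly when $q(o,i_2)=\{i_1\mid ((i_1,i_2),o)\in\llbracket d\rrbracket\}$. To any enumerator $p$ I would associate the relation $R_p\triangleq\{((i_1,i_2),o)\mid o\in p(i_1,i_2)\}$, and to any partial inverse $q$ the relation $R_q\triangleq\{((i_1,i_2),o)\mid i_1\in q(o,i_2)\}$. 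The correctness characterizations then state precisely that the correct $p$ and the correct $q$ both encode the relation $R=\llbracket d\rrbracket$ of the original problem.

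The core step is to show $\phi(p,q)\Leftrightarrow R_p=R_q$. I would split $\phi$ into its two conjuncts: the first, $\forall(i_1,i_2)\in I.\ \forall o\in p(i_1,i_2).\ i_1\in q(o,i_2)$, is exactly the inclusion $R_p\subseteq R_q$, while the second, $\forall(o,i_2)\in O'.\ \forall i_1\in q(o,i_2).\ o\in p(i_1,i_2)$, is exactly $R_q\subseteq R_p$; together they give $R_p=R_q$. Because the input--output behavior of a set-valued program is a complete invariant of its semantic equivalence class, $R_\bullet$ descends to a well-defined map on classes, and it is a bijection between enumerator classes and partial-inverse classes, since every relation over $(I_1\times I_2)\times O$ is realized by a unique class of each kind. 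As $\phi$ holds exactly on pairs encoding the same relation, it holds exactly on the pairs $([p],[q])$ matched by this bijection, which is the bijection requirement of \Cref{def:triangulation}.

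Correctness-coupling then follows immediately: the correct enumerator for $d^{\prec}$ and the correct partial inverse for $d^{\succ}_1$ both encode $R=\llbracket d\rrbracket$, so they are matched by the bijection. Thus $\phi$ sends the correct class of the source problem $d^{\prec}$ to the correct class of the target problem $\tau(d^{\prec})=(\tau_2\circ\tau_1^{-1})(d^{\prec})=d^{\succ}_1$, and $(\tau_2\circ\tau_1^{-1},\phi)$ is a semantic triangulation.

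I expect the main obstacle to be the asymmetry in the quantifier domains: the first conjunct ranges over all $(i_1,i_2)\in I$, whereas the second ranges only over $(o,i_2)\in O'$, the actual domain of $q$. To make the two inclusions line up into a genuine equality of relations over a common index set, I would invoke the reachability assumption $\forall o\in O\,\exists i.\ (i,o)\in\llbracket d\rrbracket$ together with the convention that $O'$ comprises exactly the output-parameter pairs that actually arise, ensuring that no pair of $R_p$ is missed by the second conjunct and that $q$ is never queried outside its domain. Threading the fixed parameter $i_2$ consistently through both the enumerator's input and the inverse's output position is otherwise routine bookkeeping; this domain-alignment is precisely the delicate part that the Lean mechanization discharges, and I would defer the low-level lemmas to it.
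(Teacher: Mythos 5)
Your proposal is correct and follows essentially the same route as the paper: the paper's own proof is a one-line deferral to its Lean mechanization, stating only that it ``shares the high-level intuition of previous proofs,'' and that intuition --- the two conjuncts of $\phi$ giving mutual inclusion of the encoded relations, so that agreement holds exactly when the enumerator and the partial inverse realize the same relation, with both correct solutions realizing $\llbracket d \rrbracket$ --- is precisely what you spell out. Your treatment of the $O'$ domain asymmetry via the reachability convention is likewise the bookkeeping the paper's background assumptions (all inputs valid, all outputs reachable) and its Lean development are meant to discharge.
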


\begin{proof}
  A proof that $\phi$ is bijection-inducing and correctness-coupling is mechanized in Lean.
\end{proof}

\subsection{Problem Decomposition}

Problems involving non-trivial data structures are hard for LLMs to invert. To simplify inversion, we apply a domain-specific decomposition. In particular, we define a triangulation method STREAM for stateless stream-processing problems, i.e., that process sequences of elements without depending on global state. We designate solutions to such problems $p: \mathrm{Seq}[I] \rightarrow \mathrm{Seq}[O]$. Let $[\cdot]$ be a sequence constructor, $\mathrm{head}: \mathrm{Seq}[I] \rightarrow I$ return the first element of a sequence.

\begin{proposition}[STREAM]\label{prop:stream}
  Let $d$ be a stateless stream-processing problem, $(\tau, \phi)$ be semantic triangulation, $\tau^{\circ}$ transform a stateless stream-processing problem into a corresponding pointwise problem that processes individual elements. Let the hyperproperty $\psi$ be defined as
\[
\phi(p^{\circ}, q) \wedge \forall\ [i_1, \cdots, i_n] \in \mathrm{Seq}[I].\ p([i_1, \cdots, i_n]) = [p^{\circ}(i_1), \cdots, p^{\circ}(i_n)]
\]
where $p$ is a sample for $d$, $p^{\circ}\triangleq i \mapsto \mathrm{head}(p([i]))$. If $\tau$ is dissociative, then $(\tau \circ \tau^{\circ}, \psi)$ is semantic triangulation. We refer to this triangulation meta-method as STREAM.
\end{proposition}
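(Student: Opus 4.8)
The plan is to establish that the composite transformation and hyperproperty $(\tau \circ \tau^{\circ}, \psi)$ satisfies the two conditions required by Definition~\ref{def:triangulation}: that $\psi$ induces a bijection between semantic equivalence classes of stream programs, and that this bijection is correctness-coupling. The key structural observation is that a stateless stream-processing solution $p$ is completely determined by its pointwise behaviour, i.e.\ by the function $p^{\circ} \triangleq i \mapsto \mathrm{head}(p([i]))$, because the second conjunct of $\psi$ forces $p([i_1,\dots,i_n]) = [p^{\circ}(i_1),\dots,p^{\circ}(i_n)]$. Thus the map $p \mapsto p^{\circ}$ is a bijection between equivalence classes of stateless stream programs for $d$ and equivalence classes of pointwise programs for the pointwise problem $\tau^{\circ}$ induces. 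This is the engine of the whole argument: it lets us factor the stream-level triangulation through the pointwise level, where we already have the semantic triangulation $(\tau,\phi)$ available by hypothesis.

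First I would make precise the sense in which $d$ is ``stateless,'' namely that correct solutions are exactly the pointwise liftings of correct solutions to the pointwise problem; this is what justifies treating $p^{\circ}$ as a faithful representative of $p$. Second, I would verify that $\psi$ really decomposes as claimed: a pair $(p, q)$ satisfies $\psi$ exactly when (a) $p$ is a genuine pointwise lifting (the second conjunct) and (b) its pointwise core $p^{\circ}$ agrees with $q$ under $\phi$ (the first conjunct). Third, I would compose the two bijections: the lifting bijection $[p] \leftrightarrow [p^{\circ}]$ on the stream side, followed by the bijection on equivalence classes that $\phi$ induces by the assumption that $(\tau,\phi)$ is a semantic triangulation. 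The composite is again a bijection between equivalence classes, and $\psi$ holds on a pair iff the class of $p$ maps to the class of $q$ under this composite, which is precisely the bijection requirement. Finally, correctness-coupling follows by chaining the two correctness-preserving steps: if $p$ is correct for $d$ then $p^{\circ}$ is correct for the pointwise problem (by statelessness), and then $q$ is correct for $(\tau \circ \tau^{\circ})(d)$ by the correctness-coupling of $\phi$, with the converse direction handled symmetrically using injectivity.

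The main obstacle I anticipate is handling the second conjunct of $\psi$ rigorously rather than waving at it. The clause $\forall [i_1,\dots,i_n].\ p([i_1,\dots,i_n]) = [p^{\circ}(i_1),\dots,p^{\circ}(i_n)]$ is doing heavy lifting: it simultaneously (i) restricts attention to those stream programs that are honest pointwise liftings, so that the core $p^{\circ}$ determines $p$, and (ii) ensures that the lifting map is surjective onto the relevant class of pointwise programs. I would need to argue that, for the hyperproperty to hold on a correct-correct pair, both $p$ and the enumerated/inverted witness behaviours must respect this pointwise structure, and that any stream program \emph{not} of lifted form is filtered out (mapped to no valid partner), which is what preserves bijectivity. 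Care is also needed regarding empty and singleton sequences and the well-definedness of $\mathrm{head}(p([i]))$ under Assumption~\ref{assumption:pure}; since these are routine domain conditions rather than conceptual difficulties, I would discharge them quickly. Because the underlying bijection and correctness-coupling are inherited wholesale from the given $(\tau,\phi)$, the argument is conceptually a reduction, and I expect the full proof to be mechanized in Lean as with the preceding propositions.
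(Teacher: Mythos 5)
Your proposal is correct and spells out precisely the reduction that the paper treats as ``intuitively obvious'' and defers entirely to its Lean mechanization: the second conjunct of $\psi$ forces $p$ to be the pointwise lifting of its core $p^{\circ}$ (filtering out non-lifting classes, exactly as the right-hand conjuncts do in FWD-INV and FWD-SINV), this lifting map gives a bijection between stream-program classes and pointwise-program classes, and composing it with the bijection induced by $\phi$ while chaining correctness through statelessness yields the claim. Since the paper's written proof contains no argument at all, there is no substantive divergence to report --- your proof supplies the content the paper omits, and its handling of the delicate points (partial bijectivity via filtering, well-definedness of $\mathrm{head}(p([i]))$) is sound.
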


\begin{proof}
  A proof that $\psi$ is bijection-inducing and correctness-coupling is mechanized in Lean.
\end{proof}

\noindent Effectively, STREAM applies a given triangulation method pointwise and then triangulates the lifting from pointwise to sequence-level processing. This demonstrates the modularity of triangulation.


\subsection{Implementation}
\label{sec:implementation}

We implemented triangulation in a tool called \toolName for Python programs. To form consensus over the entire sample from individual agreements, we apply RANSAC in a manner similar to CodeT. In addition, we implemented autoformalization into Hoare Logic–style postconditions, expressed as boolean Python functions to avoid limitations of low-resource specification languages.

Formal verification of Python code is hampered by the gap between capabilities of SOTA LLM-based code generation and automated formal verification. Even on simpler benchmarks like HumanEval~\cite{chen2021evaluating} --- where SOTA LLMs nearly solve all tasks --- these LLMs automate verification of correct solutions with the SOTA Python verifier Nagini~\cite{eilers2018nagini} in only 15\% of cases~\cite{shefer2025can}. To obtain enough signal on harder benchmarks, we approximate property validity by testing properties on LLM-generated inputs. We crafted prompts that sample until each problem reaches high coverage, and we use the same input sets for all baselines to ensure fairness.

\begin{figure}[t]
{\scriptsize
\begin{mathpar}
    \inferrule*[right=exec\_s]
    {\langle \Sigma, t \rangle \Rightarrow s}
    {\langle \Sigma, p(t) \rangle \Rightarrow s}

    \inferrule*[right=or\_d]
    {\langle \Sigma, t \rangle \Rightarrow \mathcal{D}}
    {\langle \Sigma, t \vee \_ \rangle \Rightarrow \mathcal{D}}

    \inferrule*[right=or\_a]
    {\langle \Sigma, t_1 \rangle \Rightarrow \mathcal{A} \\
    \langle \Sigma, t_2 \rangle \Rightarrow \overline{\mathcal{D}}}
    {\langle \Sigma, t_1 \vee t_2 \rangle \Rightarrow \mathcal{A}}

    \inferrule*[right=or\_u]
    {\langle \Sigma, t_1 \rangle \Rightarrow \mathcal{U} \\
    \langle \Sigma, t_2 \rangle \Rightarrow ?\\
    ? \notin \{\mathcal{A}, \mathcal{D}\}}
    {\langle \Sigma, t_1 \vee t_2 \rangle \Rightarrow \mathcal{U}}

    \inferrule*[right=eq\_d]
    {\langle \Sigma, t \rangle \Rightarrow \mathcal{D}}
    {\langle \Sigma, t = \_ \rangle \Rightarrow \mathcal{D}}

    \inferrule*[right=eq\_uu]
    {\langle \Sigma, t_1 \rangle \Rightarrow \mathcal{U} \\
    \langle \Sigma, t_2 \rangle \Rightarrow \mathcal{U}}
    {\langle \Sigma, t_1 = t_2 \rangle \Rightarrow \mathrm{True}}

    \inferrule*[right=eq\_a]
    {\langle \Sigma, t_1 \rangle \Rightarrow \mathcal{A} \\
    \langle \Sigma, t_2 \rangle \Rightarrow \overline{\mathcal{D}}}
    {\langle \Sigma, t_1 = t_2 \rangle \Rightarrow \mathcal{A}}

    \inferrule*[right=eq\_u]
    {\langle \Sigma, t_1 \rangle \Rightarrow \mathcal{U} \\
    \langle \Sigma, t_2 \rangle \Rightarrow \overline{s}}
    {\langle \Sigma, t_1 = t_2 \rangle \Rightarrow \mathcal{U}}

    \inferrule*[right=in\_inf]
    {\langle \Sigma, v \rangle \Rightarrow \overline{s}\\
    \langle \Sigma, t \rangle \Rightarrow V^\ast\\
    \overline{s} \in V^\ast}
    {\langle \Sigma, v\in t \rangle \Rightarrow \mathrm{True}}

    \inferrule*[right=in\_inf\_a]
    {\langle \Sigma, v \rangle \Rightarrow \overline{s}\\
    \langle \Sigma, t \rangle \Rightarrow V^\ast\\
    \overline{s} \notin V^\ast}
    {\langle \Sigma, v\in t \rangle \Rightarrow \mathcal{A}}

    \inferrule*[right=in\_uu]
    {\langle \Sigma, v \rangle \Rightarrow \mathcal{U}\\
    \langle \Sigma, t \rangle \Rightarrow \mathcal{U}}
    {\langle \Sigma, v\in t \rangle \Rightarrow \mathrm{True}}

    \inferrule*[right=in\_s]
    {\langle \Sigma, v \rangle \Rightarrow ?_1\\
    \langle \Sigma, t \rangle \Rightarrow ?_2\\
    \{?_1, ?_2\} \cup S \neq \emptyset \wedge \{?_1, ?_2\} \setminus \{\mathcal{U}\} \neq \emptyset}
    {\langle \Sigma, v\in t \rangle \Rightarrow\ \uparrow(\{?_1, ?_2\})}

    \inferrule*[right=forall]
    {\langle \Sigma, t_1 \rangle \Rightarrow \{\overline{s}_i\}_i\\
    \forall i.\ \langle \Sigma[v\mapsto \overline{s}_i], t_2 \rangle \Rightarrow\ ?_i}
    {\langle \Sigma, \forall\ v\in t_1.\ t_2 \rangle \Rightarrow \forall i.\ ?_i \in \{\mathcal{A}, \mathrm{True}\} \wedge |\{?_i\mid ?_i = \mathcal{A}\}_i| < T}

    \inferrule*[right=forall\_s]
    {\langle \Sigma, t \rangle \Rightarrow s}
    {\langle \Sigma, \forall\ v\in t.\ \_ \rangle \Rightarrow s = \mathcal{A}}

\end{mathpar}
}
\vspace{-6mm}
  \caption{Semantics of handling errors and infinite sets during property checking, where $\Rightarrow$ --- the evaluation relation, $\Sigma$ --- a variable assignment, $t$ --- a term, $x$ --- a variable, $S$ --- all special values $\{\mathcal{A}, \mathcal{U}, \mathcal{D}\}$, $s$ --- a special value, $\overline{\mathcal{D}}$ --- any value except for $\mathcal{D}$, $\overline{s}$ --- a non-special value, $V$ --- a finite set of non-special values, $V^\ast$ --- a marked subset of an unknown set of non-special values, $p$ --- a program, $\mathtt{exec}$ --- an execution procedure, $?$ --- any value or set, $\uparrow$ selects the ``strongest'' special value, where $\mathcal{D}$ is stronger than $\mathcal{A}$ and $\mathcal{A}$ is stronger than $\mathcal{U}$, $T$ --- threshold parameter. Logical connectives are handled analogously to $\vee$, all rules respect commutativity.\label{fig:error_handling}. The standard semantics for non-special values is given in appendix (\Cref{fig:normal_handling}).}
  \vspace{-2mm}
\end{figure}

\toolName implements practical mitigations to two fundamental problems: (1) in practice, many inputs are invalid, e.g., negative numbers for Fibonacci, and (2) some transformed problems, e.g., a set-valued inverse, may require enumerating infinite or intractable sets. First, our code-generation prompt instructs programs to validate inputs and raise \texttt{ValueError} on invalid cases following common Python practice. Second, this prompt requests enumerating only a tractable subset when the answer set is infinite or too large, and marking the output accordingly.

\begin{figure}[t]
\centering
\begin{subfigure}[c]{0.45\textwidth}
\centering
\scalebox{0.7}{
\begin{tabular}{cccc}
\toprule
\diagbox{GT}{R} & \makecell{Correctly \\ selected} & \makecell{Incorrectly \\ selected} & Abstained \\
\midrule
No abstention & \cellcolor{lightgreen}\( N_1 \) & \cellcolor{lightred}\( N_2 \) & \cellcolor{lightred}\( N_3 \) \\
Abstention & & \cellcolor{lightred}\( N_4 \) & \cellcolor{lightgreen}\( N_5 \) \\
\bottomrule
\end{tabular}
}
\caption{{\footnotesize Abstention confusion matrix. ``GT'' is the ground-truth, where ``Abstention'' indicates tasks with no correct solutions in the sample; ``R'' is the sample consensus response. When GT is abstention, the response can only be either ``Abstained'' or ``Incorrectly selected''.}}
\label{fig:sub-table}
\end{subfigure}
\begin{subfigure}[c]{0.5\textwidth}
\centering
\footnotesize
\begin{align*}
\text{Reliable Accuracy} &= \frac{N_1}{N_1 + N_2 + N_4} \\
\text{Overall Accuracy} &= \frac{N_1 + N_5}{N_1 + N_2 + N_3 + N_4 + N_5} \\
\text{Abstention Rate} &= \frac{N_3 + N_5}{N_1 + N_2 + N_3 + N_4 + N_5} \\
\end{align*}
\vspace{-2.3\baselineskip}
\begin{gather*}
\text{Precision}_\mathrm{abs} = \frac{N_5}{N_3 + N_5}, \quad
\text{Recall}_\mathrm{abs} = \frac{N_5}{N_2 + N_4 + N_5} \\
\text{Abstention F1-score} =  \frac{2 \cdot\text{Precision}_\mathrm{abs} \cdot \text{Recall}_\mathrm{abs}}{\text{Precision}_\mathrm{abs} + \text{Recall}_\mathrm{abs}}
\end{gather*}
\normalsize
\label{fig:sub-formulas}
\end{subfigure}
\vspace{-\baselineskip}
\caption{Summary of abstention evaluation metrics.\label{fig:abstention_matrix}}
\vspace{-2mm}
\end{figure}

We augment the semantics of hyperproperties by introducing three special values: the angelic value $\mathcal{A}$, the demonic value $\mathcal{D}$, and the undefined value $\mathcal{U}$, and extended the domains of all property components (functions, predicates and logical connectives) as defined in \Cref{fig:error_handling}. First, if a program fails with \texttt{ValueError}, indicating an invalid input, we set its output to $\mathcal{U}$; if it fails with other errors or timeout, then we treat its output as $\mathcal{D}$. Second, we introduce a function ``$\mathrm{tolerate}$'' that is an identity on all values except for $\mathcal{U}$ where it returns $\mathcal{A}$; we use it to tolerate invalid inputs when their consistency with witnesses cannot be ensured, i.e. $q(\mathrm{tolerate}(p(i_1, i_2)), i_2) = i_1$ in \Cref{prop:partial_fwd_inv}. Third, when dealing with infinite sets, as shown in rule \textsc{in\_inf\_a}, we treat the result of membership check as angelic if the value is not in the returned subset. Finally, we ensure that we only interpret the result of universal quantification as True when there is sufficient evidence, embodied in a threshold for the maximum number of accepted angelic values in rule \textsc{FORALL}. An example of applying these rules is given in \Cref{sec:example_error_handling}. This design balances the need for strict property enforcement with the practical reality that validity of inputs cannot be precisely identified, and some transformed problems require computing infinite sets.

\section{Evaluation}
\label{sec:evaluation}

Our experiments address the following research questions:
\begin{description}
\item[RQ1:] Does triangulation yield higher-confidence plausibility witnesses than baselines?
\item[RQ2:] Does triangulation enable better program selection and abstention than baselines?
\item[RQ3:] Does triangulation handle inexact tasks that permit non-equivalent correct solutions?
\item[RQ4:] (Ablation) What aspects of triangulation contribute to its performance?
\item[RQ5:] What are the causes of failures of semantic triangulation?
\item[RQ6:] What are the time and token cost of semantic triangulation in comparison with baselines?
\end{description}

We selected three SOTA LLMs from different families, GPT-4o~\cite{hurst2024gpt}, a popular closed-weight model, DeepSeek-V3~\cite{liu2024deepseek}, a popular open-weight model, and Gemini-2.5-Flash, a reasoning model.

\begin{table}[t]
  \centering
  \caption{\emph{Agreements methods} check agreements with plausibility witnesses; \emph{Dissoc.}~= dissociative (witnesses likely require fundamentally different algorithm), \emph{Biject.}~= bijection-inducing. \emph{Consensus methods} form selection/abstention decisions by aggregating agreements across the entire sample.}
  \vspace{-1mm}
  \label{tab:method-properties}
  \small
  \setlength{\tabcolsep}{5pt}
  \begin{tabular}{lcc|l}
  \toprule
  \multirow{2}{*}{\textbf{Agreement method}} &
  \multicolumn{2}{c}{\textbf{Properties}} & \multirow{2}{*}{\textbf{Consensus method}} \\
  \cmidrule(lr){2-3}
   & \textbf{Dissoc.} & \textbf{Biject.} & \\
  \midrule
  \multirow{2}{*}{Is equivalent to \textbf{Another solution}}          & \multirow{2}{*}{$\times$} & $\checkmark$ & \textbf{Plurality}          \\
                 &  & $\checkmark$ & \textbf{Majority0.5}
  \\
  Passes generated \textbf{Tests}               & $-$ & $\times$     & \textbf{CodeT}~\cite{chen2022codet}: Tests + RANSAC  \\
  Satisfies generated Hoare-style \textbf{Postcondition}          & $\checkmark$ & $\times$     & \textbf{Postcondition} + RANSAC   \\
  \textbf{Syntactic}: metamorphic testing via paraphrasing  & $\times$ & $\checkmark$ & \textbf{Syntactic} + RANSAC  \\
  \textbf{OffByOne}: metamorphic testing via off-by-one  & $\times$ & $\checkmark$ & \textbf{OffByOne} + RANSAC \\
  \midrule
  Is triangulated via \textbf{FWD-INV}    & $\checkmark$ & $\checkmark$ &
  \multirow{3}{*}{\textbf{JUST-TRI-IT}} \\
  Is triangulated via \textbf{FWD-SINV}       & $\checkmark$ & $\checkmark$ & \\
  Is triangulated via \textbf{ENUM-SINV} & $\checkmark$ & $\checkmark$ & \\
  \bottomrule
  \end{tabular}
  \end{table}

\Cref{tab:method-properties} summarizes the methods used in our evaluation, all implemented as described in \Cref{sec:implementation}. Each approach is evaluated in two modes: as an \emph{individual agreement method}, e.g., \emph{Postcondition} checks whether generated code satisfies a generated postcondition, and as a \emph{sample consensus method}, where selection and abstention decisions are formed by aggregating agreements across an entire sample, typically using the RANSAC procedure described in \Cref{sec:background}. \toolName\ incrementally applies the ENUM-SINV, FWD-SINV, and FWD-INV triangulations, with STREAM automatically enabled when the LLM classifies the problem as stream-processing. It forms consensus via RANSAC and abstains if no triangulation succeeds. \emph{Majority0.5} selects a solution whose estimated probability is at least $0.5$ and abstains if no such solution exists. \emph{Syntactic} is a variant of the metamorphic testing approach of Wang et al.~\cite{wang2024validating}: it translates problem statements from English to Chinese and checks equivalence of the resulting solutions. \emph{OffByOne} is a metamorphic testing approach that applies off-by-one perturbations to problem specifications, introducing trivial semantic changes, and verifies that solutions change accordingly.

Important hyperparameters involve the number of samples, which we set to 30, as at this rate the entropy of distribution of equivalence classes stabilizes, and the threshold for the tolerance of angelic values during property checking (\Cref{sec:implementation}), which we set to $\frac{1}{3}$ based on a pilot study on MBPP~\cite{austin2021program}. We use the default temperature 1.0 for sampling. Refer to \Cref{sec:entropy_stabilize} for details.

Since traditional code generation benchmarks HumanEval~\cite{chen2021evaluating} and MBPP~\cite{austin2021program} are reaching their saturation, we evaluated the above configurations on newer, more challenging benchmarks:

\begin{itemize}
  \item LiveCodeBench (LCB)~\cite{jain2024livecodebench} is a popular, challenging code generation dataset; to ensure contamination-free evaluation, we used the most recent, sixth segment of LiveCodeBench, including 175 problems published between February 2025 and April 2025. Its limitations are that it explicitly excludes inexact problems due to the difficulty of testing their solutions.
  \item CodeElo-Inexact (CEI): to address the LCB limitation, we systematically selected 31 inexact problems from CodeElo~\cite{quan2025codeelobenchmarkingcompetitionlevelcode}, choosing those where the acceptability of multiple solutions is explicitly stated in the problem description. A limitation of CodeElo is that it does not provide correctness oracles, suggesting the use of a closed third-party evaluation platform, which harms reproducibility. To address it, we manually wrote correctness judges for these 31 problems. As a sanity check, we collected pairs of correct and incorrect submissions for these problems from CodeForces database~\cite{codeforces}, and ensured that our judges discriminate them.
\end{itemize}

\subsection{RQ1: Confidence-Enhancing Plausibility Witnesses}
\label{sec:rq1}

\begin{wrapfigure}{r}{0.47\textwidth}
  \vspace{-4mm}
  \includegraphics[width=0.46\textwidth]{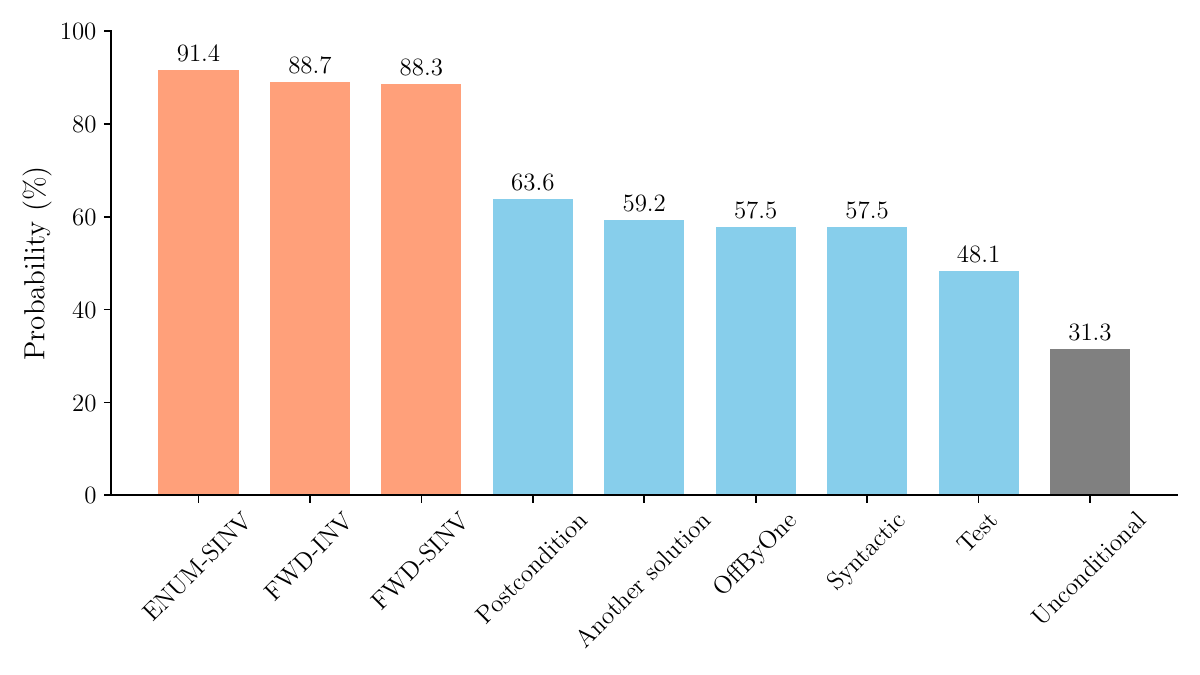}
  \vspace{-5mm}
  \caption{\label{fig:conditional_prob}The probability of GPT-4o sample correctness under agreement across 175 LCB problems. ``Unconditional'' refers to the probability of correctness irrespective of an agreement.} 
  \vspace{-1mm}
\end{wrapfigure}

To identify which type of plausibility witness confers greater confidence in program correctness (as posed in \Cref{def:witness_problem}), we evaluate conditional probability of correctness given agreement $\mathds{P}(p\vdash \llbracket d \rrbracket \mid \mathrm{agree}(p, q))$. For \toolName, we report results separately for each triangulation scheme. Empirical estimates of this probability for GPT-4o on 175 LCB tasks are given in \Cref{fig:conditional_prob}. For details of DeepSeek-V3 and Gemini 2.5 Flash results, refer to \Cref{sec:deepseek-V3}.

Although all witnesses increase the probability of correctness in comparison to ``Unconditional'' (no witnesses), triangulation methods consistently outperform the strongest baselines, e.g. Hoare-style postconditions in the case of GPT-4o, by 24\% on average across the evaluated models, confirming our theoretical analysis in \Cref{sec:theory}. 

Importantly, methods that apply non-dissociative transformations, \emph{Syntactic} (translation to Chinese) and OffByOne (trivial semantic transformation), yield little improvement, as they do not disrupt error correlation. The substantial gap between these baselines and our triangulation methods provides empirical evidence that inversion-based transformations are indeed dissociative.

\begin{tcolorbox}[colback=gray!5, colframe=black!20, arc=2pt, boxrule=0.3mm, breakable, sharp corners, left=2pt, right=2pt, top=2pt, bottom=2pt]
\textbf{RQ1:} Semantic triangulation yields the most confidence-enhancing plausibility witnesses, outperforming the strongest baseline by 24\% on average.
\end{tcolorbox}

\subsection{RQ2: Selection \& Abstention}

\begin{figure*}[t]
    \centering
    \vspace{-3mm} 
    \begin{subfigure}{0.25\textwidth}
        \includegraphics[width=\linewidth]{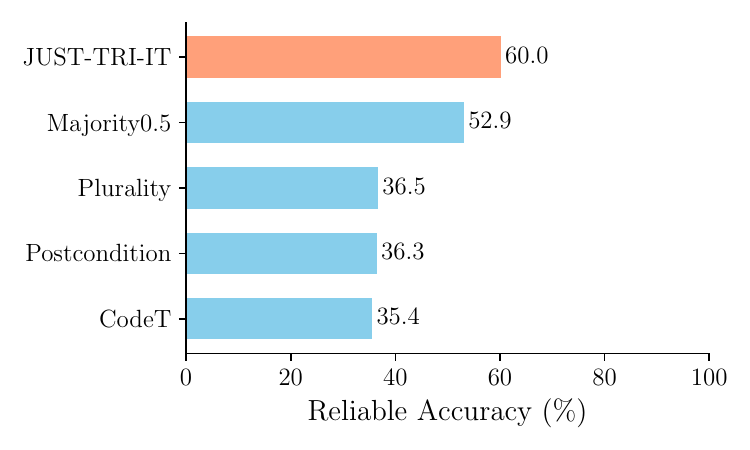}
        \label{fig:subfig_a}
    \end{subfigure}\hfill
    \begin{subfigure}{0.25\textwidth}
        \includegraphics[width=\linewidth]{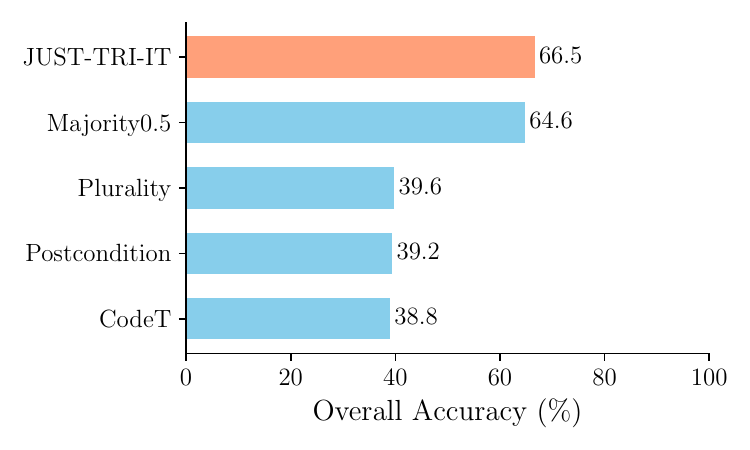}
        \label{fig:subfig_b}
    \end{subfigure}\hfill
    \begin{subfigure}{0.25\textwidth}
        \includegraphics[width=\linewidth]{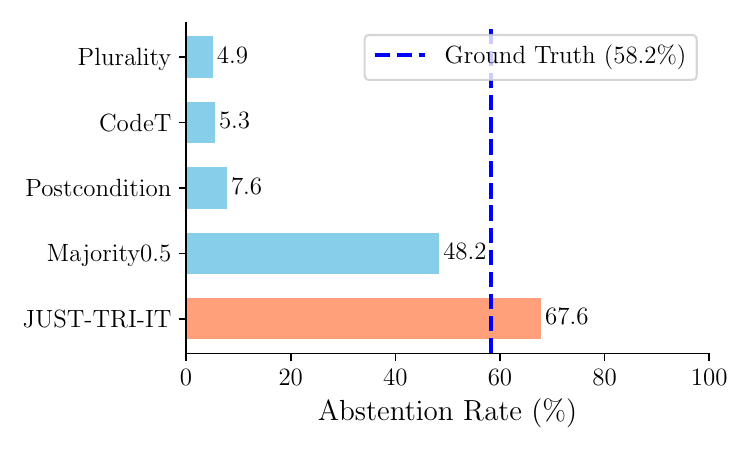}
        \label{fig:subfig_b}
    \end{subfigure}\hfill
    \begin{subfigure}{0.25\textwidth}
        \includegraphics[width=\linewidth]{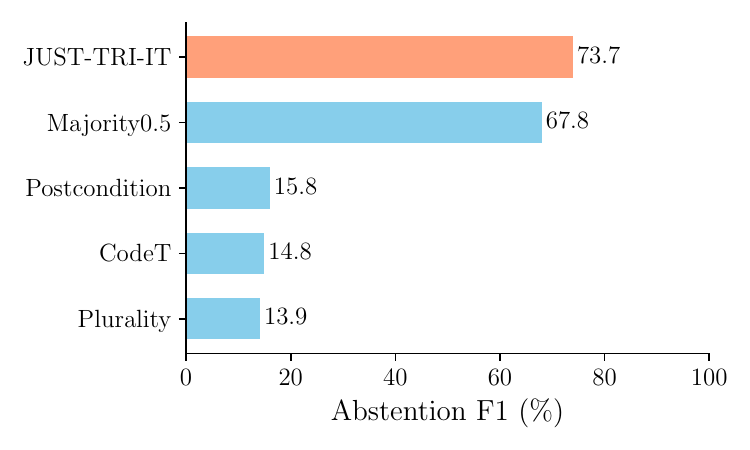}
        \label{fig:subfig_b}
    \end{subfigure}\\[-4mm]
    \begin{subfigure}{0.25\textwidth}
        \includegraphics[width=\linewidth]{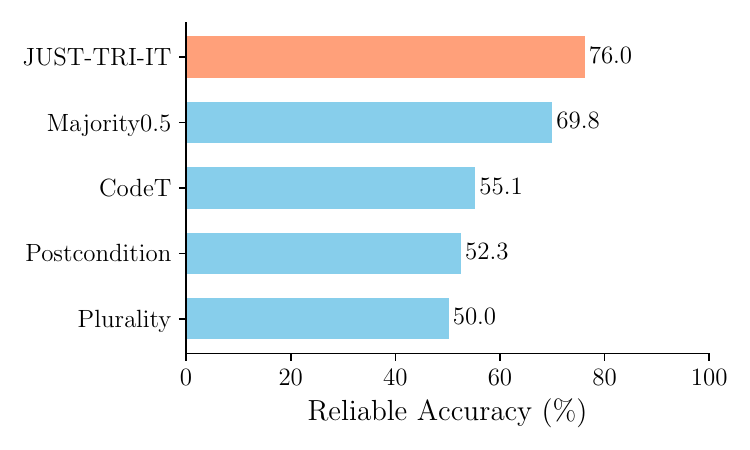}
        \label{fig:subfig_a}
    \end{subfigure}\hfill
    \begin{subfigure}{0.25\textwidth}
        \includegraphics[width=\linewidth]{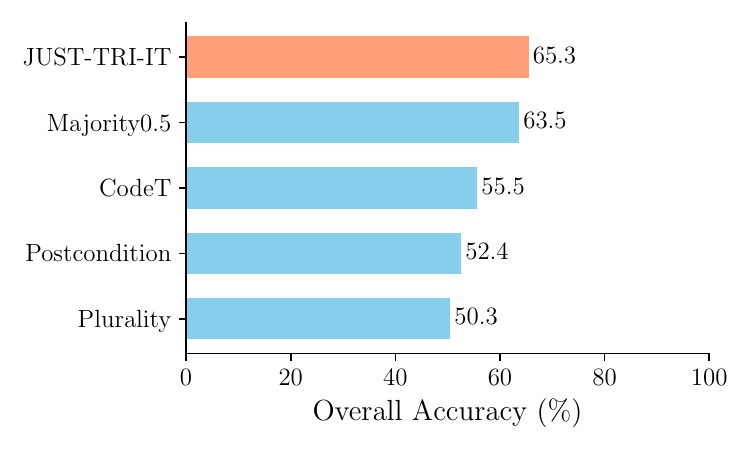}
        \label{fig:subfig_b}
    \end{subfigure}\hfill
    \begin{subfigure}{0.25\textwidth}
        \includegraphics[width=\linewidth]{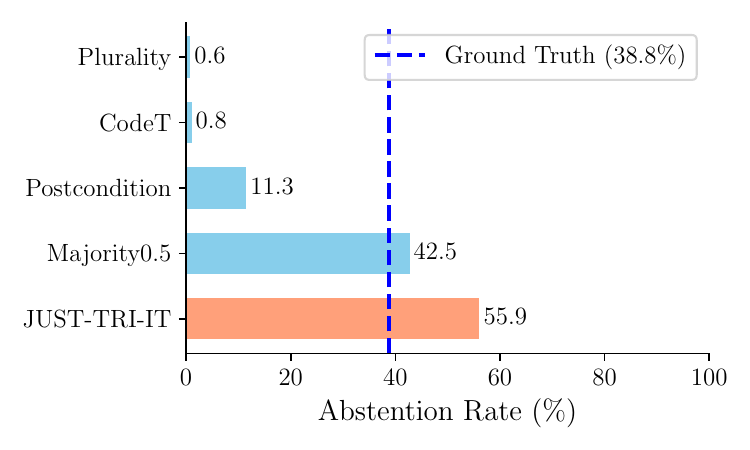}
        \label{fig:subfig_b}
    \end{subfigure}\hfill
    \begin{subfigure}{0.25\textwidth}
        \includegraphics[width=\linewidth]{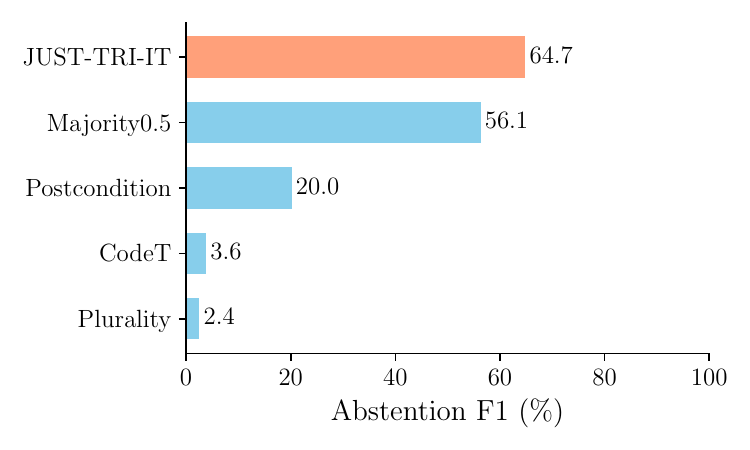}
        \label{fig:subfig_b}
    \end{subfigure}
\\[-4mm]
    \begin{subfigure}{0.25\textwidth}
        \includegraphics[width=\linewidth]{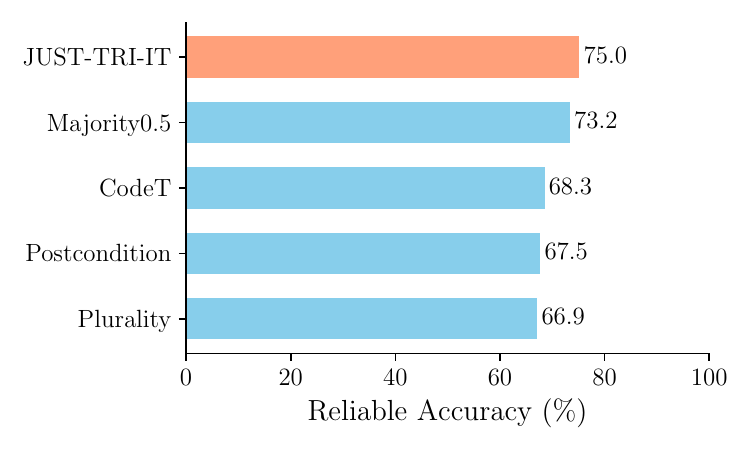}
        \label{fig:subfig_a}
    \end{subfigure}\hfill
    \begin{subfigure}{0.25\textwidth}
        \includegraphics[width=\linewidth]{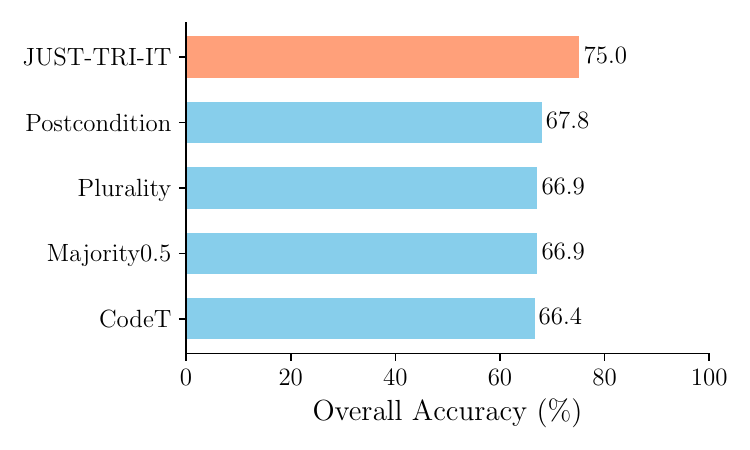}
        \label{fig:subfig_b}
    \end{subfigure}\hfill
    \begin{subfigure}{0.25\textwidth}
        \includegraphics[width=\linewidth]{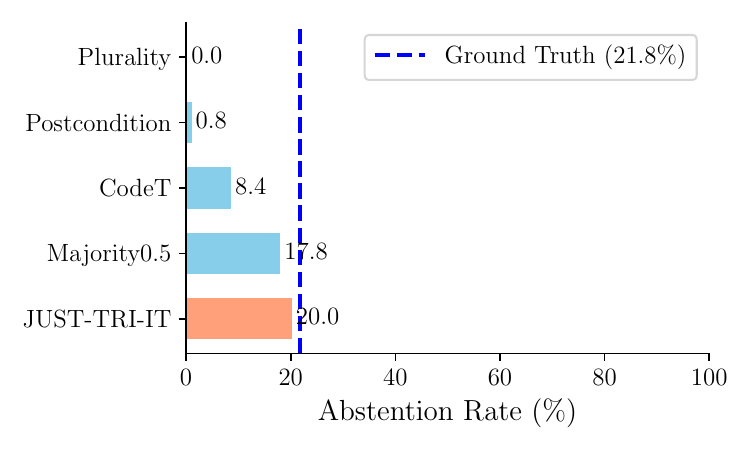}
        \label{fig:subfig_b}
    \end{subfigure}\hfill
    \begin{subfigure}{0.25\textwidth}
        \includegraphics[width=\linewidth]{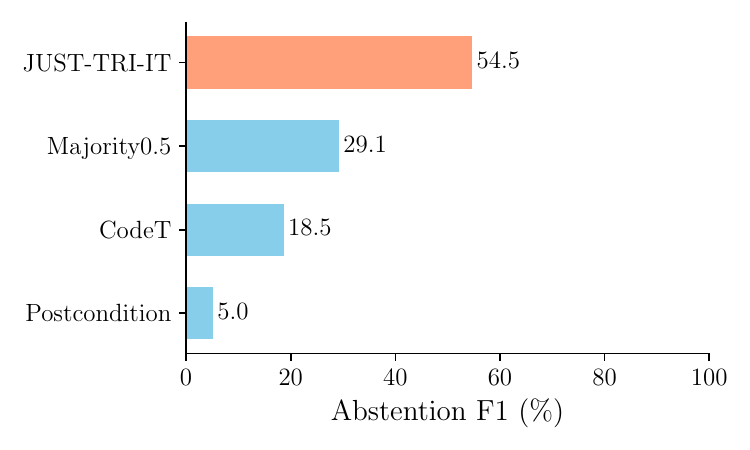}
        \label{fig:subfig_b}
    \end{subfigure}    
    \vspace{-12mm}
    \caption{\label{fig:abstention_lcb}Abstention measures on LCB with GPT-4o (top), DeepSeek-V3 (middle), and Gemini 2.5 Flash (bottom).}
    \vspace{-3mm}
\end{figure*}

While witnesses that bolster confidence in program correctness are valuable, their practical impact is limited if they exist only for cases where LLMs already solve problems with high confidence --- for example, by producing correct programs with high probability. Thus, we evaluate how semantic triangulation compares with other techniques in terms of their selection and abstention decisions. Standard measures like Pass@k~\cite{chen2021evaluating} do not take into account reliability of selection and abstention decisions. Instead, we adopted established metrics for evaluating abstention~\cite{kim2025correlated} (\Cref{fig:abstention_matrix}):

\begin{itemize}
    \item \emph{Reliable Accuracy} the proportion of correctly solved tasks among those where the method did not abstain. On problems on which the technique does not abstain, it is identical to Pass@1.
    \item \emph{Overall Accuracy} is a comprehensive measure, incorporating abstention decision.
    \item \emph{Abstention Rate} represents the proportion of tasks in which the method chooses to abstain.
    \item \emph{Abstention F1-score} is a composite metric balancing abstention precision and recall.
\end{itemize}

\Cref{fig:abstention_lcb} summarizes abstention measures for the main configurations\footnote{We excluded Syntactic and OffByOne, as they showed no significant difference compared to plurality.}. Only Majority0.5 and \toolName achieve abstention rate comparable to the ground truth, with \toolName having higher abstention rate, and 8\% higher reliable accuracy on average. The key difference between Majority0.5 and \toolName is that Majority0.5 selects only high-confidence solutions, whereas \toolName can identify low-probability but correct answers. On LCB tasks with GPT-4o, \toolName correctly selected solutions for eight problems where Majority0.5 abstained, indicating the potential to reliably solve hard problems. Accordingly, \toolName achieves a 26\% higher abstention F1 score.

\begin{tcolorbox}[colback=gray!5, colframe=black!20, arc=2pt, boxrule=0.3mm, breakable, sharp corners, left=2pt, right=2pt, top=2pt, bottom=2pt]
\textbf{RQ2:} \toolName's reliability is, on average, 8\% higher than that of the method selecting only solutions with the probability $\geq 0.5$, while also being able to select correct solutions for tasks solved with only low probability, resulting in a 26\% higher abstention F1 score.
\end{tcolorbox}

\subsection{RQ3: Sample Consensus on Inexact Problems}

Forming consensus is challenging on inexact problems that admit multiple non-equivalent solutions. For example, for CEI problems, GPT-4o produces a correct solution with an average probability of only 0.1. \Cref{fig:abstention_cei} reports the abstention metrics for these models. Compared with the previous experiment, postconditions outperform plurality voting, likely because plurality is degraded by the presence of non-equivalent correct solutions. For Majority0.5, the models exhibited varied behavior on this benchmark. DeepSeek-V3's reliability under Majority0.5 is reduced because its output distribution contains more correlated errors. By contrast, most of Gemini's high-confidence solutions appear to be correct, yielding high Majority0.5 reliability, but at the cost of a significant drop in abstention F1 because it abstains on too many problems. In contrast, \toolName displays the most consistent performance, outperforming Majority0.5 by 14\% on average in reliable accuracy and by 26\% in abstention F1. Remarkably, \toolName maintains high reliability while still selecting low-confidence solutions with sampling probabilities as low as 0.07, as shown in \Cref{fig:enumsinv}.

\begin{tcolorbox}[colback=gray!5, colframe=black!20, arc=2pt, boxrule=0.3mm, breakable, sharp corners, left=2pt, right=2pt, top=2pt, bottom=2pt]
\textbf{RQ3:} On challenging inexact problems, \toolName showed consistent performance across models, achieving on average 14\% higher reliable accuracy than a method restricted to selecting high-confidence solutions (probability threshold 0.5), while increasing abstention F1 by 26\%.
\end{tcolorbox}
\vspace{-4mm}

\begin{figure*}[t]
    \centering
    \vspace{-3mm} 
    \begin{subfigure}{0.25\textwidth}
        \includegraphics[width=\linewidth]{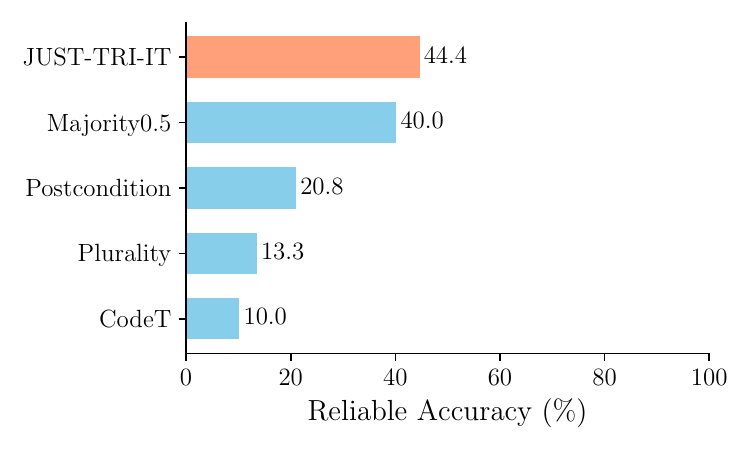}
        \label{fig:subfig_a}
    \end{subfigure}\hfill
    \begin{subfigure}{0.25\textwidth}
        \includegraphics[width=\linewidth]{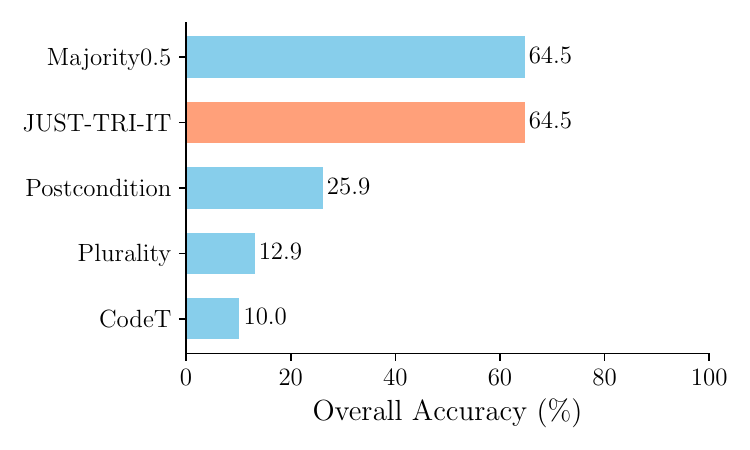}
        \label{fig:subfig_b}
    \end{subfigure}\hfill
    \begin{subfigure}{0.25\textwidth}
        \includegraphics[width=\linewidth]{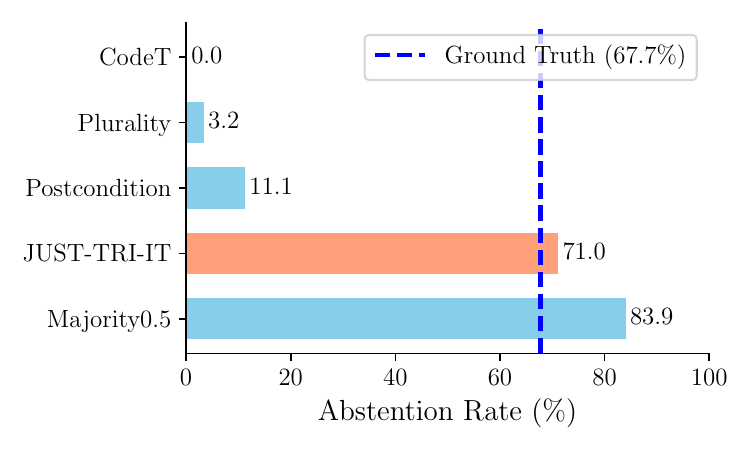}
        \label{fig:subfig_b}
    \end{subfigure}\hfill
    \begin{subfigure}{0.25\textwidth}
        \includegraphics[width=\linewidth]{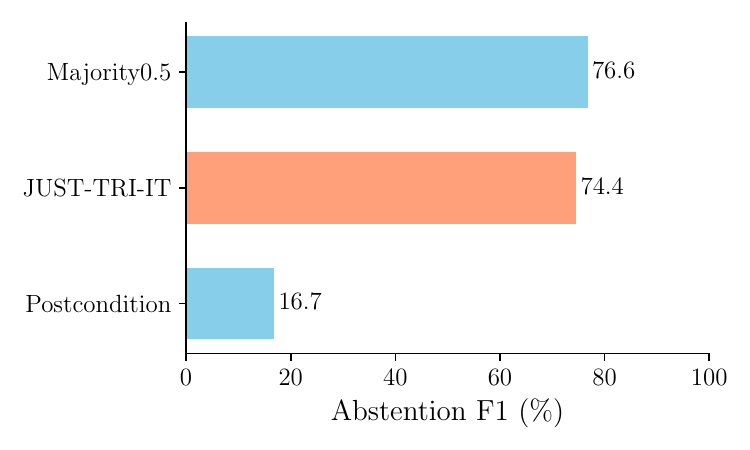}
        \label{fig:subfig_b}
    \end{subfigure}\\[-4mm]
    \begin{subfigure}{0.25\textwidth}
        \includegraphics[width=\linewidth]{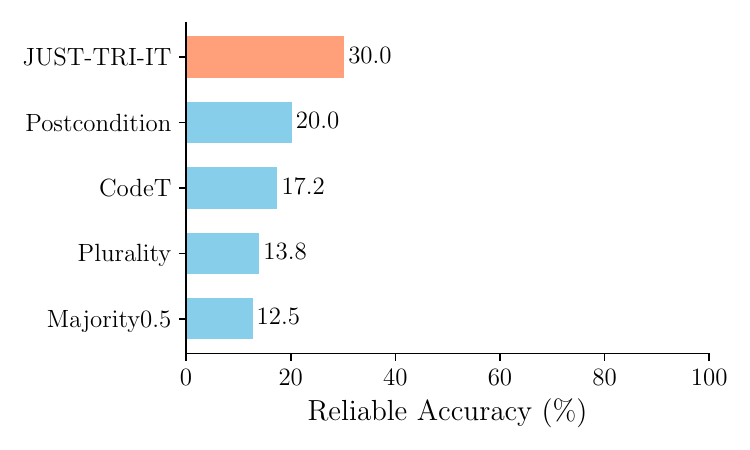}
        \label{fig:subfig_a}
    \end{subfigure}\hfill
    \begin{subfigure}{0.25\textwidth}
        \includegraphics[width=\linewidth]{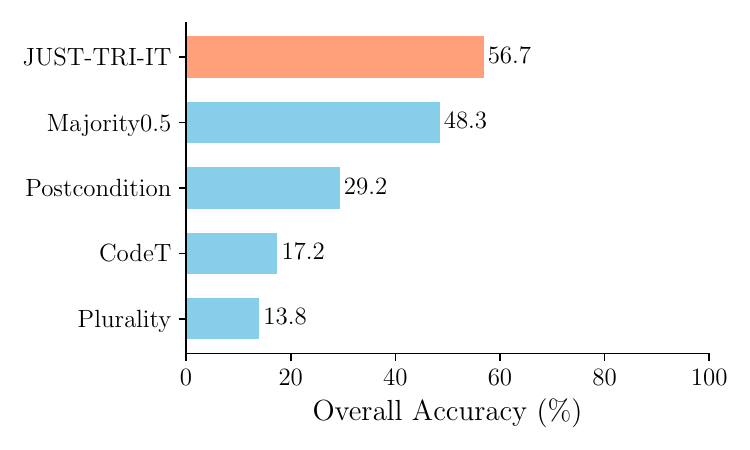}
        \label{fig:subfig_b}
    \end{subfigure}\hfill
    \begin{subfigure}{0.25\textwidth}
        \includegraphics[width=\linewidth]{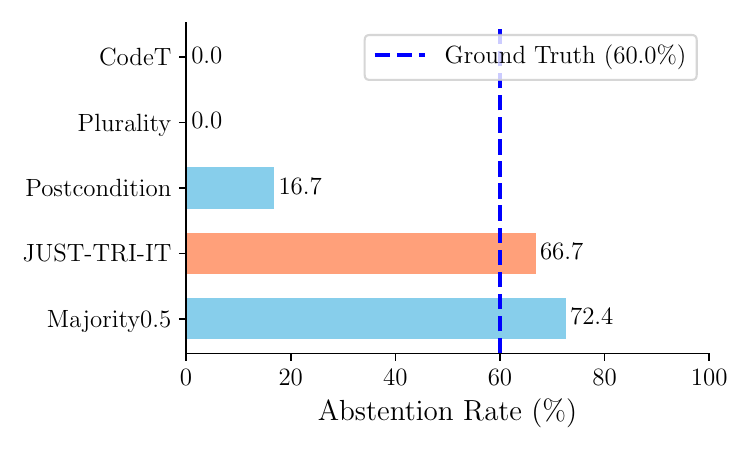}
        \label{fig:subfig_b}
    \end{subfigure}\hfill
    \begin{subfigure}{0.25\textwidth}
        \includegraphics[width=\linewidth]{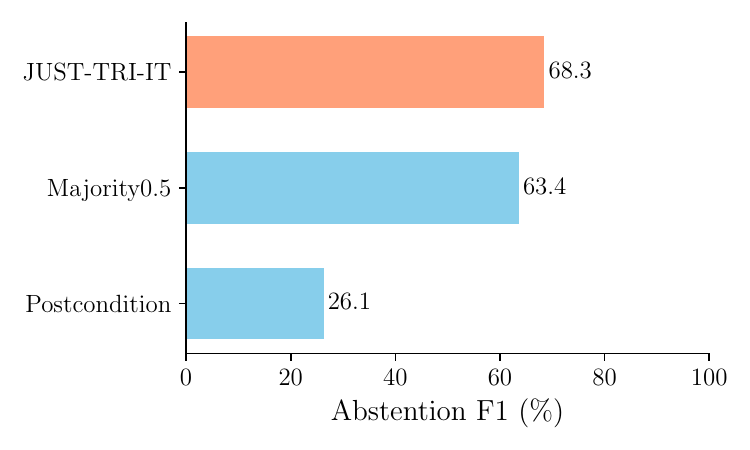}
        \label{fig:subfig_b}
    \end{subfigure}\\[-4mm]
    \begin{subfigure}{0.25\textwidth}
        \includegraphics[width=\linewidth]{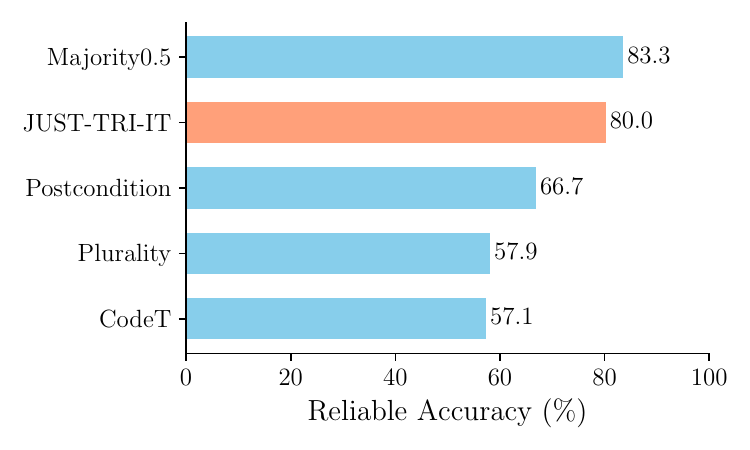}
        \label{fig:subfig_a}
    \end{subfigure}\hfill
    \begin{subfigure}{0.25\textwidth}
        \includegraphics[width=\linewidth]{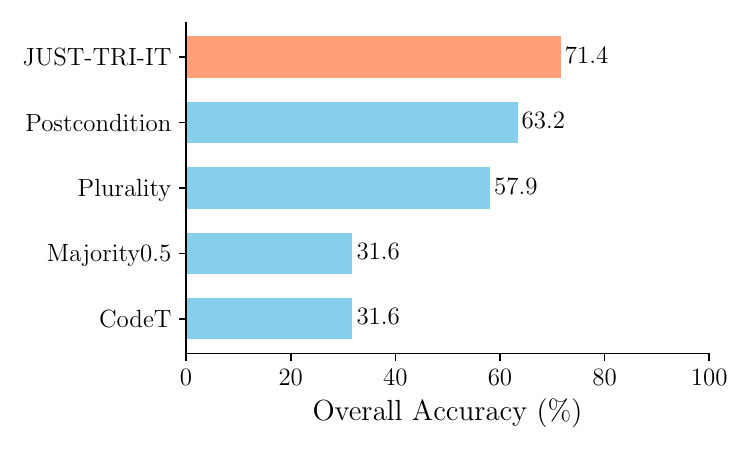}
        \label{fig:subfig_b}
    \end{subfigure}\hfill
    \begin{subfigure}{0.25\textwidth}
        \includegraphics[width=\linewidth]{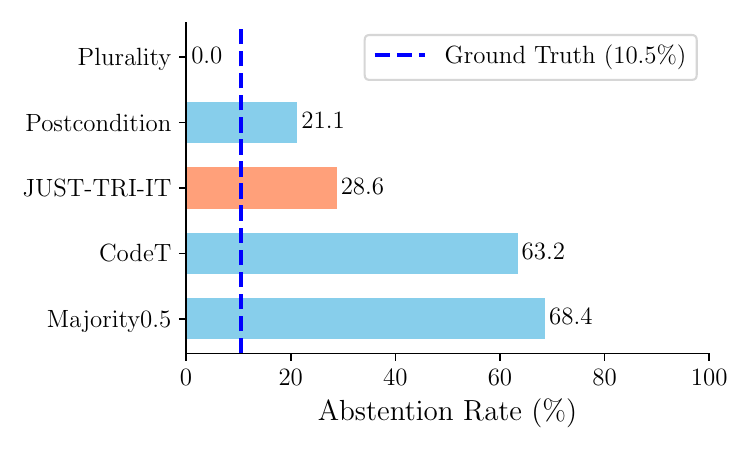}
        \label{fig:subfig_b}
    \end{subfigure}\hfill
    \begin{subfigure}{0.25\textwidth}
        \includegraphics[width=\linewidth]{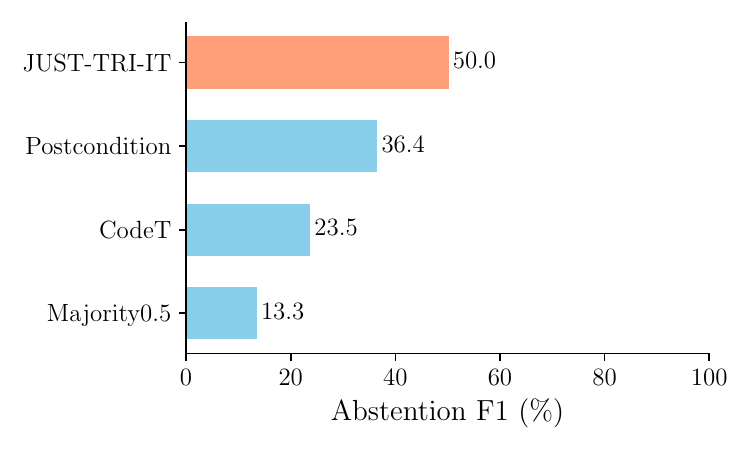}
        \label{fig:subfig_b}
    \end{subfigure}
    \vspace{-12mm}
    \caption{\label{fig:abstention_cei}Abstention measures on CEI with GPT-4o (top), DeepSeek-V3 (middle) and Gemini 2.5 Flash (bottom).}
    \vspace{-4mm}
\end{figure*}

\subsection{RQ4: Ablation Study}

\begin{wrapfigure}{r}{0.35\textwidth}
  \vspace{-8mm}
  \includegraphics[width=0.34\textwidth]{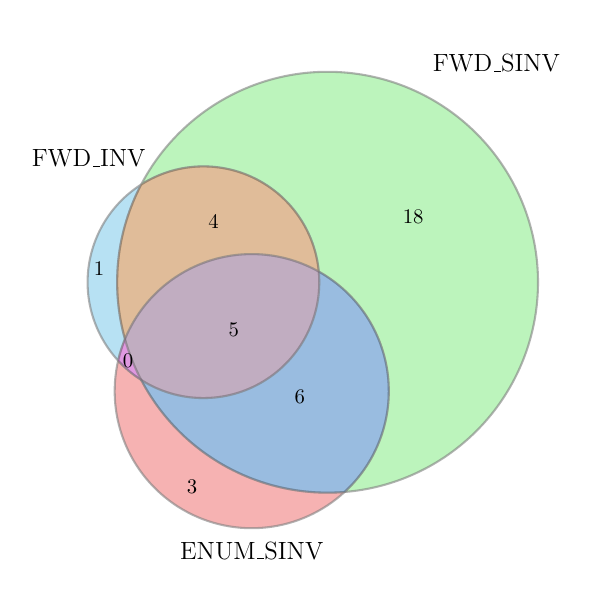}
  \vspace{-5mm}
  \caption{Contribution of triangulation schemes to selecting correct solutions.\label{fig:venn}}
  \vspace{-2mm}
\end{wrapfigure}

To understand how each triangulation scheme impacts \toolName's performance, we counted the number of correct selection decisions attributed to each scheme for the combination results of LCB and CEI. The results are shown in \Cref{fig:venn} for GPT-4o. FWD-SINV yielded the most correct selections, likely because most benchmark problems are exact, but do not specify one-to-one input–output mappings. Still, every method uniquely solved at least one problem.

Our theoretical analysis in \Cref{sec:theory} shows the importance of a bijection between equivalence classes of programs and their witnesses (\Cref{def:triangulation}). To confirm this empirically, we investigated mappings from programs to witnesses of generated Hoare-style postconditions, since they do not guarantee such a bijection. In particular, 15\% of generated postconditions for LCB that matched at least one program failed to discriminate between a correct and an incorrect sample. In 36\% of such situations, this led to the selection of an incorrect sample. This issue is illustrated in \Cref{fig:problem}.

Semantic triangulation prevents such situations by design. To investigate the impact of bijectiveness on \toolName, we removed right-hand-side conjuncts from FWD-INV, FWD-SINV and ENUM-SINV hyperproperties (e.g., $\forall (o, i_2), (o', \_)\in O'. ...$ from \Cref{prop:partial_fwd_inv}), breaking \Cref{def:triangulation}, and reran the benchmarks. As a result, the probability of correctness under agreement reduced by 23\% on average, which led to the drop of reliable accuracy by 16\%.

\begin{tcolorbox}[colback=gray!5, colframe=black!20, arc=2pt, boxrule=0.3mm, breakable, sharp corners, left=2pt, right=2pt, top=2pt, bottom=2pt]
\textbf{RQ4:} While each triangulation method contributed to the selection of correct solutions, FWD-SINV was the most effective. Postconditions often failed to differentiate between correct and incorrect solutions due to the lack of a bijective mapping ensured by semantic triangulation.
\end{tcolorbox}

\subsection{RQ5: Analysis of Failures}
\label{sec:failure_analysis}

\begin{wraptable}{r}{0.35\textwidth}
    \vspace{-4mm}
    \centering
    \small
    \setlength{\tabcolsep}{4pt}
    \caption{Average time and tokens.}
    \vspace{-3mm}
    \begin{tabular}{lrr}
        \toprule
        Method & Time(s) & Tokens \\
        \midrule
        Plurality     & 34   & 6973  \\
        Majority0.5   & 34   & 6973  \\
        CodeT         & 78   & 8283  \\
        Syntactic     & 61   & 14840 \\
        OffByOne      & 47   & 12556 \\
        Postcondition & 57   & 15943 \\
        FWD\_INV      & 54  & 28377 \\
        FWD\_SINV     & 109  & 31842 \\
        ENUM\_SINV    & 93 & 42038 \\
        \bottomrule
    \end{tabular}
    \label{tab:cost}
\end{wraptable}

To analyse semantic triangulation's failures to select a correct solution or abstain correctly, we used an LLM to classify them, and then manually inspected and corrected the classification results.

False positive errors (N2 \& N4) are the type of failures that affect reliability. They account for 34\% of all failure cases and fall into three categories: correlated errors between the solution to the transformed problem and the original problem (17\%), weak input coverage (15\%) and an incorrect solution to the original problem matched a solution to an incorrectly transformed problem (2\%). The first category was often caused by the fact that LLM solved the inverse problem by using a forward solution as a subcomponent. Potentially, such cases can be detected by analyzing the generated inverse solution for structural similarity to the forward solution.

False negative errors (N3 in \Cref{fig:abstention_matrix}) account for about 66\% of all failures and fall into two categories: the failure to correctly solve the transformed problem (51\%) and incorrect transformation (15\%). As mentioned in our theoretical analysis, the transformed problem should not be excessively difficult compared with the original one. However, since we cannot strictly control the difficulty of each transformed problem, some problems may become more challenging after transformation thus exceeding the model's capability. As for the second, the primary issue lies in the loss of critical information during problem transformation. For instance, ``the maximum value at any time'' in the original problem description may be oversimplified to ``the maximum value'' in the transformed one, leading the LLM to misinterpret it as ``the maximum value at the final time''. Crucially, these transformation errors did not compromise reliability, but merely led to extra abstentions.

\begin{tcolorbox}[colback=gray!5, colframe=black!20, arc=2pt, boxrule=0.3mm, breakable, sharp corners, left=2pt, right=2pt, top=2pt, bottom=2pt]
\textbf{RQ5:} Failures of semantic triangulation are dominated by false negatives (66\%), due to extra difficulty and information loss introduced by transformation, while false positives mainly arise from correlated errors and weak input coverage.
\end{tcolorbox}

\subsection{RQ6: Time and Token Cost}

\Cref{tab:cost} compares average time overhead and token usage among various methods. Since the latter six methods require generating additional artifacts beyond the original solutions and tests compared to the first three, they exhibit at least a twofold increase in token consumption. In general, the overhead is proportional to the number of additional artifacts that each method generates. In terms of time, three semantic triangulation methods are the most computationally expensive, with the majority of the time spent on property validity checking. Furthermore, the execution time exhibits an increasing trend from FWD\_INV to FWD\_SINV and ENUM\_SINV, which directly corresponds to the escalating complexity of hyperproperty checking across these three methods.

\begin{tcolorbox}[colback=gray!5, colframe=black!20, arc=2pt, boxrule=0.3mm, breakable, sharp corners, left=2pt, right=2pt, top=2pt, bottom=2pt]
\textbf{RQ6:} Triangulation methods increase code reliability at the expense of a 2.5$\times$ increase in time and 4.9$\times$ as many tokens as plurality. Compared with autoformalized specifications, triangulation only moderately increases the time by 1.5$\times$ and consumes 2.1$\times$ more tokens. 
\end{tcolorbox}

\section{Threats to Validity}
\label{sec:threats}

LiveCodeBench and CodeElo are challenging benchmarks widely used to evaluate the capabilities of state-of-the-art models. A potential threat to external validity is limited generalization beyond programming-contest problems that form the basis of these benchmarks. In particular, we model program behaviour as mathematical functions, which is suitable for, e.g., algorithm implementations or components of larger systems. For whole projects that maintain state and interact with users or external systems, a potentially suitable model would represent behavior as sets of execution traces or transition systems, capturing sequences of requests, side effects, and state updates. Extending semantic triangulation to such models --- for instance, by defining hyperproperties over traces rather than input-output pairs --- is a promising direction for future work. We evaluated \toolName on three models from different families, including a reasoning model, addressing generalizability.

Our problem transformations are automatically generated via LLM prompting. Although transformation is significantly easier than solving the original task --- often amounting to simple linguistic reformulation --- it can still introduce errors. However, our analysis in \Cref{sec:failure_analysis} shows that in the case of such mistakes, when a transformation is incorrect, triangulation usually simply fails to find agreement, leading to abstentions, and thus not compromising reliability. Only in 2\% of cases it led the selection of an unreliable program. A potential approach to mitigate this is to fine-tune a specialized problem-transformation model to ensure high transformation accuracy.

While our triangulation methods rely on partial problem inversion, not all problems are naturally inverted. For example, problems requiring implementing boolean predicates may have intractable input domain to enumerate for each outcome. Similarly, methods presented in this work may be ineffective for optimization problems, since it is often non-trivial to correctly invert a suboptimal solution. Specialized, domain-specific triangulation methods may address these limitations.

\section{Related Work}

\textit{Theoretical Analysis of LLMs:} To rigorously explain the mechanisms underlying semantic triangulation, we used a theoretical model of LLM hallucinations (\Cref{sec:theory}). There is no widely accepted framework for modeling LLMs. Classical approaches such as PAC learning~\cite{valiant2013probably} were designed for supervised learning with fixed, bounded-complexity concept classes, making them ill-suited for LLMs. Similarly, representing LLM hallucinations merely as a non-zero probability of error is inadequate: such a property is trivial, as it follows directly from fundamental computability limits of LLMs~\cite{suzuki2025hallucinations, xu2024hallucination}, and does not explain how or why specific error patterns arise. In this work, we employ a mathematical model inspired by work on spurious correlations~\cite{ye2024spurious}.

\textit{Reliability of LLM-Generated Code:} Approaches to improve the quality of LLM-generated code can be classified into methods that (1) control the model's input, e.g. prompt engineering, such as CoT~\cite{wei2022chain} and RAG~\cite{lewis2020retrieval}, and their enhancements~\cite{li2025structured}; (2) methods that control model inference, such as decoding constrained on grammar~\cite{park2024grammar} and types~\cite{mundler2025type}, and (3) those that control the model's output, the direction our work explores. Previous plurality-based techniques~\cite{shi2022natural,fan2024oracle} focus on selection of correct samples, but do not tackle abstention~\cite{kim2025correlated} when no correct programs are present in the sample. Algo~\cite{zhang2023algo} generates a slow reference solution using enumerative search to check against an optimized solution, which can be considered as a special case of semantic triangulation. Another category of works, such as Cycles~\cite{ding2024cycle}, focused on self-refinement of generated code. Using triangulation as a refinement signal is a promising future direction.

\textit{LLM-Based Testing \& Verification:} LLMs created opportunities to automate testing and verification, for example, generate Dafny programs with specifications and assertions to automate proofs~\cite{misu2024towards,mugnier2025laurel}, and infer postconditions from natural language~\cite{endres2024can,ma2024specgen}. The limitation of these methods for detecting errors in generated code is that, first, LLMs can make the same errors in tests and specifications as in the program. Second, LLMs can be less effective in directly predicting expected outputs (for tests) than in generating code~\cite{gao2023pal,bouras2025hoareprompt}. Third, specifications are usually written in low-resource languages, which increases hallucinations~\cite{li2025language}. Finally, studies of real-world Eiffel contracts suggest that even humans are as prone to errors in writing specifications as in implementing code~\cite{ciupa2011number}. To address it, CodeT~\cite{chen2022codet} employs a dual execution agreement between sampled programs and tests. Similarly, Huang et al.~\cite{huang2023enhancing} checked consistency between code, specifications and tests. \toolName significantly outperformed these methods in our experiments, and pinpointed a fundamental limitation of naive inference of formal specifications --- the lack of a bijective mapping between equivalence classes of solutions and specifications.

\textit{Confidence \& Uncertainty:} Confidence and uncertainty measures~\cite{spiess2024calibration,ye2024benchmarking} can help filter unreliable LLM outputs, such as incorrect programs. In our experiments, majority voting with a probability threshold of 0.5 --- used here as a simple uncertainty proxy --- proved dataset-dependent: it was too permissive on LiveCodeBench and too conservative on CodeElo. Valentin et al.~\cite{valentin2025estimating} proposed a theoretically grounded estimator of correctness as an uncertainty measure, but it relies on the assumption that ``if two LLM-generated programs behave differently on the same input, at least one must be incorrect''. This assumption fails for inexact problems. Our semantic triangulation approach overcomes this limitation via ENUM-SINV (see \Cref{prop:full_enum_sinv}). ValTest~\cite{taherkhani2024valtest} uses a confidence measure (semantic
entropy) to evaluate correctness of generated tests. In contrast, our work targets an orthogonal goal: generating plausibility witnesses whose agreement with the target program increases the likelihood of correctness. Standard confidence and uncertainty measures, including those used by ValTest, can be incorporated to further strengthen our witnesses.

\textit{Metamorphic Testing:} Metamorphic testing (MT)~\cite{chen2018metamorphic} evaluates properties across related programs or executions; it is a very general framework that does not prescribe the exact properties and how they are tested. In the LLM setting, Drowzee~\cite{chen2018metamorphic} uses logic programming to encode metamorphic relations for detecting hallucinations in general-knowledge question answering. Although triangulation is a special case of MT, it significantly differs from typical MT approaches in its formal requirements (\Cref{def:triangulation}), dissociative transformations and bijection-inducing properties. In our evaluation, \toolName significantly outperformed a metamorphic approach by Wang et al.~\cite{wang2024validating} that applies non-dissociative transformations (paraphrasing).

\textit{Dual Learning:} Dual learning~\cite{he2016dual,dognin2020dualtkb} exploits intrinsic task symmetry, such as English-French translation, to form a feedback mechanism for enhanced learning. Our work differs in that we use problem inversion to decorrelate model errors, rather than providing reinforcement learning feedback, and we identified and theoretically justified requirements for effectively reducing hallucinations. These ideas can be potentially applied to dual learning for code models.



\section{Conclusion}

Semantic triangulation provides a principled, black-box way to raise confidence in LLM-generated code by checking cross-task solution consistency. Implemented via partial inversion, answer enumeration, and problem decomposition, it improves reliability measures on LiveCodeBench and CodeElo --- achieving correct selection even at low sampling probabilities and outperforming auto-formalized specifications. These results highlight triangulation’s value as a robust consensus mechanism for hard-to-solve, or multi-solution programming tasks.



\section{Acknowledgements}

We thank Marcel B{\"o}hme, Dimitrios Bouras, and Haoxiang Jia for insightful feedback.

\bibliographystyle{ACM-Reference-Format}
\bibliography{refs}

\pagebreak
\appendix

\section{Proof for Proposition~\ref{prop:triangulation_vs_plurality_easy}}
\label{sec:proof_prop_general}

\begin{proof}
Assume semantic triangulation $(\tau, \phi)$ is defined as in the proof of \Cref{prop:triangulation_vs_plurality} and in accordance with \Cref{assume:easy_pattern}. Continuing the derivation for $\mathds{E}_{d\sim\mathrm{Unif}([d]_\simeq)}(\Delta \mathds{P})$ in \Cref{eq:delta_elaborated},
\begin{eqnarray*}
\mathds{E}_{d\sim\mathrm{Unif}([d]_\simeq)} \left[\frac{\pi_c \pi_{\sigma(c)}}{\sum\limits_{i}\pi_i \pi_{\sigma(i)}} - \frac{\pi_c^2}{\sum\limits_{i}\pi_i^2} \right]
&=& \frac{1}{|[d]_{\simeq}|}\sum_{c\in C_d}\left(\frac{\pi_c \pi_{\sigma(c)}}{\sum\limits_{i}\pi_i \pi_{\sigma(i)}} - \frac{\pi_c^2}{\sum\limits_{i}\pi_i^2} \right)\\
= \frac{1}{|[d]_{\simeq}|}\left(\frac{\sum_{c\in C_d}\pi_c \pi_{\sigma(c)}}{\sum\limits_{i}\pi_i \pi_{\sigma(i)}} - \frac{\sum_{c\in C_d}\pi_c^2}{\sum\limits_{i}\pi_i^2}\right)
&=& \frac{1}{|[d]_{\simeq}|}\left(\frac{1}{1+\frac{\sum_{b\in B_d}\pi_b \pi_{\sigma(b)}}{\sum_{c\in C_d}\pi_c \pi_{\sigma(c)}}} - \frac{1}{1+\frac{\sum_{b\in B_d}\pi_b^2}{\sum_{c\in C_d}\pi_c^2}}\right) \\
\end{eqnarray*}
In order that this quantity gets positive, it suffices to have
\begin{equation*}
\frac{\sum_{b\in B_d}\pi_b \pi_{\sigma(b)}}{\sum_{c\in C_d}\pi_c \pi_{\sigma(c)}}
<\frac{\sum_{b\in B_d}\pi_b^2}{\sum_{c\in C_d}\pi_c^2}
\end{equation*}
in turn,
\begin{equation*}
\frac{\sum_{b\in B_d}\pi_b \pi_{\sigma(b)}}{\sum_{b\in B_d}\pi_b^2}
<\frac{\sum_{c\in C_d}\pi_c \pi_{\sigma(c)}}{\sum_{c\in C_d}\pi_c^2}
\end{equation*}
yet in turn,
\begin{equation*}
\frac{\sum_{b\in B_d}(\pi_b-\pi_{\sigma(b)})^2}{\sum_{b\in B_d}\pi_b^2}
>\frac{\sum_{c\in C_d}(\pi_c-\pi_{\sigma(c)})^2}{\sum_{c\in C_d}\pi_c^2}
\end{equation*}
having in mind an equality like $(a^2+b^2+c^2)-(ab+bc+ca)=\frac{1}{2}((a-b)^2+(b-c)^2+(c-a)^2)$. This is exactly the requirement of a dissociative pattern from \Cref{assume:easy_pattern}.
\end{proof}

\section{Example for Figure~\ref{fig:error_handling}}
\label{sec:example_error_handling}

\begin{figure}[h]
{\scriptsize
\begin{mathpar}
    \inferrule*[right=exec]
    {\langle \Sigma, t \rangle \Rightarrow \overline{s}}
    {\langle \Sigma, p(t) \rangle \Rightarrow \mathtt{exec}(p, \overline{s})}

    \inferrule*[right=or]
    {\langle \Sigma, t_1 \rangle \Rightarrow \overline{s}_1\\
    \langle \Sigma, t_2 \rangle \Rightarrow \overline{s}_2}
    {\langle \Sigma, t_1 \vee t_2 \rangle \Rightarrow \overline{s}_1 \vee \overline{s}_2}

    \inferrule*[right=eq]
    {\langle \Sigma, t_1 \rangle \Rightarrow \overline{s}_1\\
    \langle \Sigma, t_2 \rangle \Rightarrow \overline{s}_2}
    {\langle \Sigma, t_1 = t_2 \rangle \Rightarrow \overline{s}_1 = \overline{s}_2}

    \inferrule*[right=in]
    {\langle \Sigma, v \rangle \Rightarrow \overline{s}\\
    \langle \Sigma, t \rangle \Rightarrow V}
    {\langle \Sigma, v\in t \rangle \Rightarrow \overline{s} \in V}

\end{mathpar}
}
  \caption{Semantics of handling normal values during property checking, which complements \Cref{fig:error_handling}, where $\Rightarrow$ --- the evaluation relation, $\Sigma$ --- a variable assignment, $t$ --- a term, $x$ --- a variable, $\overline{s}$ --- a non-special value, $V$ --- a finite set of non-special values, $p$ --- a program, $\mathtt{exec}$ --- an execution procedure.\label{fig:normal_handling}}
\end{figure}

Here, we present an example derivation for the ENUM-SINV hyperproperty under the Figure~\ref{fig:error_handling} semantics, with additional standard rules in \Cref{fig:normal_handling} for completeness, for the programs in \Cref{fig:inexact_challenge}. Assume we have the following programs: \( p \triangleq i \mapsto \{i{+}1,\,i{+}2\}\),\quad \( q \triangleq o \mapsto \{o{-}1,\,o{-}2\}\). The property consists of two conjunctive clauses, per input:
\[
\begin{aligned}
L_1(i):&\quad \forall\, o \in p(i).\ \ i \in q(o),\\
L_2(o):&\quad \forall\, i' \in q(o).\ \ o \in p(i').
\end{aligned}
\]
Executions may yield special values $\U$ (invalid/undefined), $\A$ (angelic), and $\D$ (demonic crash/failure) with precedence $\D>\A>\U$. The rule \textsc{forall} uses a threshold $T$ to tolerate a bounded number of $\A$ branches; if $|\{\A_i\}|<T$ it returns $\True$, otherwise it fails.

The following example derivation is for $L_1(-1)$:

\begin{figure}[h]
{\scriptsize
\begin{mathpar}
  \inferrule*[right=\textsc{forall}]
  {{\judg{ p(-1)}{\{0,1\}}}
 \and
  \inferrule*[right=\textsc{in}]
    {\judg{-1}{-1} \\ \judg{ q(0)}{\{-1,-2\}}}
    {\judg{-1\in q(0)}{\True}}
    \and
  \inferrule*[right=\textsc{in}]
    {\judg{-1}{-1} \\ \judg{ q(1)}{\{0,-1\}}}
    {\judg{-1\in q(1)}{\True}}}
  {\judg{\forall\, o \in p(-1).\ {-1} \in q(o)}{\True}}
\end{mathpar}
}
\end{figure}

Imagine now that the LLM decided that it is intractable to enumerate all input values for the set-valued inverse function, and instead produces only the largest such number, i.e. \( q \triangleq o \mapsto \{o{-}1\}^\ast\). In this case, the derivation will change as follows:

\begin{figure}[h]
{\scriptsize
\begin{mathpar}
  \inferrule*[right=\textsc{forall}]
  {{...}
 \and
  \inferrule*[right=\textsc{in\_inf}]
    {\judg{-1}{-1} \\ \judg{ q(0)}{\{-1\}^\ast}}
    {\judg{-1\in q(0)}{\True}}
    \and
  \inferrule*[right=\textsc{in\_inf\_a}]
    {\judg{-1}{-1} \\ \judg{ q(1)}{\{0\}^\ast}}
    {\judg{-1\in q(1)}{\mathcal{A}}}}
  {\judg{\forall\, o \in p(-1).\ {-1} \in q(o)}{\False}}
\end{mathpar}
}
\end{figure}

\noindent That is, the property check will fail, because the number of angelic values $\mathcal{A}$ in \textsc{forall} rule exceeds our threshold $\frac{1}{3}$, indicating insufficient evidence of validity in the presence of intractable enumeration.

\section{Choosing Hyperparameters}
\label{sec:entropy_stabilize}

\paragraph{Model Temperature Setting} Essentially, as the model’s temperature approaches 0, its outputs for the same prompt become highly concentrated, yielding more conservative and repetitive samples. Raising the temperature increases sampling variability, which can produce more creative outputs (at the cost of greater randomness), a trade-off that is often necessary for challenging inputs~\cite{zhu2023hotcoldadaptivetemperature}. Since the two datasets we use (LCB and CEI) consist of recent competition problems and are quite challenging (as also evident in \Cref{fig:prob_correct_dist} with a high proportion of problems that can't be solved over 100 trials), we need the model to explore a broader range of possibilities. Thus, the temperature was set to a relatively high value, 1.0, which is usually the default temperature for API calls.

\begin{wrapfigure}{r}{0.45\textwidth}
  \includegraphics[width=0.45\textwidth]{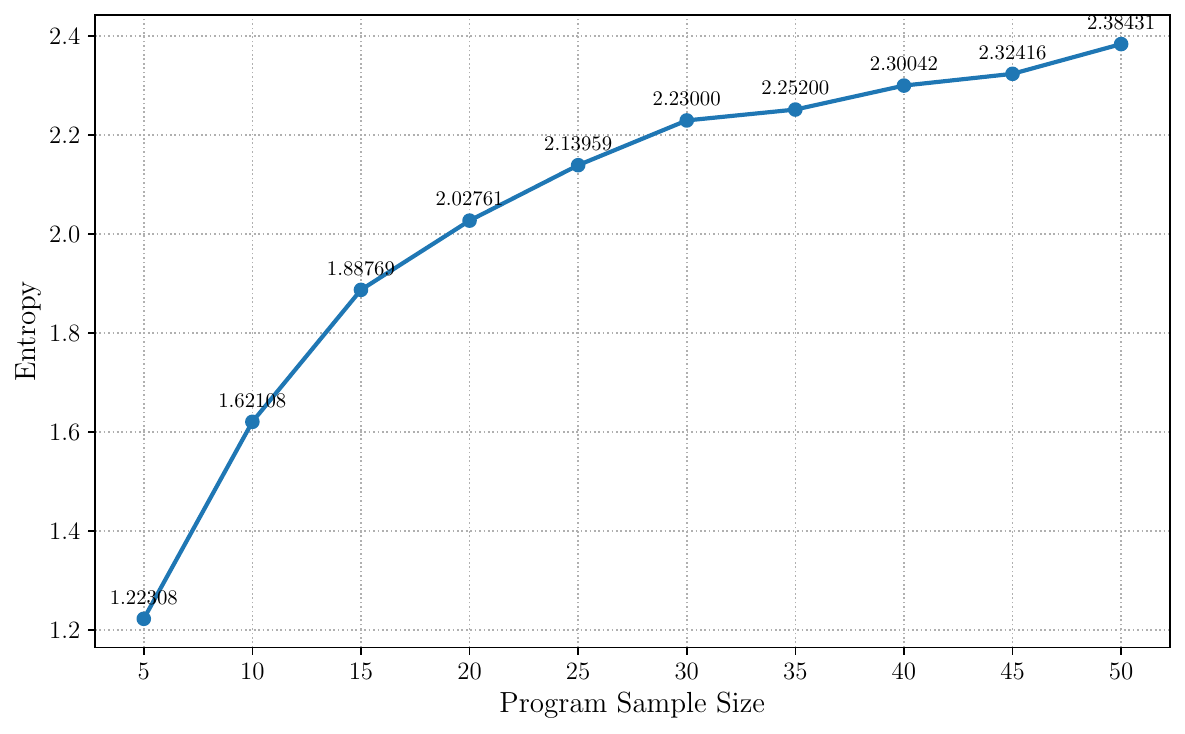}
  \caption{\label{fig:entropy}Semantic entropy of distribution of program equivalence classes with sample size for GPT-4o across 175 LCB problems.} 
  \vspace{-8mm}
\end{wrapfigure}

\paragraph{Sample Size Setting} As noted earlier, the plausibility witnesses underlying Plurality (and its special case, Majority0.5) that we compare against are derived from the semantic equivalence relation over programs. Therefore, to enable a fair evaluation, when choosing the sample size, we require the distribution over program equivalence classes to exhibit statistical stability. To assess this stability, we use semantic entropy~\cite{kuhn2023semantic}, defined as follows:

\vspace{2mm}
\noindent\ \ \ $
\mathrm{SE}\big(m_{\equiv}(\,\cdot \mid x)\big) \mathrel{\triangleq}
- \sum_{y} m_{\equiv}(y \mid x)\,\log m_{\equiv}(y \mid x),
$
\vspace{2mm}

\noindent where $m_{\equiv}$ is the conditional distribution over equivalence classes.

Varying the sample size from 5 to 50 in steps of 5, we show in \Cref{fig:entropy} the resulting entropy change on the 175 LCB problems using GPT-4o (temperature=1.0). It is observed that the entropy increases rapidly from 1.22 (N=5) to 2.23 (N=30), after which the growth becomes markedly slower. Therefore, as a practical trade-off between precision and cost, we select a sample size of 30.

\section{RQ1 Result for DeepSeek-V3 and Gemini 2.5 Flash}
\label{sec:deepseek-V3}

The figures below shows the probability of correctness under agreement for various methods with DeepSeek-V3 and Gemini 2.5 Flash, mirroring the trends observed with GPT-4o, albeit with a smaller gap.

\begin{figure}[h]
    \begin{subfigure}[t]{0.49\textwidth}
    \includegraphics[width=\textwidth]{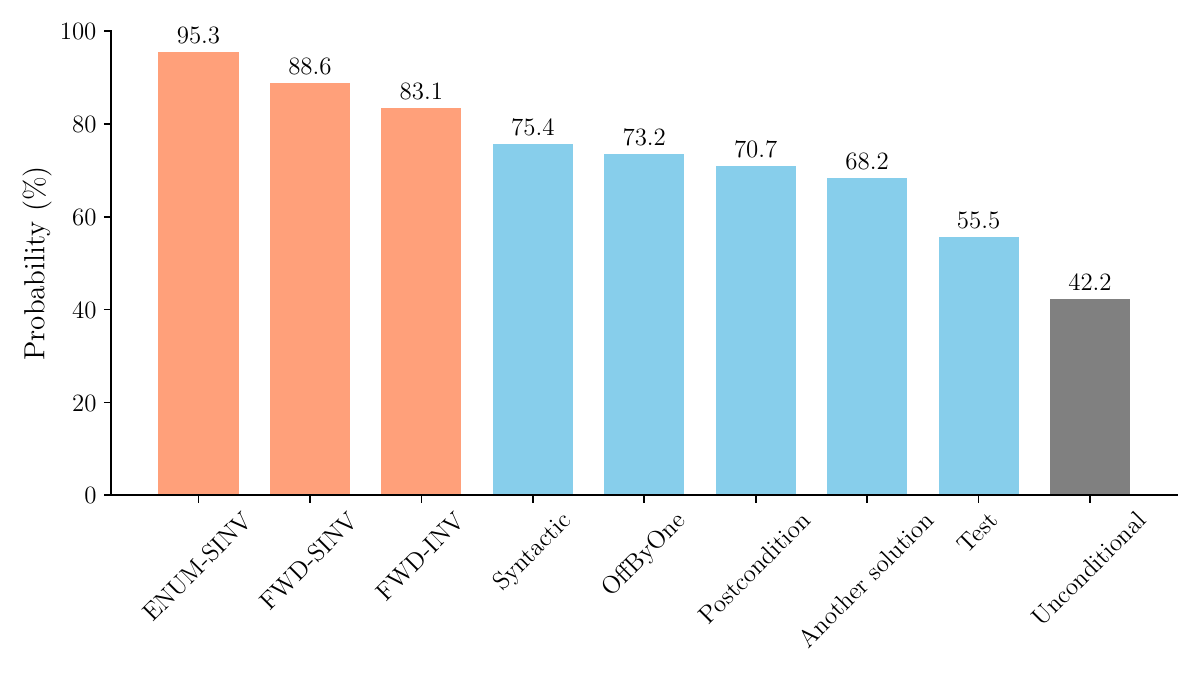}    
    \end{subfigure}%
    \hfill
    \begin{subfigure}[t]{0.49\textwidth} 
    \includegraphics[width=\textwidth]{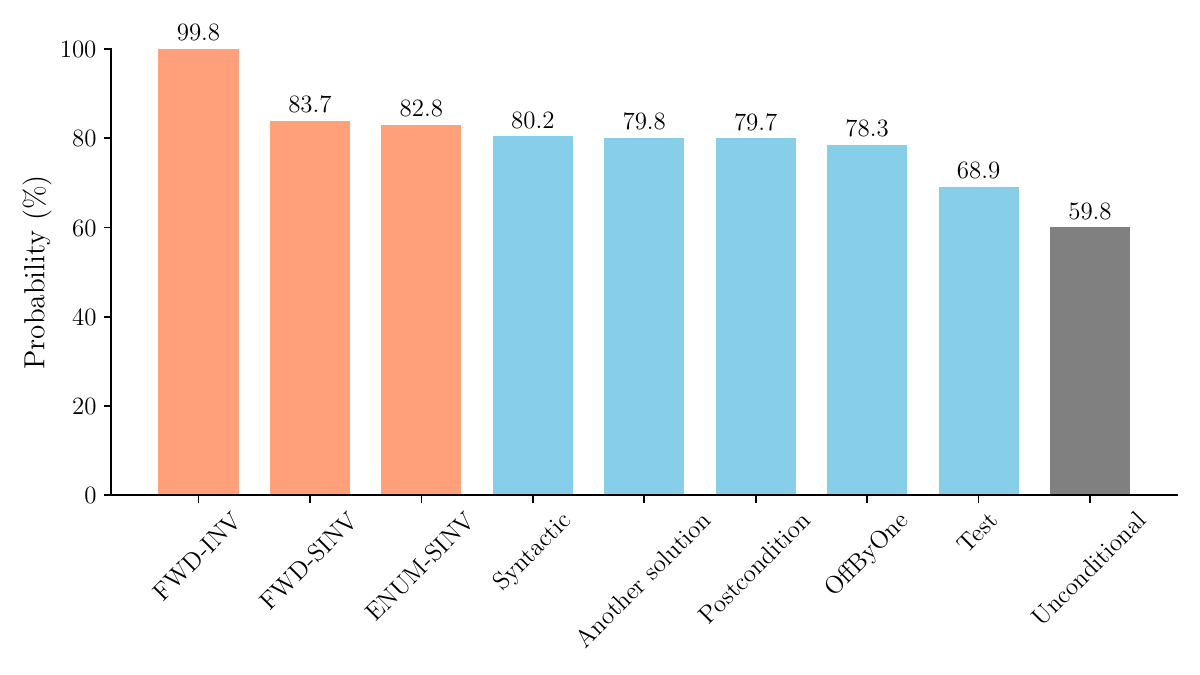}   
    \end{subfigure}%
  \caption{\label{fig:conditional_prob_deepseek}The probability of DeepSeek-V3 (left) and Gemini 2.5 Flash (right) sample correctness under agreement for different methods across 175 LCB problems. ``Unconditional'' refers to the probability of correctness irrespective of an agreement.} 
\end{figure}

\section{Motivating Example Details}
\label{sec:motivating_details}

The motivating example in \Cref{fig:enumsinv} was simplified for presentation purpose. Here, we provide complete problem descriptions, their transformations and example solutions. Note that for this example \toolName also applied STREAM meta-rule (\Cref{prop:stream}) on top of ENUM-SINV.

{\scriptsize
\begin{tcolorbox}[mybox,breakable,title={Original problem 1999D from CodeElo}]
\begin{minted}[escapeinside=||]{python}
def slavics_exam(t: int, test_cases: list[tuple[str, str]]) -> list[Union[str,tuple[str,str]]]
\end{minted}
\begin{minted}[breaklines]{Markdown}
Slavic has a very tough exam and needs your help in order to pass it. Here is the question he is struggling with:

There exists a string $s$, which consists of lowercase English letters and possibly zero or more "?".

Slavic is asked to change each "?" to a lowercase English letter such that string $t$ becomes a subsequence (not necessarily continuous) of the string $s$.

Output any such string, or say that it is impossible in case no string that respects the conditions exists.

# Input:
The first line contains a single integer $T$ ($1 \leq T \leq 10^4$) — the number of test cases.

The first line of each test case contains a single string $s$ ($1 \leq |s| \leq 2 \cdot 10^5$, and $s$ consists only of lowercase English letters and "?"-s)  – the original string you have.

The second line of each test case contains a single string $t$ ($1 \leq |t| \leq |s|$, and $t$ consists only of lowercase English letters)  – the string that should be a subsequence of string $s$.

The sum of $|s|$ over all test cases doesn't exceed $2 \cdot 10^5$, where $|x|$ denotes the length of the string $x$.

# Output:
For each test case, if no such string exists as described in the statement, output "NO" (without quotes).

Otherwise, output "YES" (without quotes). Then, output one line — the string that respects all conditions.

You can output "YES" and "NO" in any case (for example, strings "yEs", "yes", and "Yes" will be recognized as a positive response).

If multiple answers are possible, you can output any of them.

# Example input:
5
?????
xbx
ab??e
abcde
ayy?x
a
ab??e
dac
paiu
mom

# Example output:
YES
xabax
YES
abcde
YES
ayyyx
NO
NO
\end{minted}
\end{tcolorbox}
}

{\scriptsize
\begin{tcolorbox}[mybox,breakable,title={Dominant error (probability 0.23)}]
\begin{minted}{python}
def slavics_exam(t: int, test_cases: List[Tuple[str, str]]) -> List[Union[str, Tuple[str, str]]]:
    def is_subsequence(sub: str, full: str) -> bool:
        """Check if sub is a subsequence of full."""
        it = iter(full)
        return all(char in it for char in sub)

    results = []
    for s, t in test_cases:
        if len(t) > len(s):
            results.append("NO")
            continue

        # Attempt to replace '?' in s to make t a subsequence
        modified_s = []
        t_idx = 0
        for char in s:
            if char == '?':
                if t_idx < len(t):
                    modified_s.append(t[t_idx])
                    t_idx += 1
                else:
                    modified_s.append('a') # Any character when t is already matched
            else:
                modified_s.append(char)
        
        modified_s = ''.join(modified_s)

        # Check if t is now a subsequence of modified_s
        if is_subsequence(t, modified_s):
            results.append(("YES", modified_s))
        else:
            results.append("NO")

    return results
\end{minted}
\end{tcolorbox}
}

{\scriptsize
\begin{tcolorbox}[mybox,breakable,title={Low-probability (0.07) correct solution}]
\begin{minted}{python}
def slavics_exam(t: int, test_cases: List[Tuple[str, str]]) -> List[Union[str, Tuple[str, str]]]:
    def can_form_subsequence(s: str, t: str) -> str:
        # Fill '?' in `s` such that `t` can be a subsequence of `s`
        s_list = list(s)
        t_index = 0
        t_len = len(t)
        
        for i in range(len(s_list)):
            if t_index < t_len and s_list[i] == t[t_index]:
                # Advance t_index if matches
                t_index += 1
            elif s_list[i] == '?':
                # Replace '?' with a character from `t` if needed
                if t_index < t_len:
                    s_list[i] = t[t_index]
                    t_index += 1
                else:
                    # Replace remaining '?' with 'a' or any other letter
                    s_list[i] = 'a'
        
        if t_index == t_len:
            return "YES", "".join(s_list)
        else:
            return "NO"
    
    results = []
    for s, t in test_cases:
        if not(1 <= len(s) <= 2 * 10**5) or not(1 <= len(t) <= len(s)):
            raise ValueError('Invalid input')
        
        result = can_form_subsequence(s, t)
        results.append(result)
    
    return results
\end{minted}
\end{tcolorbox}
}

{\scriptsize
\begin{tcolorbox}[mybox,breakable,title={Transformed pointwise problem}]
\begin{minted}[escapeinside=||]{python}
def slavics_exam_pointwise_unpacked(s: str, t: str) -> Union[str, tuple[str, str]]
\end{minted}
\begin{minted}[breaklines]{Markdown}
Slavic has a very tough exam and needs your help in order to pass it. Here is the question he is struggling with:

There exists a string $s$, which consists of lowercase English letters and possibly zero or more "?".

Slavic is asked to change each "?" to a lowercase English letter such that another string becomes a subsequence (not necessarily continuous) of the string $s$.

Output any such string, or say that it is impossible in case no string that respects the conditions exists.

# Input:
The input consists of two parameters:

The first parameter contains a single string $s$ ($1 \leq |s| \leq 2 \cdot 10^5$, and $s$ consists only of lowercase English letters and "?"-s) — the original string you have.

The second parameter contains a single string $t$ ($1 \leq |t| \leq |s|$, and $t$ consists only of lowercase English letters) — the string that should be a subsequence of string $s$.

# Output:
For the parameters, if no such string exists as described in the statement, output "NO" (without quotes).

Otherwise, output "YES" (without quotes). Then, output one line — the string that respects all conditions.

You can output "YES" and "NO" in any case (for example, strings "yEs", "yes", and "Yes" will be recognized as a positive response).

If multiple answers are possible, you can output any of them.
\end{minted}
\end{tcolorbox}
}

{\scriptsize
\begin{tcolorbox}[mybox,breakable,title={Answer enumeration problem}]
\begin{minted}[escapeinside=||]{python}
def slavics_exam_pointwise_unpacked_enum(s: str, t: str) -> list[Union[str, tuple[str, str]]]
\end{minted}
\begin{minted}[breaklines]{Markdown}
Slavic has a very tough exam and needs your help in order to pass it. Here is the question he is struggling with:

There exists a string $s$, which consists of lowercase English letters and possibly zero or more "?".

Slavic is asked to change each "?" to a lowercase English letter such that another string becomes a subsequence (not necessarily continuous) of the string $s$.

Output all such strings, along with "YES" if at least one solution is found, or say that it is impossible in case no string that respects the conditions exists.

# Input:
The input consists of two parameters:

The first parameter contains a single string $s$ ($1 \leq |s| \leq 2 \cdot 10^5$, and $s$ consists only of lowercase English letters and "?"-s) — the original string you have.

The second parameter contains a single string $t$ ($1 \leq |t| \leq |s|$, and $t$ consists only of lowercase English letters) — the string that should be a subsequence of string $s$.

# Output:
For the parameters, output a list of all possible outputs as described in the statement: either "NO" if no such string exists, or one or more tuples containing "YES" (in any case, for example, strings "yEs", "yes", and "Yes" will be recognized as positive responses) followed by a valid string that respects all conditions.

If multiple answers are possible, all of them must be included in the list.
\end{minted}
\end{tcolorbox}
}

{\scriptsize
\begin{tcolorbox}[mybox,breakable,title={ENUM solution}]
\begin{minted}{python}
def slavics_exam_simp_single_unpacked_enum(s: str, t: str) -> List[Union[str, Tuple[str, str]]]:
    if not s or not t or len(t) > len(s):
        raise ValueError('Invalid input')

    def is_subsequence(s: str, t: str) -> bool:
        """Check if t is a subsequence of s."""
        t_index = 0
        for char in s:
            if t_index < len(t) and char == t[t_index]:
                t_index += 1
        return t_index == len(t)

    results = []
    queue = [(s, t, 0)]
    
    while queue:
        current_s, target_t, index = queue.pop(0)
        
        # If '?' is found, try all possible replacements
        if index < len(current_s):
            if current_s[index] == '?':
                for replacement in 'abcdefghijklmnopqrstuvwxyz':
                    new_string = current_s[:index] + replacement + current_s[index+1:]
                    queue.append((new_string, target_t, index + 1))
            else:
                queue.append((current_s, target_t, index + 1))
        else:
            if is_subsequence(current_s, target_t):
                results.append(('YES', current_s))

    if not results:
        return ['NO']
    
    return results
\end{minted}
\end{tcolorbox}
}

{\scriptsize
\begin{tcolorbox}[mybox,breakable,title={Set-valued partial inverse problem w.r.t. NO}]
\begin{minted}[escapeinside=||]{python}
def slavics_exam_pointwise_unpacked_sinv_no(s: str) -> list[str]
\end{minted}
\begin{minted}[breaklines]{Markdown}
Slavic has a challenging exam and needs your help to pass it. Here is a new question he needs assistance with:

There exists a string $s$ which is composed of lowercase English letters and may include zero or more "?". The task is to replace each "?" with a lowercase English letter such that a second given string, $t$, is not a subsequence (not necessarily continuous) of string $s$.

You need to identify the set of all possible strings $t$ that would result in an output of "NO" within reasonable bounds.

# Input:
The input consists of one parameter:

The parameter contains a single string $s$ ($1 \leq |s| \leq 2 \cdot 10^5$, and $s$ consists only of lowercase English letters and "?"-s) — the original string you have.

# Output:
Output a list of all strings $t$ ($1 \leq |t| \leq |s|$) which do not exist as a subsequence of string $s$ after replacing every "?" in $s$ with appropriate lowercase English letters. If no such strings exist within the reasonable bound, return an empty list.

The reasonable bound should be considered as any string $t$ with a length up to that of $s$.

The output list must include all possible strings that satisfy the condition, respecting the bounds.
\end{minted}
\end{tcolorbox}
}

{\scriptsize
\begin{tcolorbox}[mybox,breakable,title={Set-valued partial inverse problem w.r.t. Yes}]
\begin{minted}[escapeinside=||]{python}
def slavics_exam_pointwise_unpacked_sinv_yes(s_with_replaced_marks: str, s: str) -> list[str]
\end{minted}
\begin{minted}[breaklines]{Markdown}
Slavic has a very tough exam and needs your help. Here is the modified question he is struggling with:

There exists a string \( s \), which consists of lowercase English letters and possibly zero or more "?".

You are given a string \( s_{\text{with\_replaced\_marks}} \), which represents \( s \) with all the "?" replaced by lowercase English letters such that a string could become a subsequence of this \( s_{\text{with\_replaced\_marks}} \).

Your task is to determine all the possible values of string \( t \) such that \( t \) is a subsequence of \( s_{\text{with\_replaced\_marks}} \).

# Input:
The input consists of two parameters:

The first parameter contains a single string \( s_{\text{with\_replaced\_marks}} \) (same length and character restrictions as \( s \)) — the original string with "?" replaced.

The second parameter contains a single string \( s \) (\( 1 \leq |s| \leq 2 \cdot 10^5 \), and \( s \) consists only of lowercase English letters and "?"-s) — the original string you have.

# Output:
Output an exhaustive list of strings \( t \) such that if the function were called with any of these \( t \) values and \( s \), it would produce the result: "YES" and the \( s_{\text{with\_replaced\_marks}} \) as the satisfying string.

If no such string \( t \) exists, return an empty list.

Ensure that the solution includes every possible string \( t \) that meets the condition.
\end{minted}
\end{tcolorbox}
}

{\scriptsize
\begin{tcolorbox}[mybox,breakable,title={Composed SINV solution}]
\begin{minted}{python}
def slavics_exam_pointwise_unpacked_sinv_no(s: str) -> list[str]:
    from itertools import product
    import string
    
    def is_subsequence(small: str, large: str) -> bool:
        it = iter(large)
        return all(char in it for char in small)

    if not isinstance(s, str) or any(c not in string.ascii_lowercase + '?' for c in s):
        raise ValueError('Invalid input')
    
    potential_chars = string.ascii_lowercase
    replacements = [c if c != '?' else potential_chars for c in s]
    result_set = set()
    
    # Generate candidate strings whose lengths are in the reasonable bound
    for length in range(1, len(s) + 1):
        # Generate all possible combinations of letters with the given length
        for candidate in product(potential_chars, repeat=length):
            candidate_str = ''.join(candidate)
            can_form_subsequence = False
            # Check if the candidate is a subsequence of any possible version of `s`
            for version in product(*replacements):
                version_str = ''.join(version)
                if is_subsequence(candidate_str, version_str):
                    can_form_subsequence = True
                    break
            
            if not can_form_subsequence:
                result_set.add(candidate_str)
    
    return list(result_set)

from itertools import combinations

def slavics_exam_pointwise_unpacked_sinv_yes(s_with_replaced_marks: str, s: str) -> list[str]:
    def is_subsequence(t: str, s_rep: str) -> bool:
        it = iter(s_rep)
        return all(char in it for char in t)

    if len(s_with_replaced_marks) != len(s):
        raise ValueError('Invalid input')
    
    possible_ts = set()

    # Generate all unique possible 't' strings by considering '?' to be any character
    # that would make it a subsequence.
    for i in range(1, len(s_with_replaced_marks) + 1):
        for comb in combinations(range(len(s_with_replaced_marks)), i):
            candidate = ''.join(s_with_replaced_marks[idx] for idx in comb)
            if is_subsequence(candidate, s_with_replaced_marks):
                possible_ts.add(candidate)
    
    return sorted(possible_ts)


def slavics_exam_pointwise_unpacked_sinv(*args):
    args = list(args)
    if isinstance(args[0], str):
        return (False, slavics_exam_pointwise_unpacked_sinv_no(*args[1:]))
    else:
        yes_args = [args[0][1]] + args[1:]
        return (True, slavics_exam_pointwise_unpacked_sinv_yes(*yes_args))
\end{minted}
\end{tcolorbox}
}

\end{document}